\documentclass[hidelinks, runningheads]{llncs}
\pdfoutput=1

% Conditional for double blind
\newif\ifanonym
%\anonymtrue % comment out to show revealing information

% Conditional for compiling tikz pictures
\newif\iftikzcompile
\tikzcompiletrue % comment out to disable tikz compilation

% Conditional for submission
\newif\ifsubmit
%\submittrue
% \submittrue % comment out for submission to disable remarks

\usepackage{amsmath,amssymb,amsfonts}       % blackboard math symbols
\usepackage{mathbbol}
\usepackage{bm}
\usepackage{bbm}
\DeclareSymbolFontAlphabet{\amsmathbb}{AMSb}
\usepackage{mathtools}

\usepackage{siunitx}

\usepackage{accents}
\newcommand\munderbar[1]{%
  \underaccent{\bar}{#1}}
  
\newcommand\smallpar[1]{%
\medskip\noindent\emph{#1}}

\newcommand\smallsubsubsection[1]{%
\medskip\noindent\textbf{#1}}

% Acronyms, symbols, citations, etc.
\usepackage[acronym]{glossaries}

% Package for awasome icons in LaTeX
\usepackage{fontawesome5}

\usepackage[dvipsnames]{xcolor}
\usepackage{graphicx}
\usepackage{xspace}
\usepackage{microtype}
\usepackage{todonotes}
\usepackage{colonequals}
\usepackage[noadjust,nospace]{cite}
\usetikzlibrary{patterns,arrows,arrows.meta,calc,shapes,shadows,decorations.pathmorphing,decorations.pathreplacing,automata,shapes.multipart,positioning,shapes.geometric,fit,circuits,trees,shapes.gates.logic.US,fit, shadows.blur,shapes.symbols}

\usetikzlibrary{backgrounds,scopes}
\usepackage{marvosym}
\usepackage{multirow}
\usepackage{multicol}

\usepackage{paralist}
\usepackage{tabularx}
\usepackage{booktabs}
\usepackage{scrextend}
\usepackage{threeparttable}

\usepackage{caption}
\usepackage{subcaption}
\usepackage{url}
\usepackage{appendix}

\usepackage[pdfpagelabels=true]{hyperref}
\usepackage{cleveref}

\usepackage{amsthm}

\crefname{equation}{Eq.}{Eq.}
\crefname{pluralequation}{Eqs.}{Eqs.}

\crefname{figure}{Fig.}{Fig.}
\crefname{pluralfigure}{Figs.}{Figs.}

\crefname{section}{Sect.}{Sect.}
\crefname{pluralsection}{Sects.}{Sects.}

\crefname{appendix}{App.}{App.}
\crefname{pluralappendix}{Apps.}{Apps.}

\crefname{table}{Tab.}{Tab.}
\crefname{pluraltable}{Tabs.}{Tabs.}

\crefname{definition}{Def.}{Def.}
\crefname{pluraldefinition}{Defs.}{Defs.}

\crefname{theorem}{Theorem}{Theorems}
\crefname{pluraltheorem}{Theorems}{Theorems}

\crefname{lemma}{Lemma}{Lemma}
\crefname{plurallemma}{Lemmas}{Lemmas}

\crefname{example}{Example}{Example}
\crefname{pluralexample}{Examples}{Examples}

\crefname{assumption}{Assumption}{Assumption}
\crefname{pluralassumption}{Assumptions}{Assumptions}

\crefname{remark}{Remark}{Remark}
\crefname{pluralremark}{Remarks}{Remarks}

\usepackage{tikz} 
\usepackage{pgfplots}
\usepgfplotslibrary{external, fillbetween}
\pgfplotsset{compat=1.8}

% Color definitions
\definecolor{red}{rgb}{0.745,0.192,0.102}
\definecolor{darkgreen}{RGB}{34,161,55}
\definecolor{ruhuisstijlrood}{rgb}{0.745,0.192,0.102}
\definecolor{ruhuisstijlzwart}{rgb}{0,0,0}
\definecolor{ruhuisstijlwit}{rgb}{0.98,0.98,0.98}

\usepackage{mdframed}
\usepackage{enumitem}

\makeatletter%
\g@addto@macro\normalsize{%
  \setlength\abovedisplayskip{6.4pt}%
  \setlength\belowdisplayskip{6.3pt}%
}%
\makeatother

% \makeatletter
% \def\orcidID#1{\smash{\href{http://orcid.org/#1}{\protect\raisebox{-1.25pt}{\protect\includegraphics{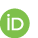}}}}}
% \makeatother

\usetikzlibrary{arrows,decorations.pathmorphing,positioning,fit,trees,shapes,shadows,automata,calc}

%% Commands requiring trailing spaces

\newcommand{\ie}{i.e.\@\xspace}

\newcommand{\cf}{cf.\@\xspace}

\newcommand{\storm}{\textrm{Storm}\xspace}

\DeclareMathOperator*{\minimize}{minimize}
\usepackage{array,booktabs}
\newcolumntype{C}{>{$\displaystyle}c<{$}}
\usepackage[normalem]{ulem}
\useunder{\uline}{\ul}{}

\setlength\marginparwidth{40pt}

\newcommand{\PP}{\amsmathbb{P}}

% \newtheorem{lem}{Lemma}

% \newtheorem{problem}{Problem}
%%new commands

\newcommand{\probdist}{\ensuremath{\amsmathbb{P}}}
\newcommand{\pr}{\ensuremath{\mathrm{Pr}}}

% \newcommand{\satprob}{F(\vmdp,\varphi)}

%\newcommand{\reachPrsT}[2]{\ensuremath{\reachPrs{#1}{#2}{T}}}

%% Comments
\newcommand{\colorpar}[3]{\colorbox{#1}{\parbox{#2}{#3}}}
\newcommand{\marginremark}[3]{\marginpar{\colorpar{#2}{3.5em}{\color{#1}#3}}}

\ifsubmit
  \newcommand{\nj}[1]{}
  \newcommand{\sj}[1]{}
  \newcommand{\ms}[1]{}
  \newcommand{\mv}[1]{}
  \newcommand{\tb}[1]{}
  \newcommand{\marielle}[1]{}
\else
  \newcommand{\nj}[1]{\marginremark{yellow}{black}{\tiny{[NJ]~#1}}}
  \newcommand{\sj}[1]{\marginremark{purple}{white}{\tiny{[SJ]~ #1}}}
  \newcommand{\ms}[1]{\marginremark{white}{blue}{\scriptsize{[MS]~ #1}}}
  \newcommand{\mv}[1]{\marginremark{white}{black}{\tiny{[MV]~ #1}}}
  \newcommand{\tb}[1]{\marginremark{black}{yellow}{\tiny{[TB]~ #1}}}
  \newcommand{\marielle}[1]{\todo[inline,caption={},color=yellow!40]{{\it Marielle:~}#1}}
\fi

% Notes
\makeatletter
\def\THICKhrulefill{\leavevmode \leaders \hrule height 5pt\hfill \kern \z@}
\def\getfirst#1#2\relax{\tctestifnum{\count@stringtoks{#1}>1}{ERROR}{#1}}
\makeatother
%\newcommand{\marginremark}[3]{%
%  \ifthenelse{\boolean{tosubmit}}{}{% else:
%	\marginnote{\colorpar{#2}{1.2\linewidth}{\raggedright\color{#1}#3}}
%}}

% Marielle
\colorlet{MS-fg}{WildStrawberry}
\colorlet{MS-bg}{Plum!6}

%% Sets
\newcommand{\R}{\amsmathbb{R}}
\newcommand{\Q}{\amsmathbb{Q}}
\newcommand{\Z}{\amsmathbb{Z}}
\newcommand{\N}{\amsmathbb{N}}

    % Real numbers
    %Rational numbers
    %Integer numbers

  % real interval [0,1]
  % rational interval [0,1]
        % Expectations
        % Expectations
              % Cpo of unbounded expectations
        % Cpo of bounded expectations

%\newcommand{\Distr}{\mathit{Distr}}
\newcommand{\distr}[1]{\ensuremath{\mathit{Dist(#1)}}}

%% Constants

%% Math operators
               % Dual 
                              % Equality used for definitions
            % Least fixed point
            % Greatest fixed point

    % Constant function

%% Linear temporal logic

%\newcommand{\Until}{\mbox{$\, {\sf U}\,$}}

       % Set of programs
        % Set of program variables 
\newcommand{\Paramvar}{\ensuremath{{V}}\xspace}        % Set of program variables 

%% Miscellania 
    % Program transformation
                          % Function space constructor

%% Operational Semantics
\newcommand{\sinit}{s_{\mathit{I}}} % initial state of DTMC/MDP
\newcommand{\rateFun}{\ensuremath{\mathbf{R}}}

\newcommand{\paramspace}{\ensuremath{\mathcal{V}_{\mathcal{M}}}}

\newcommand{\instctmc}[1]{\mathcal{M}[#1]}
\newcommand{\pctmc}{\mathcal{M}}

\newcommand{\pCTMC}{\ensuremath{(S, \sinit, \Paramvar, \rateFun})}
\newcommand{\upCTMC}{\ensuremath{(\pctmc, \probdist)}}

\renewcommand{\Pr}{\ensuremath{\textnormal{Pr}}}

\newcommand{\sol}{\ensuremath{\mathsf{sol}}}
\newcommand{\satprob}{\ensuremath{\mathsf{contain}}}

\newcommand{\probc}{\ensuremath{\mathsf{probC}}}
\newcommand{\thresh}{\ensuremath{\mathsf{num}}}

%\newcommand{\path}{\pi}

%\newcommand{\expandable}{\ensuremath{\langle \mathpzc{expandable}\rangle}\xspace}

%temporary mdp stuff

%% Math-Fonts
\DeclareMathAlphabet{\mathpzc}{OT1}{pzc}{m}{it}
\def\presuper#1#2%
  {\mathop{}%
   \mathopen{\vphantom{#2}}^{#1}%
   \kern-\scriptspace%
   #2}

% Scenario approach
% \newcommand{\confidence}{\ensuremath{\beta}}
% \newcommand{\violation}{\ensuremath{\epsilon}}

\newcommand{\sampleset}{\ensuremath{\mathcal{U}_n}}
\newcommand{\card}[1]{\ensuremath{\vert #1 \vert}}

\usepackage[firstpage]{draftwatermark} % free badge placement
\SetWatermarkAngle{0}
\SetWatermarkText{\raisebox{12.5cm}{%
  \hspace{0.1cm}%
  \href{https://doi.org/10.5281/zenodo.6523863}{\includegraphics{}}% 
  \hspace{9cm}%
  \includegraphics{}%
}}

\spnewtheorem{assumption}{Assumption}{\bfseries}{\itshape}

\newacronym[plural=DTMCs,firstplural=discrete-time Markov chains]{DTMC}{DTMC}{discrete-time Markov chain}
\newacronym[plural=MDPs,firstplural=Markov decision processes (MDPs)]{MDP}{MDP}{Markov decision process}
\newacronym[plural=iMDPs,firstplural=interval MDPs (iMDPs)]{iMDP}{iMDP}{interval MDP}
\newacronym[plural=uMDPs,firstplural=uncertain MDPs (uMDPs)]{uMDP}{uMDP}{uncertain MDP}
\newacronym[plural=CTMCs,firstplural=continuous-time Markov chains (CTMCs)]{CTMC}{CTMC}{continuous-time Markov chain}
\newacronym[plural=pCTMCs,firstplural=parametric CTMCs (pCTMCs)]{pCTMC}{pCTMC}{parametric CTMC}
\newacronym[plural=upCTMCs,firstplural=uncertain pCTMCs (upCTMCs)]{upCTMC}{upCTMC}{uncertain pCTMC}
\newacronym[plural=POMDPs,firstplural=partially observable Markov decision processes (POMDPs)]{POMDP}{POMDP}{partially observable Markov decision process}
\newacronym[plural=DFTs,firstplural=dynamic fault trees (DFTs)]{DFT}{DFT}{dynamic fault tree}

\newacronym[]{iid}{i.i.d.}{independent and identically distributed}

\newacronym[]{CSL}{CSL}{continuous stochastic logic}

\definecolor{plotblue}{rgb}{0.1,0.498039215686275,0.9549019607843137}

%------------------------------------------------------------------------------------%
%------------------------------------------------------------------------------------%
%------------------------------------------------------------------------------------%
% Switch between extended version (technical report) vs. AAAI version without appendix
\newif\ifappendix
\global\appendixtrue
% \global\appendixfalse
%------------------------------------------------------------------------------------%
%------------------------------------------------------------------------------------%
%------------------------------------------------------------------------------------%
\makeatletter
\def\orcidID#1{\smash{\href{http://orcid.org/#1}{\protect\raisebox{-1.25pt}{\protect\includegraphics{ORCID_Color.eps}}}}}
\makeatother

\begin{document}
\title{Sampling-Based Verification of CTMCs \\ with Uncertain~Rates
\ifanonym\else
% Double blind
\thanks{This work has been partially funded by NWO under the grant \href{https://primavera-project.com}{PrimaVera}, number NWA.1160.18.238, and by EU Horizon 2020 project MISSION, number 101008233.}
\fi
}
\titlerunning{Sampling-Based Verification of CTMCs with Uncertain~Rates}

% Double blind
\ifanonym
\author{Authors omitted.}
\institute{}
\authorrunning{Authors omitted.}
\else
\author{
Thom S. Badings\inst{1}\orcidID{0000-0002-5235-1967}
\and Nils Jansen\inst{1}\orcidID{0000-0003-1318-8973}
\and Sebastian Junges\inst{1}\orcidID{0000-0003-0978-8466}
\and\\ Marielle Stoelinga\inst{1,2}\orcidID{0000-0001-6793-8165}
\and Matthias Volk\inst{2}\orcidID{0000-0002-3810-4185}
}
\authorrunning{T.S. Badings, N. Jansen, S. Junges, M.I.A. Stoelinga, and M. Volk}
\institute{
Radboud University, Nijmegen, the Netherlands \\ \email{thom.badings@ru.nl} 
\and
University of Twente, Enschede, the Netherlands
}
\fi

\maketitle
\begin{abstract}
We employ uncertain parametric CTMCs with parametric transition rates and a prior on the parameter values. The prior encodes uncertainty about the actual transition rates, while the parameters allow dependencies between transition rates. Sampling the parameter values from the prior distribution then yields a standard CTMC, for which we may compute relevant reachability probabilities. We provide a principled solution, based on a technique called scenario-optimization, to the following problem: From a finite set of parameter samples and a user-specified confidence level, compute prediction regions on the reachability probabilities. The prediction regions should (with high probability) contain the reachability probabilities of a CTMC induced by any additional sample. To boost the scalability of the approach, we employ standard abstraction techniques and adapt our methodology to support approximate reachability probabilities. Experiments with various well-known benchmarks show the applicability of the approach.
\end{abstract}

\section{Introduction}
\label{sec:Introduction}

\noindent
\Glspl{CTMC} are widely used to model complex probabilistic systems in reliability engineering~\cite{DBLP:journals/csr/RuijtersS15}, network processes~\cite{DBLP:conf/srds/HaverkortHK00,hermanns1999multi}, systems biology~\cite{DBLP:journals/acta/CeskaDPKB17,DBLP:conf/tacas/BortolussiS18} and epidemic modeling~\cite{ALLEN2017128}.
A key verification task is to compute aspects of system behavior from these models, expressed as, e.g., continuous stochastic logic (CSL) formulae~\cite{DBLP:journals/tse/BaierHHK03,aziz2000model}.
Typically, we compute reachability probabilities for a set of horizons, such as: \emph{what is the probability that a target state is reached before time $t_1,\ldots,t_n$?}
Standard algorithms~\cite{DBLP:journals/tse/BaierHHK03} implemented in mature model checking tools such as \storm~\cite{StormSTTT} or Prism~\cite{DBLP:conf/cav/KwiatkowskaNP11} provide efficient means to compute these \emph{reachability probabilities}.
However, these methods typically require that transition rates and probabilities are precisely known. 
This assumption is often unrealistic~\cite{DBLP:conf/rtss/HanKM08} and led to some related work, which we discuss in \cref{sec:Related}.

\smallpar{Illustrative example.}
An epidemic can abstractly be modeled as a finite-state \gls{CTMC}, e.g., the SIR (susceptible-infected-recovered) model~\cite{andersson2012stochastic}, which is shown in \cref{fig:sirctmc} for a population of two.
Such a \gls{CTMC} assumes a \emph{fixed set of transition rates}, in this case an infection rate $\lambda_i$, and a recovery rate $\lambda_r$. 
The outcome of analyzing this \gls{CTMC} for fixed values of $\lambda_i$ and $\lambda_r$ may yield a \emph{probability curve} like in Fig.~\ref{fig:intro1a_singlecurve}\footnote{For visual clarity, we plot the reachability probability \emph{between} time 100 and $t_1,\hdots,t_n$.}, where we plot the probability (y-axis) of reaching a target state that corresponds to the epidemic becoming extinct against varying time horizons (x-axis).
In fact, the plot is obtained via a smooth interpolation of the results at finitely many horizons, cf.~\ref{fig:intro1b_curveaspoints}.
To acknowledge that $\lambda_i, \lambda_r$ are in fact unknown, we may analyze the model for different values of $\lambda_i, \lambda_r$, resulting in a set of curves as in Fig.~\ref{fig:intro1b_multiplecurves}.
These individual curves, however, provide no guarantees about the shape of the curve obtained from another infection and recovery rate. Instead, we assume a \emph{probability distribution} over the transition rates and aim to compute \emph{prediction regions} as those in shown Fig.~\ref{fig:intro1c_confidence}, in such a way that with a certain (high) probability, any rates $\lambda_i$ and $\lambda_r$ yield a curve within this region.

\smallpar{Overall goal.} 
From the illustrative example, we state the following goal. 
Each fixed set of transition rates induces a \emph{probability curve}, i.e., a mapping from horizons to the corresponding reachability probabilities.
We aim to construct \emph{prediction regions} around a set of probability curves, such that with high probability and high confidence, sampling a set of transition rates induces a probability curve within this region.
Our key contribution is an efficient \emph{probably approximately correct}, or PAC-style method that computes these prediction regions. 
The remainder of the introduction explores the technical steps toward this goal. 

\begin{figure}[t]
\begin{subfigure}[b]{0.47\linewidth}
\iftikzcompile
  \begin{tikzpicture}[node distance=0.6cm]
      \node[state, inner sep=1pt] (SR) {\footnotesize{SR}};
      \node[state,right=of SR, initial,initial where=above, initial text=] (SI) {SI};
      \node[state,right=of SI] (II) {\footnotesize{II}};
      \node[state,right=of II] (RI) {\footnotesize{RI}};
      \node[state,right=of RI] (RR) {\footnotesize{RR}};
      \draw[->] (SI) -- node[above] {$\lambda_r$} (SR);
      \draw[->] (SI) -- node[above] {$\lambda_i$} (II);
      \draw[->] (II) -- node[above] {$2\lambda_r$} (RI);
    \draw[->] (RI) -- node[above] {$\lambda_r$} (RR);
      
  \end{tikzpicture}
\fi
    \caption{pCTMC $\pctmc$ with parameters ${\lambda_i, \lambda_r}$.}
    \label{fig:sirctmc}
\end{subfigure}
\hfill 
\begin{subfigure}[b]{0.47\linewidth}
\centering
\iftikzcompile
    \def\centerx{0}
    \def\centery{0}
    \begin{tikzpicture}
    \begin{axis}[
        width=.73\linewidth,
        height=1in,
        xlabel={$\lambda_i$},
        ylabel={$\lambda_r$},
        xtick=\empty,
        ytick=\empty,
        zmin=0,
        zmax=1,
        ztick={0,1},
        x label style={yshift=0.1cm},
        y label style={yshift=0.1cm},]
    \addplot3[surf,domain=-4:4,domain y=-4:4] 
        {exp(-( (x-\centerx)^2 + (y-\centery)^2)/3 )};
    \end{axis}
    \end{tikzpicture}
\fi
    \caption{Distribution $\PP$ over  values for $({\lambda_i, \lambda_r})$.}
    \label{fig:sirdistribution}
\end{subfigure}
    \caption{An upCTMC $(\pctmc, \PP)$ for the SIR (pop=2) model.}
\end{figure}

\begin{figure}[t!]
\centering
\begin{subfigure}[b]{0.24\linewidth}
\iftikzcompile
    \scalebox{1.0}{

\begin{tikzpicture}[baseline]
\definecolor{color1}{rgb}{0.1,0.498039215686275,0.9549019607843137}
\begin{axis}[
        width=1\linewidth,
        height=1.45in,
        ymin=-0.001,
        ymax=1.001,
        xmin=100,
        xmax=200,
        xlabel={Time (weeks)},
        ylabel={Probability},
        xtick={100,150,200},
        x label style={at={(axis description cs:0.5,-0.13)}},
]

\addplot[thick, mark=none, color=plotblue] table[x=t, y=x44, col sep=comma] {Figures/Tikz/Data/epidemic_curves100.csv};

\end{axis}
\end{tikzpicture}}%
\fi
    \captionof{figure}{Curve for a single \gls{CTMC} with precise transition rates.}
    \label{fig:intro1a_singlecurve}
\end{subfigure}
\hfill
\begin{subfigure}[b]{0.22\linewidth}
\iftikzcompile
    \scalebox{1.0}{

\begin{tikzpicture}[baseline]
\definecolor{color1}{rgb}{0.1,0.498039215686275,0.9549019607843137}
\begin{axis}[
        width=1.3\linewidth,
        height=1.45in,
        ymin=-0.001,
        ymax=1.001,
        xmin=100,
        xmax=200,
        xlabel={Time (weeks)},
        ylabel=\empty,
        xtick={100,150,200},
        ytick={0,0.2,0.4,0.6,0.8,1.0},
        yticklabels=\empty,
        x label style={at={(axis description cs:0.5,-0.13)}},
]

\addplot[thick, mark=o, color=plotblue, draw=none] table[x=t, y=x44, col sep=comma] {Figures/Tikz/Data/epidemic_curves100_coarse.csv};

\end{axis}
\end{tikzpicture}}%
\fi
    \captionof{figure}{Point abstraction of a curve for a single \gls{CTMC}.}
    \label{fig:intro1b_curveaspoints}
\end{subfigure}
\hfill
\begin{subfigure}[b]{0.22\linewidth}
\iftikzcompile
    \scalebox{1.0}{

\begin{tikzpicture}[baseline]
\definecolor{color1}{rgb}{0.1,0.498039215686275,0.9549019607843137}
\begin{axis}[
        width=1.3\linewidth,
        height=1.45in,
        ymin=-0.001,
        ymax=1.001,
        xmin=100,
        xmax=200,
        xlabel={Time (weeks)},
        xtick={100, 130, 150, 170, 200},
        xticklabels={, $t_1$, , $t_2$, },
        extra x ticks={130, 170},
        extra x tick labels={},
        extra tick style={
            grid=major,
            grid style=dashed
        },
        ylabel=\empty,
        ytick={0,0.2,0.4,0.6,0.8,1.0},
        yticklabels=\empty,
        x label style={at={(axis description cs:0.5,-0.13)}},
]

\addplot[thick, mark=none, color=plotblue] table[x=t, y=x44, col sep=comma] {Figures/Tikz/Data/epidemic_curves100.csv};
\addplot[thick, mark=none, color=BurntOrange] table[x=t, y=x7, col sep=comma] {Figures/Tikz/Data/epidemic_curves100.csv};
\addplot[thick, mark=none, color=OliveGreen] table[x=t, y=x11, col sep=comma] {Figures/Tikz/Data/epidemic_curves100.csv};
\addplot[thick, mark=none, color=Mahogany] table[x=t, y=x13, col sep=comma] {Figures/Tikz/Data/epidemic_curves100.csv};
\addplot[thick, mark=none, color=Fuchsia] table[x=t, y=x14, col sep=comma] {Figures/Tikz/Data/epidemic_curves100.csv};

\end{axis}
\end{tikzpicture}}%
\fi
    \captionof{figure}{Curves for five \glspl{CTMC} with different rates.}
    \label{fig:intro1b_multiplecurves}
\end{subfigure}
\hfill
\begin{subfigure}[b]{0.24\linewidth}
\iftikzcompile
    \scalebox{1.0}{

\begin{tikzpicture}[baseline]
\definecolor{color1}{rgb}{0.1,0.498039215686275,0.9549019607843137}
\begin{axis}[
        width=1.3\linewidth,
        height=1.45in,
        ymin=0,
        ymax=1,
        xmin=100,
        xmax=200,
        grid style=dashed,
        xlabel={Time (weeks)},
        xtick={100, 130, 150, 170, 200},
        xticklabels={, $t_1$, , $t_2$, },
        extra x ticks={130, 170},
        extra x tick labels={},
        extra tick style={
            grid=major,
            grid style=dashed
        },
        ylabel=\empty,
        ytick={0,0.2,0.4,0.6,0.8,1.0},
        yticklabels=\empty,
        legend entries={High prob.,Low prob.},
        legend pos=north west,
        legend image post style={xscale=0.3},
        legend style={nodes={scale=0.75, transform shape},
                      row sep=0pt},
        x label style={at={(axis description cs:0.5,-0.13)}},
]

\addplot[name path=A, const plot, mark=none, color=red!80, forget plot] table[x=t, y=xlow_large, col sep=comma] {Figures/Tikz/Data/epidemic_confidenceRegions.csv};

\addplot[name path=B, const plot, mark=none, color=red!80, forget plot] table[x=t, y=xupp_large, col sep=comma] {Figures/Tikz/Data/epidemic_confidenceRegions.csv};

\addplot[fill=red!15,opacity=1] fill between[of=A and B,split,soft clip={domain=100:200},every segment no 0/.style={red!15},];

%%%%%%

\addplot[name path=C, const plot, mark=none, color=MidnightBlue!80, forget plot] table[x=t, y=xlow_small, col sep=comma] {Figures/Tikz/Data/epidemic_confidenceRegions.csv};

\addplot[name path=D, const plot, mark=none, color=MidnightBlue!80, forget plot] table[x=t, y=xupp_small, col sep=comma] {Figures/Tikz/Data/epidemic_confidenceRegions.csv};

\addplot[fill=MidnightBlue!30,opacity=1] fill between[of=C and D,split,soft clip={domain=100:200},every segment no 0/.style={MidnightBlue!30},];

\end{axis}

\end{tikzpicture}}%
\fi
    \captionof{figure}{Two prediction regions with different probabilities.}
    \label{fig:intro1c_confidence}
\end{subfigure}

\caption{
The probability of extinction in the SIR (140) model for horizons $[100,t]$.
}
\label{fig:intro1}

\end{figure}
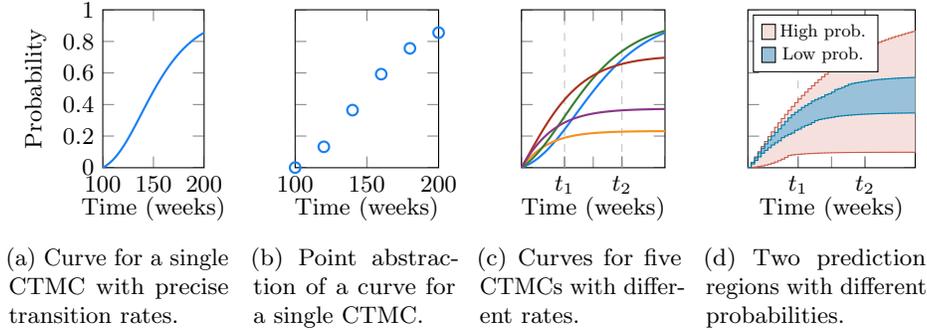

\smallpar{Uncertain CTMCs.}
The setting above is formally captured by \glspl{pCTMC}.
Transition rates of \glspl{pCTMC} are not given precisely but as (polynomials over) parameters~\cite{DBLP:conf/rtss/HanKM08, DBLP:journals/jss/CalinescuCGKP18}, such as those shown in \cref{fig:sirctmc}.
We assume a \emph{prior} on each parameter valuation, i.e., assignment of values to parameters, similar to settings in~\cite{DBLP:conf/tacas/BortolussiS18,DBLP:journals/sosym/MeedeniyaMAG14} and in contrast to, e.g., \cite{DBLP:conf/rtss/HanKM08,DBLP:journals/acta/CeskaDPKB17}.
These priors may result from asking different experts which value they would assume for, e.g., the infection rate.
The prior may also be the result of Bayesian reasoning~\cite{DBLP:conf/qest/WijesuriyaA19}.
Formally, we capture the uncertainty in the rates by an arbitrary and potentially unknown \emph{probability distribution} over the parameter space, see \cref{fig:sirdistribution}.
We call this model an \emph{\gls{upCTMC}}. 
The distribution allows drawing \gls{iid} \emph{samples} that yield (parameter-free) CTMCs. 

\smallpar{Problem statement.}
We consider prediction regions on probability curves in the form of a pair of two curves that `sandwich' the probability curves, as depicted in~\cref{fig:intro1c_confidence}. 
Intuitively, we then aim to find a prediction region $R$ that is sufficiently large, such that  sampling parameter valuations yields a probability curve in $R$ with high probability $p$.
We aim to compute a lower bound on this  \emph{containment probability} $p$. 
Naturally, we also aim to compute a meaningful, i.e. small (tight), prediction region $R$.
As such, we aim to solve the following problem:
\begin{mdframed}[backgroundcolor=gray!30, nobreak=true]
    \textbf{Problem Statement.} \, Given a upCTMC with a target state, compute
    \begin{enumerate}[noitemsep,topsep=1pt]
        \item a (tight) \emph{prediction region} $R$ on the probability curves, and 
        \item a (tight) \emph{lower bound on the containment probability} that a sampled parameter valuation induces a probability curve that will lie in $R$.
    \end{enumerate}
We solve this problem with a user-specified confidence level $\beta$.
\end{mdframed}%
\smallpar{The problem solved.}
In this paper, we present a method that samples probability curves as in \cref{fig:intro1b_multiplecurves}, but now for, say 100 curves. 
From these curves, we compute prediction regions (e.g., both tubes in \cref{fig:intro1c_confidence}) and compute a lower bound (one for both tubes) on the containment probability that the curve associated with any sampled parameter value will lie in the specific prediction region (tube). 
Specifically, for a confidence level of 99\% and considering 100 curves, we conclude that this lower bound is $79.4\%$ for the red region and $7.5\%$ for the blue region.
For a higher confidence level of $99.9\%$, the lower bounds are slightly more conservative.

\begin{figure}[t]
\begin{subfigure}[b]{0.49\linewidth}
\iftikzcompile
    \pgfdeclarelayer{bg}    % declare background layer
\pgfsetlayers{bg,main}  % set the order of the layers (main is the standard layer)

\begin{tikzpicture}
\begin{axis}[%
    width=\linewidth,
    height=3.5cm,
    ymin=0,
    ymax=1,
    xmin=0,
    xmax=1.05,
    xlabel={Probability($t_1$)},
    ylabel={Probability($t_2$)},
    xtick={0,0.2,0.4,0.6,0.8,1.0},
    legend entries={{High prob.}, {Low prob.}},
    legend pos=north east,
    legend image post style={xscale=0.55},
    legend style={nodes={scale=0.8, transform shape}},
    x label style={at={(axis description cs:0.5,-0.15)}},
]

\addlegendimage{dashed, very thick, color=red}
\addlegendimage{dotted, very thick, color=MidnightBlue}

\addplot[only marks,fill=black,mark size=0.8pt]%
table[] {
x           y
0.244603	0.303837
0.373032	0.508183
0.253979	0.317177
0.119882	0.532342
0.434642	0.688798
0.359122	0.482489
0.188847	0.227787
0.236949	0.679537
0.421995	0.621958
0.133119	0.156049
0.323366	0.737941
0.148773	0.175697
0.43105	    0.657206
0.285282	0.363072
0.309016	0.399272
0.349309	0.465565
0.302393	0.389076
0.435563	0.692422
0.379207	0.519899
0.376495	0.515416
0.285042	0.362431
0.414811	0.742751
0.426036	0.634863
0.330124	0.433433
0.257467	0.322173
0.426915	0.637363
0.406288	0.7476
0.199711	0.242217
0.195808	0.236997
0.365447	0.494573
0.249974	0.311479
0.162269	0.192977
0.381803	0.525337
0.209823	0.25584
0.427284	0.729401
0.310709	0.401329
0.14788	    0.17461
0.250462	0.311832
0.338928	0.447615
0.205548	0.648365
0.162953	0.193899
0.434881	0.680216
0.426177	0.63264
0.241808	0.683438
0.385196	0.532177
0.203913	0.247891
0.123079	0.14353
0.284565	0.360987
0.326756	0.427393
0.0883575	0.100993
0.26774	    0.336986
0.339878	0.449703
0.189459	0.228606
0.423558	0.625905
0.43398	    0.707056
0.358938	0.481876
0.281992	0.357989
0.269236	0.339438
0.265832	0.702905
0.172939	0.206931
0.137585	0.161668
0.181359	0.62189
0.432807	0.663968
0.323913	0.423006
0.26601	    0.333772
0.404842	0.752342
0.235419	0.290999
0.417143	0.606543
0.399359	0.752763
0.196678	0.237864
0.392106	0.547323
0.393136	0.549492
0.131988	0.154529
0.126649	0.147988
0.387039	0.536564
0.223862	0.275046
0.195521	0.637652
0.419881	0.614393
0.272404	0.343919
0.114967	0.133434
0.232978	0.675925
0.220603	0.270619
0.21643	    0.659733
0.211799	0.258609
0.419204	0.610541
0.356016	0.477279
0.0992507	0.114236
0.428003	0.641092
0.253683	0.316623
0.398014	0.751075
0.146558	0.172997
0.368477	0.499663
0.333983	0.439596
0.28435	    0.361434
0.363421	0.754207
0.419947	0.613342
0.385343	0.531729
0.414475	0.74453
0.392075	0.546364
0.0852553	0.0973077
};

\iffalse
\addplot[only marks,fill=plotblue,mark size=2.5pt]%
table[] {
x           y
0.241808	0.683438
};

\addplot[only marks,fill=BurntOrange,mark size=2.5pt]%
table[] {
x           y
0.188847	0.227787
};

\addplot[only marks,fill=OliveGreen,mark size=2.5pt]%
table[] {
x           y
0.323366	0.737941
};
\addplot[only marks,fill=Mahogany,mark size=2.5pt]%
table[] {
x           y
0.43105	    0.657206
};

\addplot[only marks,fill=Fuchsia,mark size=2.5pt]%
table[] {
x           y
0.285282	0.363072
};
\fi

\coordinate (low) at (axis cs:0.0852553, 0.0973077) {};
\coordinate (upp) at (axis cs:0.435563, 0.754207) {};
\coordinate (leftupp) at (axis cs:0.0852553, 0.754207) {};
\coordinate (rightlow) at (axis cs:0.435563, 0.0973077) {};

\coordinate (low2) at (axis cs:0.244603, 0.339438) {};
\coordinate (upp2) at (axis cs:0.365447, 0.547323) {};

\end{axis}

\begin{pgfonlayer}{bg}
    \filldraw[dotted, color=red!80, fill=red!10, very thick] (low) rectangle (upp);
\end{pgfonlayer}

\begin{pgfonlayer}{bg}
    \filldraw[dotted, color=MidnightBlue!80, fill=MidnightBlue!20, very thick] (low2) rectangle (upp2);
\end{pgfonlayer}

\end{tikzpicture}
\fi
    \captionof{figure}{Reachability at time points $t_1$ and $t_2$.}
    \label{fig:confidence_2D}
\end{subfigure}
\begin{subfigure}[b]{0.49\linewidth}
\iftikzcompile
    \pgfdeclarelayer{bg}    % declare background layer
\pgfsetlayers{bg,main}  % set the order of the layers (main is the standard layer)

\begin{tikzpicture}
\begin{axis}[%
    width=\linewidth,
    height=3.5cm,
    ymin=0,
    ymax=1,
    xmin=246.698,
    xmax=430,
    xlabel={Measure 1},
    ylabel={Measure 2},
    x label style={at={(axis description cs:0.5,-0.14)}},
]

\addplot[only marks,fill=black,mark size=1.25pt]%
table[] {
x       y
336.864	0.210376
384.295	0.423492
362.198	0.0274101
384.923	0.125361
386.44	0.0902057
334.481	0.798121
336.477	0.803113
359.125	0.112529
364.513	0.0978145
383.633	0.034566
371.708	0.440136
340.137	0.15877
320.855	0.0615315
354.706	0.614792
353.883	0.00632818
374.746	0.0231702
330.633	0.302067
354.863	0.576432
345.892	0.831048
284.801	0.994567
297.771	0.150905
285.789	0.927085
285.364	0.880599
345.824	0.363983
378.69	0.486016
384.132	0.374522
341.826	0.25646
330.998	0.00876365
351.221	0.81891
366.104	0.0990712
351.436	0.193063
396.855	0.37355
388.159	0.211795
387.549	0.0243667
368.452	0.201633
393.302	0.112361
366.605	0.104177
262.568	0.993929
317.901	0.0603621
399.851	0.0313703
343.807	0.411765
319.64	0.579702
309.891	0.41759
389.786	0.094479
358.893	0.427729
363.704	0.254118
315.114	0.395185
344.666	0.889212
352.24	0.106534
345.423	0.0158268
374.888	0.0553574
368.139	0.00951059
289.567	0.949666
362.013	0.0379405
353.761	0.508427
287.66	0.987665
250.533	0.99844
398.942	0.0178681
387.066	0.00683832
328.309	0.171476
299.356	0.963775
267.105	0.977688
346.299	0.0353068
407.314	0.0189748
391.244	0.450166
304.448	0.7508
329.342	0.100025
398.72	0.132396
337.461	0.355229
408.506	0.0382629
386.442	0.220922
393.049	0.128073
395.29	0.0658804
306.977	0.751666
369.333	0.354568
374.334	0.0127881
395.49	0.0734316
364.715	0.0504726
306.029	0.974674
327.119	0.298003
320.862	0.498231
266.369	0.9977
319.686	0.124312
361.384	0.00429541
395.791	0.0898354
363.023	0.371262
354.595	0.668765
284.441	0.982836
348.108	0.121495
359.951	0.590622
306.554	0.738594
307.865	0.894121
246.798	0.994035
371.221	0.517148
261.938	0.9971
349.334	0.00865675
391.276	0.396833
301.41	0.198117
335.866	0.427223
331.812	0.503027
};

\coordinate (A) at (axis cs:246.698,    0.99844) {};
\coordinate (A2) at (axis cs:246.698,    0) {};
\coordinate (B) at (axis cs:408.506,	0) {};
\coordinate (C) at (axis cs:408.506,	0.0382629) {};
\coordinate (D) at (axis cs:406.523,	0.135794) {};
\coordinate (E) at (axis cs:396.234,	0.388813) {};
\coordinate (F) at (axis cs:366.407,	0.755553) {};
\coordinate (G) at (axis cs:344.666,	0.889212) {};
\coordinate (H) at (axis cs:327.698,	0.941371) {};
\coordinate (I) at (axis cs:301.371,	0.981834) {};
\coordinate (J) at (axis cs:284.801,	0.994567) {};
\coordinate (K) at (axis cs:274.722,	0.99844) {};

\end{axis}

\begin{pgfonlayer}{bg}
    \filldraw[dotted, color=red!80, fill=red!10, very thick] (A) -- (A2) -- (B) -- (C) -- (D) -- (E) -- (F) -- (G) -- (H) -- (I) -- (J) -- (K);
\end{pgfonlayer}

\end{tikzpicture}
\fi
    \captionof{figure}{Pareto front for two measures.}
    \label{fig:confidence_pareto}
\end{subfigure}
\caption{Prediction regions on the solutions vectors for two different \glspl{upCTMC}.}
\label{fig:confidence_both}
\end{figure}
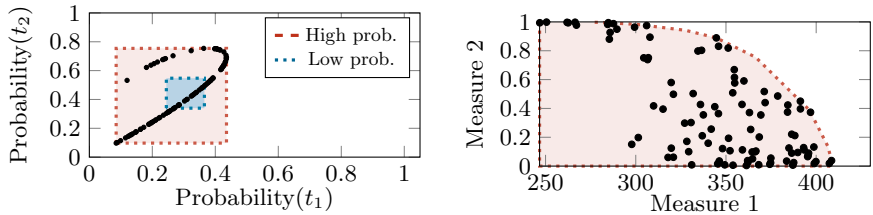

\smallpar{A change in perspective.}
Toward the algorithm, we make a change in perspective.
For two horizons $t_1$ and $t_2$, reachability probabilities for fixed CTMCs are two-dimensional points in $[0,1]^2$ that we call \emph{solution vectors}, as shown in \cref{fig:confidence_2D}.
Here, these solution vectors represent pairs of the probabilities that the disease becomes extinct before time $t_1$ and before $t_2$.
The prediction regions as in \cref{fig:intro1c_confidence} are shown as the shaded boxes in \cref{fig:confidence_2D}.

\smallpar{Solving the problem algorithmically.}
We solve the problem using a sampling-based approach. 
Starting with a set of solution vectors, we use techniques from \emph{scenario optimization}, a data-driven methodology for solving stochastic optimization problems~\cite{campi2018introduction,Campi2021scenarioMaking}. 
As such, we construct the prediction region from the solution to an optimization problem. Our method can balance the size of the prediction region with the containment probability,
as illustrated by the two boxes in \cref{fig:confidence_2D}.

\smallpar{Extensions.} 
Our approach offers more than prediction regions on probability curves from precise samples. 
The change in perspective mentioned above allows for solution vectors that represent \emph{multiple objectives}, such as the reachability with respect to different goal states, expected rewards or even the probability mass of paths satisfying more complex temporal properties. 
In our experiments, we show that this multi-objective approach ---also on probability curves--- yields much tighter bounds on the containment probability than an approach that analyzes each objective independently.
We can also produce prediction regions as other shapes than boxes, as, for example, shown in \cref{fig:confidence_pareto}.
To accelerate our approach, we significantly extend the methodology for dealing with \emph{imprecise verification results}, given as an interval on each entry of the solution vector.

\smallpar{Contributions.} 
Our key contribution is the approach that provides prediction regions and lower bounds on probability curves for upCTMCs. 
The approach requires only about 100 samples and scales to upCTMCs with tens of parameters.
Furthermore:
(1) We extend our approach such that we can also handle the case where only imprecise intervals on the verification results are available.
(2) We develop a tailored batch verification method in the model checker \storm~\cite{StormSTTT} to accelerate the required batches of verification tasks.
We accompany our contributions by a thorough empirical evaluation and remark that our batch verification method can be used beyond scenario optimization.
Our scenario optimization results are independent of the model checking and are, thus, applicable to any model where solution vectors are obtained in the same way as for \glspl{upCTMC}.

\smallpar{Data availability.}
All source code, benchmarks, and logfiles used to produce the data are archived: \url{https://doi.org/10.5281/zenodo.6523863}.
\section{Problem Statement}
\label{sec:Problem}
In this section, we introduce \glspl{pCTMC} and \glspl{upCTMC}, and we define the formal problem statement. 
We use probability distributions over finite and infinite sets; see~\cite{bertsekas2000introduction} for details.
The set of all distributions over a set $X$ is denoted by $\distr{X}$.
The set of polynomials over parameters $\Paramvar$, with rational coefficients, is denoted by $\Q[\Paramvar].$
An \emph{instantiation} $u\colon \Paramvar \rightarrow \Q$ maps parameters to concrete values.
We often fix a parameter ordering and denote instantiations as vectors, $u \in \Q^{|\Paramvar|}$.

\begin{definition}[pCTMC]
    \label{def:pCTMC}
    A \emph{\gls{pCTMC}} is a tuple $\pctmc = \pCTMC$, where $S$ is a finite set of states, $\sinit \in \distr{S}$ is the initial distribution, $\Paramvar$ are the (ordered) parameters, and $\rateFun \colon S \times S \to \Q[\Paramvar]$ is a parametric transition rate function.
    If $\rateFun(s,s) \in \Q_{\geq 0}$ for all $s, s' \in S$, then $\pctmc$ is a (parameter-free) \emph{CTMC}.
\end{definition}

\noindent
For any pair of states $s, s' \in S$ with a non-zero rate $\rateFun(s, s') > 0$, the probability of triggering a transition from $s$ to $s'$ within $t$ time units is $1-e^{-\rateFun(s, s') \cdot  t}$~\cite{DBLP:conf/lics/Katoen16}.

Applying an \emph{instantiation} $u$ to a \gls{pCTMC} $\pctmc$ yields an \emph{instantiated} CTMC $\instctmc{u} = (S, \sinit, \Paramvar, \rateFun[u])$ where $\rateFun[u](s, s') = \rateFun(s, s')[u]$ for all $s, s' \in S$. 
In the remainder, we only consider instantiations $u$ for a \gls{pCTMC} $\pctmc$ which are \emph{well-defined}.
The set of such instantiations is the parameter space $\paramspace$.

A central \emph{measure} on CTMCs is the \emph{(time-bounded) reachability} $\pr(\lozenge^{\leq \tau} E)$, which describes the probability that one of the error states $E$\footnote{Formally, states are labeled and $E$ describes the label, see \cite{BK08}.} is reached within the horizon $\tau \in \Q$.
Other measures include the expected time to reach a particular state, or the average time spent in particular states.
We refer to~\cite{DBLP:conf/lics/Katoen16} for details.

Given a concrete (instantiated) \gls{CTMC} $\instctmc{u}$, the \emph{solution} for measure $\varphi$ is denoted by $\sol_{\instctmc{u}}^\varphi \in \R$; the \emph{solution vector} $\sol_{\instctmc{u}}^\Phi \in \R^m$ generalizes this concept to an (ordered) set of $m$ measures $\Phi = \varphi_1, \dots, \varphi_m$.
We abuse notation and introduce the \emph{solution function} to express solution vectors on a \gls{pCTMC}:
\begin{definition}[Solution function]
    \label{def:solutionFunction}
    A \emph{solution function} $\sol_{\pctmc}^\Phi \colon \paramspace \to \R^{|\Phi|}$ is a mapping from a parameter instantiation $u \in \paramspace$ to the solution vector $\sol_{\instctmc{u}}^\Phi$.
\end{definition}
\noindent
We often omit the scripts in $\sol_{\pctmc}^\Phi(u)$ and write $\sol(u)$ instead.
We also refer to $\sol(u)$ as the solution vector of $u$.
For $n$ parameter samples $\sampleset = \{ u_1, \ldots, u_n \}$ with $u_i \in \paramspace$, we denote the solution vectors by $\sol(\sampleset) \in \R^{m \times n}$.

Using solution vectors, we can define the probability curves shown in \cref{fig:intro1b_multiplecurves}.
\begin{definition}[Probability curve]
    \label{def:probabilityCurve}
    The \emph{probability curve} for reachability probability $\phi_\tau=\pr(\lozenge^{\leq \tau} E)$ and \gls{CTMC} $\instctmc{u}$ is given by $\probc: \tau \mapsto \sol_{\instctmc{u}}^{\varphi_\tau}$.
\end{definition}
\noindent 
We can approximate the function $\probc$ for a concrete \gls{CTMC} by computing $\probc(t_1), \dots, \probc(t_m)$ for a finite set of time horizons.
As such, we compute the solution vector w.r.t. $m$ different reachability measures $\Phi = \{\varphi_{t_1}, \ldots, \varphi_{t_m}\}$.
By exploiting the monotonicity\footnote{In \cref{def:probabilityCurve}, only the upper limit on the timebound is varied, so measures are monotonic.} of the reachability over time, we obtain an upper and lower bound on $\probc(\tau)$ as two step functions, see \cref{fig:intro1c_confidence}.
We can smoothen the approximation, by taking an upper and lower bound on these step functions.

We study \glspl{pCTMC} where the parameters follow a probability distribution. 
This probability distribution can be highly complex or even unknown; we merely assume that we can sample from this distribution.
\begin{definition}[upCTMC]
    \label{def:upCTMC}
    A \emph{\gls{upCTMC}} is a tuple $\upCTMC$ with $\pctmc$ a \gls{pCTMC} and $\probdist$ a probability distribution over the parameter space $\paramspace$ of $\pctmc$.
\end{definition}
\noindent
A \gls{upCTMC} defines a probability space $(\paramspace, \PP)$ over the parameter values, whose domain is defined by the parameter space $\paramspace$.
In the remainder, we denote a \emph{sample} from $\paramspace$ drawn according to $\PP$ by $u \in \paramspace$.

To quantify the performance of a \gls{upCTMC}, we may construct a \emph{prediction region} on the solution vector space, such as those shown in \cref{fig:confidence_2D}.
In this paper, we consider only prediction regions which are compact subsets $R \subseteq \R^{|\Phi|}$.
We define the so-called \emph{containment probability} of a prediction region, which is the probability that the solution vector $\sol(u)$ for a randomly sampled parameter $u \in \paramspace$ is contained in $R$, as follows:
\begin{definition}[Containment probability]
    \label{def:ContainmentProbability}
    For a prediction region $R$, the \emph{containment probability} $\satprob_\mathcal{V}(R)$ is the probability that the solution vector $\sol(u)$ for any parameter sample $u \in \paramspace$ is contained in $R$:
    \begin{equation}
        \label{eq:ContainmentProbability}
        \satprob_\mathcal{V}(R) = \Pr \{ 
            u \in \paramspace \,\colon\, 
            \sol(u) \in R
        \}.
    \end{equation}
\end{definition}
\noindent
Recall that we solve the problem in \cref{sec:Introduction} with a user-specified confidence level, denoted by $\beta \in (0,1)$.
Formally, we solve the following problem:
\begin{mdframed}[backgroundcolor=gray!30]
\textbf{Formal Problem.} \, Given a \gls{upCTMC} $\upCTMC$, a set $\Phi$ of measures, and a confidence level $\beta \in (0,1)$, compute a (tight) prediction region $R$ and a (tight) lower bound $\mu \in (0,1)$ on the containment probability, such that $\satprob(R) \geq \mu$ holds with a confidence level of at least $\beta$.
\end{mdframed}

\noindent
The problem in \cref{sec:Introduction} is a special case of the formal problem, with $\Phi$ the reachability probability over a set of horizons.
In that case, we can overapproximate a prediction region as a rectangle, yielding an interval $[\munderbar{c}, \bar{c}]$ for every horizon $t$ that defines where the two step functions (see below \cref{def:probabilityCurve}) change. 
We smoothen these step functions (similar to probability curves) to obtain the following definition:

\begin{definition}[Prediction region for a probability curve]
    \label{def:ConfidenceProbCurve}
    A \emph{prediction region} $R$ over a probability curve $\probc$ is given by two curves $\munderbar{c}, \bar{c}: \Q_{\geq0} \to \R$ as the area in-between: $R = \{(t, y) \in \Q \times \R \mid \munderbar{c}(t) \leq y \leq \bar{c}(t)\}$.
\end{definition}

\noindent
We solve the problem by sampling a finite set $\sampleset$ of parameter values of the \gls{upCTMC} and computing the corresponding solution vectors $\sol(\sampleset)$.
In \cref{sec:Scenario_precise}, we solve the problem assuming that we can compute solution vectors exactly.
In \cref{sec:Scenario_imprecise}, we consider a less restricted setting in which every solution is imprecise, \ie only known to lie in a certain interval.
\section{Precise Sampling-Based Prediction Regions}
\label{sec:Scenario_precise}

In this section, we use scenario optimization~\cite{DBLP:journals/siamjo/CampiG08, campi2018introduction} to compute a high-confidence lower bound on the containment probability.
First, in \cref{subsec:Scenario_confRegions}, we describe how to compute a prediction region using the solution vectors $\sol(\sampleset)$ for the parameter samples $\sampleset$.
In \cref{subsec:Scenario_bounds}, we clarify how to compute a lower bound on the containment probability with respect to this prediction region.
In \cref{subsec:Scenario_algorithm}, we construct an algorithm based on those results that solves the formal problem.

\subsection{Constructing prediction regions}
\label{subsec:Scenario_confRegions}
We assume that we are given a set of solution vectors $\sol(\sampleset)$ obtained from $n$ parameter samples.
We construct a prediction region $R$ based on these vectors such that we can annotate these regions with a lower bound on the containment probability, as in the problem statement. 
For conciseness, we restrict ourselves to the setting where $R$ is a hyperrectangle in $\R^m$, with $m = |\Phi|$ the number of measures, cf.~\cref{remark:ScenarioProblem_generality} below.
In the following, we represent $R$ using two vectors (points) $\munderbar{x}, \bar{x} \in \R^m$ such that, using pointwise inequalities, $R=\{ x \mid \munderbar{x} \leq x \leq \bar{x} \}$. 
For an example of such a rectangular prediction region, see \cref{fig:confidence_2D}.

As also shown in \cref{fig:confidence_2D}, we do \emph{not} require $R$ to contain all solutions in $\sol(\sampleset)$.
Instead, we have two orthogonal goals: we aim to minimize the size of $R$, while also minimizing the (Manhattan) distance of samples to $R$, measured in their 1-norm. 
Solutions contained in $R$ are assumed to have a distance of zero, while solutions not contained in $R$ are called \emph{relaxed}.
These goals define a \emph{multi-objective problem}, which we solve by weighting the two objectives using a fixed parameter $\rho > 0$, called the \emph{cost of relaxation}, that is used to scale the distance to $R$. 
Then, $\rho \to \infty$ enforces $\sol(\sampleset) \subseteq R$, as in the outer box in \cref{fig:confidence_2D}, while for $\rho \to 0$, $R$ is reduced to a point.
Thus, the cost of relaxation $\rho$ is a tuning parameter that determines the size of the prediction region $R$ and hence the fraction of the solution vectors that is contained in $R$
(see \cite{DBLP:journals/mp/CampiG18,Campi2021scenarioMaking} for details).

We capture the problem described above in the following convex optimization problem $\mathfrak{L}^\rho_\mathcal{U}$.
We define the decision variables $\munderbar{x}, \bar{x} \in \R^m$ to represent the prediction region.
In addition, we define a decision variable $\xi_i \in \R^m_{\geq 0}$ for every sample $i = 1,\ldots,n$ that acts as a slack variable representing the distance to $R$.
\begin{subequations}
\begin{align}
    \label{eq:ScenarioProblem_objective}
    \mathfrak{L}_\mathcal{U}^\rho \, \colon \, &\minimize\quad
    \, \| \bar{x} - \munderbar{x} \|_1 + \rho \sum_{i=1}^{n} \| \xi_i \|_1 
    \\
    \label{eq:ScenarioProblem_constraint}
    &
    \text{subject to}\quad 
        \munderbar{x} - \xi_i   \leq \sol(u_i) \leq \bar{x} + \xi_i \quad
    \forall i = 1,\ldots,n.
\end{align}
\label{eq:ScenarioProblem}
\end{subequations}
The objective function in \cref{eq:ScenarioProblem_objective} minimizes the size of $R$ ---by minimizing the sum of the width of the prediction region in all dimensions--- plus $\rho$ times the distances of the samples to $R$.
We denote the optimal solution to problem $\mathfrak{L}^\rho_\mathcal{U}$ for a given $\rho$ by $R^*_\rho, \xi^*_\rho$, where $R^*_\rho = [\munderbar{x}^*_\rho, \bar{x}^*_\rho]$ for the rectangular case.

\begin{assumption}
    \label{def:Uniqueness}
    The optimal solution $R^*_\rho, \xi^*_\rho$ to $\mathfrak{L}^\rho_\mathcal{U}$ exists and is unique.
\end{assumption}

\noindent
Note that \cref{def:solutionFunction} ensures finite-valued solution vectors, thus guaranteeing the existence of a solution to \cref{eq:ScenarioProblem}.
If the solution is not unique, we apply a suitable tie-break rule that selects one solution of the optimal set (e.g., the solution with a minimum Euclidean norm, see~\cite{DBLP:journals/siamjo/CampiG08}).
The following example shows that values of $\rho$ exist for which such a tie-break rule is necessary to obtain a unique solution. 

\begin{figure}[t!]
\centering
\begin{minipage}[b]{.48\textwidth}
  \centering
\iftikzcompile
      \resizebox{.88\linewidth}{!}{%

\begin{tikzpicture}[xscale=0.9, yscale=0.9]

\node [above, rotate=90] at (-0.4cm,2.1cm) {Measure};

\def\hA{0.3cm}
\def\hB{1.0cm}
\def\hC{1.5cm}
\def\hD{2.1cm}
\def\hE{2.65cm}
\def\hF{3.2cm}

\draw[latex-latex] (0.5cm,0.35cm) -- (0.5cm,0.95cm) node[right, midway] {$\delta_1$};
\draw[latex-latex] (0.5cm,1.05cm) -- (0.5cm,1.45cm) node[right, midway] {$\delta_2$};
\draw[latex-latex] (0.5cm,1.55cm) -- (0.5cm,2.05cm) node[right, midway] {$\delta_3$};
\draw[latex-latex] (0.5cm,2.15cm) -- (0.5cm,2.60cm) node[right, midway] {$\delta_4$};
\draw[latex-latex] (0.5cm,2.70cm) -- (0.5cm,3.15cm) node[right, midway] {$\delta_5$};

\fill[red!20] (1.4cm,\hA) rectangle (1.7cm,\hF);
\fill[MidnightBlue!20] (2.8cm,\hB) rectangle (3.1cm,\hE);
\fill[OliveGreen!20] (4.2cm,\hC) rectangle (4.5cm,\hD);

\draw[dashed,color=red!80] (0cm,\hA) -- (1.7cm,\hA);
\draw[dashed,color=red!80] (0cm,\hF) -- (1.7cm,\hF);

\draw[dashed,color=MidnightBlue!80] (0cm,\hB) -- (3.1cm,\hB);
\draw[dashed,color=MidnightBlue!80] (0cm,\hE) -- (3.1cm,\hE);

\draw[dashed,color=OliveGreen!80] (0cm,\hC) -- (4.5cm,\hC);
\draw[dashed,color=OliveGreen!80] (0cm,\hD) -- (4.5cm,\hD);

\draw (1.55cm, \hF) node[above, align=center] {[$\munderbar{x}^*_\rho, \bar{x}^*_\rho]$ \\ $\rho > 1$};
\draw (2.95cm, \hE) node[above, align=center] {[$\munderbar{x}^*_\rho, \bar{x}^*_\rho]$ \\ $\frac{1}{2} < \rho < 1$};
\draw (4.35cm, \hD) node[above, align=center] {[$\munderbar{x}^*_\rho, \bar{x}^*_\rho]$ \\ $\frac{1}{4} < \rho < \frac{1}{2}$};

\draw[-latex] (0cm,0cm) -- (0cm,4cm) ;
\draw[color=black] (-3pt,0pt) -- (3pt,0pt);
\draw[draw=none] (0pt,0pt) -- (-3pt,0pt) node[left] {0};

\draw (0,\hA)  node[circle, fill=black, inner sep=1.5pt] {} node[left] {\scriptsize A};
\draw (0,\hB)  node[circle, fill=black, inner sep=1.5pt] {} node[left] {\scriptsize B};
\draw (0,\hC)  node[circle, fill=black, inner sep=1.5pt] {} node[left] {\scriptsize C};
\draw (0,\hD)  node[circle, fill=black, inner sep=1.5pt] {} node[left] {\scriptsize D};
\draw (0,\hE)  node[circle, fill=black, inner sep=1.5pt] {} node[left] {\scriptsize E};
\draw (0,\hF)  node[circle, fill=black, inner sep=1.5pt] {} node[left] {\scriptsize F};

\end{tikzpicture}

}
\fi
  \captionof{figure}{The prediction region changes with the cost of relaxation $\rho$.}
  \label{fig:1D_example_rho}
\end{minipage}\hfill
\begin{minipage}[b]{.48\textwidth} 
  \centering
\iftikzcompile
      \pgfdeclarelayer{bg}    % declare background layer
\pgfsetlayers{bg,main}  % set the order of the layers (main is the standard layer)

\begin{tikzpicture}
\begin{axis}[%
    width=\linewidth,
    height=4.20cm,
    ymin=1.13,
    ymax=1.46,
    xmin=1.745,
    xmax=1.83,
    xlabel={Expected \#tokens cell 1},
    ylabel={Exp. \#tokens cell 2},
    legend entries={$\rho=2.00$,$\rho=0.40$,$\rho=0.15$},
    legend pos=north west,
    legend image post style={xscale=0.88},
    legend style={nodes={scale=0.8, transform shape},
                  /tikz/every even column/.append style={column sep=0.1cm}},
    legend columns=3, 
    x label style={at={(axis description cs:0.5,-0.13)}},
    y label style={yshift=-0.2cm, xshift=-.2cm}
]

\addlegendimage{only marks, mark=square*, color=red!25}
\addlegendimage{only marks, mark=square*, color=MidnightBlue!45}
\addlegendimage{only marks, mark=square*, color=OliveGreen!65}

\addplot[only marks,fill=black,mark size=1.2pt]%
table[] {
x           y
1.791002317	1.287047405
1.760996668	1.178218254
1.801798098	1.373303387
1.793570725	1.229113056
1.771414456	1.256181502
1.756854193	1.318985678
1.764584045	1.352788882
1.813511796	1.348903555
1.77445442	1.33744589
1.776479958	1.258781112
1.801963122	1.337530703
1.787720666	1.341255286
1.763369664	1.285770994
1.797380135	1.34333188
1.822825891	1.293260215
1.79461937	1.283035544
1.817178499	1.285999928
1.798652917	1.315193172
1.776090128	1.200220888
1.791046483	1.333403096
1.802752741	1.252911935
1.759647063	1.247647172
1.806711333	1.316012061
1.77599166	1.303812584
1.785557608	1.282996083
};

% Rho 2
\coordinate (low1) at (axis cs:1.756854193, 1.178218254) {};
\coordinate (upp1) at (axis cs:1.822825891, 1.373303387) {};

% Rho 0.4
\coordinate (low2) at (axis cs:1.760996668, 1.229113056) {};
\coordinate (upp2) at (axis cs:1.813511796, 1.348903555) {};

% Rho 0.15
\coordinate (low3) at (axis cs:1.77445442, 1.258781112) {};
\coordinate (upp3) at (axis cs:1.801798098, 1.33744589) {};

\end{axis}

\begin{pgfonlayer}{bg}
    \filldraw[dotted, color=red, fill=red!25] (low1) rectangle (upp1);
    \filldraw[dotted, color=MidnightBlue, fill=MidnightBlue!45] (low2) rectangle (upp2);
    \filldraw[dotted, color=OliveGreen, fill=OliveGreen!65] (low3) rectangle (upp3);
\end{pgfonlayer}

\end{tikzpicture}
\fi
  \captionof{figure}{Prediction regions as boxes, for different costs of relaxations $\rho$.}
  \label{fig:confidence_2D_relaxed}
\end{minipage}
\end{figure}

\begin{example}
    \label{example:IntuitionRho}
    \cref{fig:1D_example_rho} shows a set of solution vectors in one dimension, labeled $A$--$F$.
    Consider prediction region $R_1 = [A,F]$.
    The corresponding objective value \cref{eq:ScenarioProblem_objective} is $\| \bar{x} - \munderbar{x} \| + \rho \cdot \sum \xi_i = \| \bar{x} - \munderbar{x} \| =  \delta_1 + \cdots + \delta_5$,  as all $\xi_i = 0$.
    For prediction region $R_2 = [B, E]$, the objective value is $\delta_2 + \delta_3 + \delta_4 + \rho \cdot \delta_1 + \rho \cdot \delta_5$.
    Thus, for $\rho > 1$, solving $\mathfrak{L}^\rho_\mathcal{U}$ yields $R_1$ whereas for $\rho < 1$, relaxing solutions A and F is cheaper than not doing so, so $R_2$ is optimal.
    When $\rho = 1$, however, relaxing solutions A and F yields the same cost as not relaxing these samples, so a tie-break rule is needed (see above).
    For $\rho < \frac{1}{2}$, relaxing samples A, B, E, and F is cost-optimal, resulting in the prediction region containing exactly $\{C, D\}$. \qed
\end{example}

\noindent
Similarly, we can consider cases with more samples and multiple measures,  
\ifappendix
    as shown in \cref{fig:confidence_2D_relaxed}, which we discuss in more detail in \cref{app:Examples}.
\else
    as shown in \cref{fig:confidence_2D_relaxed} (see \cite[App.~A]{Badings2022extended} for more details).
\fi
The three prediction regions in \cref{fig:confidence_2D_relaxed} are obtained for different costs of relaxation $\rho$.
For $\rho=2$, the region contains all vectors, while for a lower $\rho$, more vectors are left outside.

\begin{remark}
    \label{remark:ScenarioProblem_generality}
    While problem $\mathfrak{L}^\rho_\mathcal{U}$ in \cref{eq:ScenarioProblem} yields a rectangular prediction region, we can also produce other shapes. 
    We may, e.g., construct a Pareto front as in \cref{fig:confidence_pareto}, by adding additional affine constraints~\cite{boyd_convex_optimization}.
    In fact, our only requirement is that the objective function is convex, and the constraints are convex in the decision variables (the dependence of the constraints on $u$ may be arbitrary)~\cite{Campi2021scenarioMaking}. \qed
\end{remark}

\subsection{Bounding the containment probability}
\label{subsec:Scenario_bounds}
The previous section shows how we compute a prediction region based on convex optimization.
We now characterize a valid high-confidence lower bound on the containment probability w.r.t.\ the prediction region given by the optimal solution to this optimization problem.
Toward that result, we introduce the so-called \emph{complexity} of a solution to problem $\mathfrak{L}^\rho_\mathcal{U}$ in \cref{eq:ScenarioProblem}, a concept used in~\cite{Campi2021scenarioMaking} that is related to the compressibility of the solution vectors $\sol(\sampleset)$:

\begin{definition}[Complexity]
    \label{def:Complexity}
    For $\mathfrak{L}^\rho_\mathcal{U}$ with optimal solution $R^*_\rho, \xi^*_\rho$, consider a set $\mathcal{W} \subseteq \sampleset$ and the associated problem $\mathfrak{L}^\rho_\mathcal{W}$ with optimal solution $\tilde{R}_\rho, \tilde{\xi}_\rho$.
    The set $\mathcal{W}$ is \emph{critical}, if
    \[ 
    \tilde{R}_\rho = R^*_\rho
    \quad\text{ and }\quad \{  u_i \mid \xi^*_{\rho,i} > 0 \} \subseteq \mathcal{W}.
    \]
    The \emph{complexity} $c^*_\rho$ of $R^*_\rho, \xi^*_\rho$ is the cardinality of the smallest critical set. 
    We also call  $c^*_\rho$ the complexity of $\mathfrak{L}^\rho_\mathcal{U}$.
\end{definition}

\noindent
If a sample $u_i$ has a value $\xi^*_{\rho,i} > 0$, its solution vector has a positive distance to the prediction region, $R^*_\rho$. (i.e., $[\munderbar{x}^*_\rho, \bar{x}^*_\rho]$ for the rectangular case).
Thus, the complexity $c^*_\rho$ is the number of samples for which $\sol(u_i) \notin
R^*_\rho$,
plus the minimum number of samples needed on the boundary of the region to keep the solution unchanged.
We describe in \cref{subsec:Scenario_algorithm} how we algorithmically determine the complexity.

\begin{example}
    \label{Ex:Complexity}
    In \cref{fig:confidence_2D_relaxed}, the prediction region for $\rho = 2$ contains all solution vectors, so $\xi^*_{2,i} = 0 \,\forall i$.
    Moreover, if we remove \emph{all but four} solutions (the ones on the boundary of the region), the optimal solution to problem $\mathfrak{L}_\mathcal{U}^\rho$ remains unchanged, so the complexity is $c^*_{1.12} = 0 + 4$.
    Similarly, the complexity for $\rho=0.4$ is $c^*_{0.4} = 8+2 = 10$ (8 solutions outside the region, and 2 on the boundary).
    \qed
\end{example}
\noindent Recall that \cref{def:ContainmentProbability} defines the containment probability of a generic prediction region $R$, so $\satprob(R^*_\rho)$ is the containment probability w.r.t. the optimal solution to $\mathfrak{L}_\mathcal{U}^\rho$.
We adapt the following theorem from \cite{Campi2021scenarioMaking}, which gives a lower bound on the containment probability $\satprob(R^*_\rho)$ of an optimal solution to $\mathfrak{L}^\rho_\mathcal{U}$ for a predefined value of $\rho$.
This lower bound is correct with a user-defined confidence level of $\beta \in (0,1)$, which we typically choose close to one (e.g., $\beta = 0.99$).

\begin{theorem}
    \label{thm:RiskAndComplexity}
    Let $\sampleset$ be a set of $n$ samples, and let $c^*$ be the complexity of problem $\mathfrak{L}^\rho_\mathcal{U}$. 
    For any confidence level $\beta \in (0,1)$ and any upper bound $d^* \geq c^*$, it holds~that
    \begin{equation}
        \label{eq:RiskAndComplexity_bound}
        \amsmathbb{P}^n \Big\{ 
            \satprob\big(R^*_\rho\big) \geq \eta(d^*)
        \Big\}
        \geq \beta,
    \end{equation}
    where $R^*_\rho$ is the prediction region for $\mathfrak{L}^\rho_\mathcal{U}$. 
    Moreover, $\eta$ is a function defined as $\eta(n) = 0$, and otherwise, $\eta(c)$ is the smallest positive real-valued solution to the following polynomial equality in the $t$ variable for a complexity of $c$:
    \begin{equation}
        \label{eq:RiskAndComplexity_polynomial}
        \binom nc t^{n-c} - \frac{1-\beta}{2n}\sum_{i=c}^{n-1} \binom ic t^{i-c} - \frac{1-\beta}{6n} \sum_{i=n+1}^{4n} \binom ic t^{i-c} = 0.
    \end{equation}
\end{theorem}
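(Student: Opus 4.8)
The plan is to cast the optimization problem $\mathfrak{L}^\rho_\mathcal{U}$ of \cref{eq:ScenarioProblem} as an instance of \emph{scenario optimization with relaxation} in the sense of~\cite{Campi2021scenarioMaking}, to invoke their risk--complexity bound for the exact complexity $c^*$, and then to lift it to an arbitrary upper bound $d^* \geq c^*$ by a monotonicity argument on the function $\eta$. For the first part I would verify the structural hypotheses of that framework: the objective \cref{eq:ScenarioProblem_objective} is convex (a nonnegative combination of $1$-norms) in the decision variables $(\munderbar{x}, \bar{x}, \xi_1, \dots, \xi_n)$; each sample $u_i$ contributes exactly one vector constraint \cref{eq:ScenarioProblem_constraint} that is affine in $(\munderbar{x}, \bar{x}, \xi_i)$, with $u_i$ entering only through the constant $\sol(u_i)$; and the private slack $\xi_i$ renders every such constraint individually satisfiable, which is precisely the relaxation structure. \cref{def:Uniqueness} supplies the required existence-and-uniqueness (with tie-break). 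I would then check that \cref{def:Complexity} is literally the notion of complexity used in~\cite{Campi2021scenarioMaking}, namely the smallest cardinality of a subset $\mathcal{W} \subseteq \sampleset$ whose reduced program $\mathfrak{L}^\rho_\mathcal{W}$ reproduces the optimum $R^*_\rho$ and includes all relaxed samples $\{ u_i \mid \xi^*_{\rho,i} > 0 \}$.

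The bridge to \cref{def:ContainmentProbability} is to argue that, up to a $\PP$-null subset of the boundary of $R^*_\rho$, a fresh sample $u$ ``changes the solution / needs relaxation'' exactly when $\sol(u) \notin R^*_\rho$; hence the scenario-theoretic violation probability of the returned region equals $1 - \satprob(R^*_\rho)$. With this dictionary in place, the main theorem of~\cite{Campi2021scenarioMaking} yields, with $\eta$ the function defined through \cref{eq:RiskAndComplexity_polynomial} (and $\eta(n) = 0$), that the random complexity $c^*$ of $\mathfrak{L}^\rho_\mathcal{U}$ satisfies $\amsmathbb{P}^n\{ \satprob(R^*_\rho) \geq \eta(c^*) \} \geq \beta$. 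En route one inherits from the cited work that the smallest positive root in \cref{eq:RiskAndComplexity_polynomial} exists and lies in $(0,1)$, so that $\eta$ is well defined; I would cite this rather than reprove it.

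It then remains to replace $c^*$ by any upper bound $d^* \geq c^*$. The key fact is that $c \mapsto \eta(c)$ is non-increasing on $\{0, 1, \dots, n\}$ --- a larger complexity, i.e.\ more relaxed or more support samples, can only weaken the guarantee. I would establish this either by comparing, for consecutive values of $c$, the polynomials defining $\eta(c)$ and $\eta(c+1)$ on $(0,1)$ and tracking the sign of their difference, or by inheriting the property directly from the scenario literature (if the cited statement is already phrased for upper bounds, this step becomes immediate). Given monotonicity, $d^* \geq c^*$ implies $\eta(d^*) \leq \eta(c^*)$, so for every sample realization the inclusion $\{ \satprob(R^*_\rho) \geq \eta(c^*) \} \subseteq \{ \satprob(R^*_\rho) \geq \eta(d^*) \}$ holds; taking $\amsmathbb{P}^n$-measure then gives $\amsmathbb{P}^n\{ \satprob(R^*_\rho) \geq \eta(d^*) \} \geq \amsmathbb{P}^n\{ \satprob(R^*_\rho) \geq \eta(c^*) \} \geq \beta$, which is \cref{eq:RiskAndComplexity_bound}.

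The routine ingredients are the convexity/structure checks and the closing measure-monotonicity step. I expect the main obstacle to be making rigorous the identification of the abstract violation event of~\cite{Campi2021scenarioMaking} with $\{ \sol(u) \notin R^*_\rho \}$ up to a genuinely $\PP$-null set, and confirming that \cref{def:Uniqueness} is the only non-degeneracy condition the cited theorem requires in our setting. The secondary obstacle --- unless it is simply cited --- is the algebraic proof that $\eta$ is non-increasing, which is the one genuinely computational step in the argument.
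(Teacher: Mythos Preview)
Your proposal is correct and follows essentially the same route as the paper: invoke the risk--complexity bound of~\cite{Campi2021scenarioMaking} to obtain $\amsmathbb{P}^n\{\satprob(R^*_\rho) \geq \eta(c^*)\} \geq \beta$ (the paper phrases this via the violation probability $V(R^*_\rho) = 1 - \satprob(R^*_\rho)$), and then use monotonicity of $\eta$ to pass to any $d^* \geq c^*$. The paper is considerably more terse than your outline---it simply cites~\cite{Campi2021scenarioMaking} for the bound and~\cite{garatti2021risk} for the monotonicity of $\eta$, without spelling out the structural verification or the null-set identification you flag as the main obstacle---so your version is, if anything, more careful about the hypotheses than the original.
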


\begin{figure}[t!]
\centering
\begin{subfigure}[b]{0.47\linewidth}
\iftikzcompile
    \resizebox{.95\linewidth}{!}{%

\begin{tikzpicture}
  \begin{axis}[
      width=\linewidth,
      height=3.4cm,
      ymajorgrids,
      grid style={dashed,gray!30},
      xlabel=Complexity ($c$),
      ylabel=Lower bound ($\eta$),
      x label style={at={(axis description cs:0.4,-0.2)}},
      y label style={at={(axis description cs:-0.15,0.39)}},
      x tick label style={rotate=90,anchor=east},
      xmin=0,
      xmax=25,
      ymin=0,
      ymax=1,
      xtick={0,5,...,25},
      every axis plot/.append style={line width=1pt},
      legend cell align={left},
      legend columns=1,
      legend pos=north east,
      legend image post style={xscale=0.4},
      legend style={nodes={scale=0.8, transform shape}}
    ]
    
    \addplot[mark=none, color=MidnightBlue] table[x=k, y=b0.9, col sep=semicolon] {Figures/Tikz/Data/satprob-bounds_N=25.csv};
    
    \addplot[mark=none, color=BurntOrange] table[x=k, y=b0.99, col sep=semicolon] {Figures/Tikz/Data/satprob-bounds_N=25.csv};
    
    \addplot[mark=none, color=OliveGreen] table[x=k, y=b0.999, col sep=semicolon] {Figures/Tikz/Data/satprob-bounds_N=25.csv};
    
    \legend{{$\beta=0.9$},{$\beta=0.99$},{$\beta=0.999$}}
  \end{axis}
\end{tikzpicture}

}
\fi
    \captionof{figure}{Number of samples $n=25$.}
    \label{fig:exampleBounds25}
\end{subfigure}
\hfill
\begin{subfigure}[b]{0.47\linewidth}
\iftikzcompile
    \resizebox{.95\linewidth}{!}{%

\begin{tikzpicture}
  \begin{axis}[
      width=\linewidth,
      height=3.4cm,
      ymajorgrids,
      grid style={dashed,gray!30},
      xlabel=Complexity ($c$),
      ylabel=Lower bound ($\eta$),
      x label style={at={(axis description cs:0.4,-0.2)}},
      y label style={at={(axis description cs:-0.15,0.39)}},
      x tick label style={rotate=90,anchor=east},
      xmin=0,
      xmax=100,
      ymin=0,
      ymax=1,
      every axis plot/.append style={line width=1pt},
      legend cell align={left},
      legend columns=1,
      legend pos=north east,
      legend image post style={xscale=0.40},
      legend style={nodes={scale=0.8, transform shape}}
    ]
    
    \addplot[mark=none, color=MidnightBlue, mark size=1.3pt] table[x=k, y=b0.9, col sep=semicolon] {Figures/Tikz/Data/satprob-bounds_N=100.csv};
    
    \addplot[mark=none, color=BurntOrange, mark size=1.3pt] table[x=k, y=b0.99, col sep=semicolon] {Figures/Tikz/Data/satprob-bounds_N=100.csv};
    
    \addplot[mark=none, color=OliveGreen, mark size=1.3pt] table[x=k, y=b0.999, col sep=semicolon] {Figures/Tikz/Data/satprob-bounds_N=100.csv};
    
    \legend{{$\beta=0.9$},{$\beta=0.99$},{$\beta=0.999$}}
  \end{axis}
\end{tikzpicture}

}
\fi
    \captionof{figure}{Number of samples $n=100$.}
    \label{fig:exampleBounds100}
\end{subfigure}
\caption{Lower bounds $\eta$ on the containment probability as a function of the complexity $c$, obtained from \cref{thm:RiskAndComplexity} for different confidence levels $\beta$.} 
\label{fig:exampleBounds}
\end{figure}
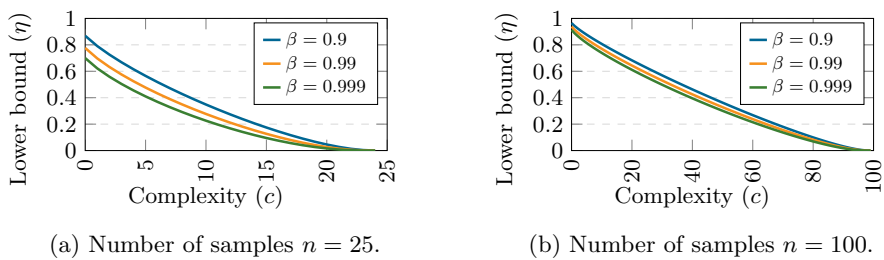

\noindent
We provide the proof of \cref{thm:RiskAndComplexity} in 
\ifappendix
    \cref{app:proof:thm1}.
\else
    \cite[App.~B.1]{Badings2022extended}.
\fi
With a probability of at least $\beta$, \cref{thm:RiskAndComplexity} yields a correct lower bound.
That is, if we solve $\mathfrak{L}^\rho_\mathcal{U}$ for many more sets of $n$ parameter samples (note that, as the samples are \gls{iid}, these sets are drawn according to the product probability $\amsmathbb{P}^n$), the inequality in \cref{eq:RiskAndComplexity_bound} is incorrect for \emph{at most} a $1-\beta$ fraction of the cases.
We plot the lower bound $\eta(c)$ as a function of the complexity $c = 0, \ldots, n$ in \cref{fig:exampleBounds}, for different samples sizes $n$ and confidence levels $\beta$.
These figures show that an increased complexity leads to a lower $\eta$, while increasing the sample size leads to a tighter bound.

\begin{example}
    \label{Ex:LowerBounds}
    We continue \cref{Ex:Complexity}.
    Recall that the complexity for the outer region in \cref{fig:confidence_2D_relaxed} is $c^*_{1.12} = 4$.
    With \cref{thm:RiskAndComplexity}, we compute that, for a confidence level of $\beta = 0.9$, the containment probability for this prediction region is at least $\eta = 0.615$ (cf.~\cref{fig:exampleBounds25}).
    For a stronger confidence level of $\beta=0.999$, we obtain a more conservative lower bound of $\eta = 0.455$.
    \qed
\end{example}

\subsection{An algorithm for computing prediction regions}
\label{subsec:Scenario_algorithm}

We combine the previous results in our algorithm, which is outlined in \cref{fig:Approach}. 
The goal is to obtain a set of prediction regions as in \cref{fig:confidence_2D_relaxed} and their associated lower bounds.
To strictly solve the problem statement, assume $k=1$ in the exposition below. 
We first outline the complete procedure before detailing Steps 4 and 5.

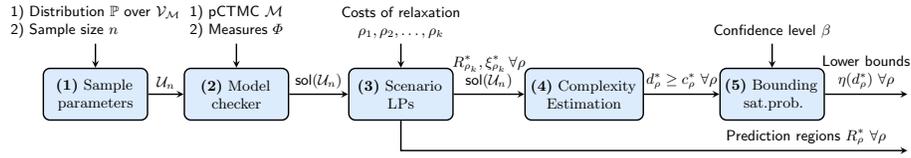
\begin{figure}[t!]
	\centering
\iftikzcompile
	\usetikzlibrary{calc}
\usetikzlibrary{arrows.meta}
\usetikzlibrary{positioning}
\usetikzlibrary{fit}
\usetikzlibrary{shapes}

\tikzstyle{node} = [rectangle, rounded corners, minimum width=2cm, text width=2cm, minimum height=1.1cm, text centered, draw=black, fill=plotblue!15]
\tikzstyle{nodebig} = [rectangle, rounded corners, minimum width=2.3cm, text width=2.3cm, minimum height=1.1cm, text centered, draw=black, fill=plotblue!15]

\resizebox{\linewidth}{!}{%
	\begin{tikzpicture}[node distance=4.2cm,->,>=stealth,line width=0.3mm,auto,
		main node/.style={circle,draw,font=\sffamily\bfseries}]
		
		\newcommand\xshift{-1cm}
		
		\newcommand\yshift{-3.0cm}
		
		\node (sampler) [node] {\textbf{(1)} Sample \\ parameters};
		\node (checker) [node, right of=sampler, xshift=-1.2cm] {\textbf{(2)} Model \\ checker};
		\node (LP) [node, right of=checker, xshift=-0.7cm] {\textbf{(3)} Scenario \\ LPs};
		\node (Comp) [nodebig, right of=LP, xshift=-0.3cm] {\textbf{(4)} Complexity \\ Estimation};
		\node (bound) [node, right of=Comp, xshift=-0.2cm] {\textbf{(5)} Bounding \\ sat.prob.}; 
		
		%%%
		
		\node (in1) [above of=sampler, yshift=\yshift] {};
		\node (in2) [above of=checker, yshift=\yshift] {};
		\node (in3) [above of=LP, yshift=\yshift] {};
		\node (in4) [above of=bound, yshift=\yshift] {};
		\node (out) [right of=bound, xshift=-1.2cm] {};
		\node (out2) [below of=out, yshift=3.0cm] {};
		
		%%%
		
		\draw [->] (in1) -- (sampler) node [pos=0.0, above, align=left, font=\sffamily\small] {1) Distribution $\PP$ over $\paramspace$ \\ 2) Sample size $n$};
		
		\draw [->] (in2) -- (checker) node [pos=0.0, above, align=left, font=\sffamily\small] {1) \gls{pCTMC} $\pctmc$ \\ 2) Measures $\Phi$};
		
		\draw [->] (in3) -- (LP) node [pos=0.0, above, align=center, font=\sffamily\small] {Costs of relaxation \\ $\rho_1, \rho_2, \ldots, \rho_k$};
		
		\draw [->] (in4) -- (bound) node [pos=0.0, above, align=center, font=\sffamily\small] {Confidence level $\beta$};
		
		%%%
		
		\draw [->] (sampler) -- (checker) node [pos=0.5, above, align=center, font=\sffamily\small] {$\sampleset$};
		
		\draw [->] (checker) -- (LP) node [pos=0.5, above, align=center, font=\sffamily\small] {$\sol(\sampleset)$};
		
		\draw [->] (LP) -- (Comp) node [pos=0.5, above, align=center, font=\sffamily\small] {$R^*_{\rho_k}, \xi^*_{\rho_k} \, \forall \rho$ \\ $\sol(\sampleset)$};
		
		\draw [->] (Comp) -- (bound) node [pos=0.5, above, align=center, font=\sffamily\small] {$d^*_{\rho} \geq c^*_{\rho} \,\,\forall \rho$}; 
		
		\draw [->] (bound) -- (out) node [pos=0.5, above, align=center, font=\sffamily\small] {Lower bounds \\ $\eta(d^*_\rho) \,\, \forall \rho$};
		
		\draw [->] (LP) |- (out2) node [pos=0.9, above, align=center, font=\sffamily\small] {Prediction regions $R^*_\rho \,\, \forall \rho$};

	\end{tikzpicture}
}
\fi
	\vspace{-2em}
	\caption{
		Overview of our approach for solving the problem statement.
	}
	\label{fig:Approach}
\end{figure}

As preprocessing steps, given a upCTMC $(\pctmc, \PP)$, we first (1)~sample a set $\sampleset$ of $n$ parameter values.
Using $\pctmc$ and $\Phi$, a (2)~model checking algorithm then computes the solution vector $\sol^\Phi_{\pctmc}(u)$ for each $u \in \sampleset$, yielding the set of solutions $\sol(\sampleset)$.
We then use $\sol(\sampleset)$ as basis for (3)~the scenario problem $\mathfrak{L}_\mathcal{U}^\rho$ in \cref{eq:ScenarioProblem}, which we solve for $k$ predefined values $\rho_1, \ldots, \rho_k$, yielding $k$ prediction regions $R^*_{\rho_1}, \ldots R^*_{\rho_k}$.
We (4)~compute an upper bound $d^*_\rho$ on the complexity $c^*_\rho \, \forall\rho$.
Finally, we (5)~use the result in \cref{thm:RiskAndComplexity}, for a given confidence $\beta$, to compute the lower bound on the containment probability $\eta(d^*_\rho)$ of $R^*_\rho$.
Using \cref{def:ConfidenceProbCurve}, we can postprocess this region to a prediction region over the probability curves.

\smallpar{Step (3): Choosing values for $\rho$.}
\cref{example:IntuitionRho} shows that relaxation of additional solution vectors (and thus a change in the prediction region) only occurs at \emph{critical} values of $\rho = \frac{1}{n}$, for $n \in \N$.
In practice, we will use $\rho = \frac{1}{n+0.5}$ for $\pm 10$ values of $n \in \N$ to obtain gradients of prediction regions as in \cref{sec:Experiments}.

\smallpar{Step (4): Computing complexity.}
Computing the complexity $c^*_\rho$ is a combinatorial problem in general~\cite{garatti2021risk}, because we must consider the removal of all combinations of the solutions on the boundary of the prediction region $R^*_\rho$.
In practice, we compute an upper bound $d^*_\rho \geq c^*_\rho$ on the complexity via a greedy algorithm.
Specifically, we iteratively solve $\mathfrak{L}^\rho_\mathcal{U}$ in \cref{eq:ScenarioProblem} with \emph{one more sample on the boundary removed}.
If the optimal solution is unchanged, we conclude that this sample does not contribute to the complexity.
If the optimal solution is changed, we put the sample back and proceed by removing a different sample.
This greedy algorithm terminates when we have tried removing all solutions on the boundary.

\smallpar{Step (5): Computing lower bounds.} \cref{thm:RiskAndComplexity} characterizes a computable function $B(d^*, n, \beta)$ that returns zero for $d^*=n$ (i.e., all samples are critical), and otherwise uses the polynomial \cref{eq:RiskAndComplexity_polynomial} to obtain $\eta$, which we solve with an approximate root finding method in practice (see~\cite{garatti2019risk} for details on how to ensure that we find the smallest root). 
For every upper bound on the complexity $d^*$ and any requested confidence, we obtain the lower bound $\eta = B(d^*, n, \beta)$ for the containment probability w.r.t.\ the prediction region $R^*_\rho$.
\section{Imprecise Sampling-Based Prediction Regions}
\label{sec:Scenario_imprecise}

Thus far, we have solved our problem statement under the assumption that we compute the solution vectors precisely (up to numerics).
For some models, however, computing precise solutions is expensive.
In such a case, we may choose to compute an approximation, given as an \emph{interval} on each entry of the solution function.
In this section, we deal with such \emph{imprecise solutions}.

\smallpar{Setting.}
Formally, imprecise solutions are described by the bounds $\sol^-(u), \sol^+(u) \in \R^m$ such that $\sol^-(u) \leq \sol(u) \leq \sol^+(u)$ holds with pointwise inequalities.
Our goal is to compute a prediction region $R$ and a (high-confidence) lower bound $\mu$ such that $\satprob(R) \geq \mu$, i.e., a lower bound on the probability that any \emph{precise solution} $\sol(u)$ is contained in $R$.
However, we must now compute $R$ and $\satprob(R)$ from the imprecise solutions $\sol^-, \sol^+$.
Thus, we aim to provide a guarantee with respect to the \emph{precise} solution $\sol(u)$, based on \emph{imprecise} solutions.

\smallpar{Challenge.}
Intuitively, if we increase the (unknown) prediction region $R^*$ from problem $\mathfrak{L}_\mathcal{U}^\rho$ (for the unknown precise solutions) while also overapproximating the complexity of $\mathfrak{L}_\mathcal{U}^\rho$, we obtain sound bounds.
We formalize this idea as follows.
\begin{lemma}
    \label{lemma:Implication}
    Let $R^*_\rho$ be the prediction region and $c^*_\rho$ the complexity that result from solving $\mathfrak{L}^\rho_\mathcal{U}$ for the precise (unknown) solutions $\sol(\sampleset)$.
    Given a set $R \in \R^n$ and $d \in \N$, for any confidence level $\beta \in (0,1)$, the following implication holds:
    \begin{equation}
        \label{eq:ImpreciseSolutions_implication}
        R^*_\rho \subseteq R
        \text{ and }
        c^*_\rho \leq d
        \enskip\implies\enskip
        \amsmathbb{P}^n \Big\{ 
            \satprob\big(R\big) \geq \eta(d)
        \Big\} \geq \beta,
    \end{equation}
    where $\eta(n) = 0$, and otherwise, $\eta(d)$ is the smallest positive real-valued solution to the polynomial equality in \cref{eq:RiskAndComplexity_polynomial}.
\end{lemma}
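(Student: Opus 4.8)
The plan is to derive the implication directly from \cref{thm:RiskAndComplexity} by combining two monotonicity observations: that the containment probability $\satprob(\cdot)$ is monotone under set inclusion, and that \cref{thm:RiskAndComplexity} already holds verbatim for \emph{any} upper bound on the complexity in place of the complexity itself. No genuinely new probabilistic reasoning should be required; the lemma is essentially a transfer principle for the guarantee of \cref{thm:RiskAndComplexity}.

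First I would record the elementary fact that for sets $R' \subseteq R'' \subseteq \R^{|\Phi|}$ one has $\{ u \in \paramspace : \sol(u) \in R' \} \subseteq \{ u \in \paramspace : \sol(u) \in R'' \}$, so by \cref{def:ContainmentProbability} also $\satprob(R') \leq \satprob(R'')$. Applied to the hypothesis $R^*_\rho \subseteq R$, this gives $\satprob(R^*_\rho) \leq \satprob(R)$ for \emph{every} realization of the sample set $\sampleset$. Next, since the hypothesis provides $d \geq c^*_\rho$, the value $d$ is an admissible upper bound on the complexity $c^*$ in the sense of \cref{thm:RiskAndComplexity}; instantiating that theorem with $d^* = d$ therefore yields $\amsmathbb{P}^n\{ \satprob(R^*_\rho) \geq \eta(d) \} \geq \beta$ (for $d = n$ this is the vacuous bound $\satprob(R^*_\rho) \geq 0$, matching $\eta(n) = 0$). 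Finally, combining the two: because $\satprob(R) \geq \satprob(R^*_\rho)$ holds pointwise, the event $\{ \satprob(R^*_\rho) \geq \eta(d) \}$ is contained in the event $\{ \satprob(R) \geq \eta(d) \}$, and passing to $\amsmathbb{P}^n$-measure preserves the inequality, so $\amsmathbb{P}^n\{ \satprob(R) \geq \eta(d) \} \geq \beta$, which is exactly the right-hand side of \cref{eq:ImpreciseSolutions_implication}.

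The step that needs the most care is the bookkeeping of what is random and what is fixed: $R^*_\rho$ and $c^*_\rho$ are functions of the drawn sample $\sampleset$, whereas in the implication $R$ and $d$ are given fixed quantities that (by construction from the imprecise solutions $\sol^-, \sol^+$ in \cref{sec:Scenario_imprecise}) satisfy $R^*_\rho \subseteq R$ and $c^*_\rho \leq d$ on every sample. The argument above makes this precise by exhibiting the failure event of the conclusion as a subset of the failure event of \cref{thm:RiskAndComplexity}, so nothing beyond \cref{def:Uniqueness} is invoked. I would also add the remark that if one prefers a sample-dependent $d$, the same reasoning still goes through provided one additionally uses monotonicity of $\eta$ (non-increasing in the complexity, see~\cref{fig:exampleBounds}): then $c^*_\rho \leq d$ implies $\eta(d) \leq \eta(c^*_\rho)$, and one can start instead from the base case $d^* = c^*_\rho$ of \cref{thm:RiskAndComplexity} and chain the same two monotonicity inequalities.
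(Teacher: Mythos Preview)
Your proof is correct and follows essentially the same approach as the paper: both combine the monotonicity of $\satprob(\cdot)$ under set inclusion with the fact that the bound from \cref{thm:RiskAndComplexity} holds for any upper bound on the complexity (equivalently, that $\eta$ is non-increasing). Your primary route invokes \cref{thm:RiskAndComplexity} directly with $d^* = d$, while the paper starts from $c^*_\rho$ and appeals to the monotonicity of $\eta$---the variant you sketch in your final remark---but these are the same two ingredients assembled in the same way.
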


\noindent
The proof is in 
\ifappendix
    \cref{app:proof:implication}.
\else
    \cite[App.~B.2]{Badings2022extended}.
\fi
In what follows, we clarify how we compute the appropriate $R$ and $d$ in \cref{lemma:Implication}. 
As we will see, in contrast to \cref{sec:Scenario_precise}, these results do \emph{not} carry over to other definitions $\mathfrak{L}_\mathcal{U}^{\rho}$ (for non-rectangular regions $R$).

\subsection{Prediction regions on imprecise solutions}
In this section, we show how to compute $R \supseteq R^*_{\rho}$, satisfying the first term in the premise of \cref{lemma:Implication}. 
We construct a \emph{conservative box} around the imprecise solutions as in \cref{fig:imprecise_2D}, containing both $\sol^-(u)$ and $\sol^+(u)$.
We compute this box by solving the following problem $\mathfrak{G}_\mathcal{U}^{\rho}$ as a modified version of $\mathfrak{L}^\rho_\mathcal{U}$ in~\cref{eq:ScenarioProblem}: 
\begin{subequations}
\begin{align}
    \label{eq:ScenarioProblem_imp_objective}
    \mathfrak{G}_\mathcal{U}^{\rho} \, \colon \, &\minimize
    \, \| \bar{x} - \munderbar{x} \|_1 + \rho \sum_{i=1}^{n} \| \xi_i \|_1 
    \\
    \label{eq:ScenarioProblem_imp_constraint}
    &
    \text{subject to}\quad 
        \munderbar{x} - \xi_i \leq \sol^-(u_i), \enskip
        \sol^+(u_i) \leq \bar{x} + \xi_i \quad
    \forall i = 1, \ldots, n.
\end{align}
\label{eq:ScenarioProblem_imp}
\end{subequations}
We denote the optimal solution of $\mathfrak{G}_\mathcal{U}^{\rho}$ by $[\munderbar{x}'_\rho, \bar{x}'_\rho], \xi'_\rho$ (recall that the optimum to $\mathfrak{L}_\mathcal{U}^\rho$ is written as $[\munderbar{x}^*_\rho, \bar{x}^*_\rho], \xi^*_\rho$).\footnote{We write $[\munderbar{x}^*_\rho, \bar{x}^*_\rho]$ and $[\munderbar{x}'_\rho, \bar{x}'_\rho]$, as results in \cref{sec:Scenario_imprecise} apply only to rectangular regions.}
If a sample $u_i \in \paramspace$ in problem $\mathfrak{G}_\mathcal{U}^{\rho}$ is relaxed (i.e., has a non-zero $\xi_i$), part of the interval $[\sol^-(u_i), \sol^+(u_i)]$ is not contained in the prediction region.
The following result (for which the proof is in 
\ifappendix
    \cref{app:proof:subsetrelation})
\else
    \cite[App.~B.3]{Badings2022extended}.
\fi
relates $\mathfrak{L}_\mathcal{U}^\rho$ and $\mathfrak{G}^\rho_\mathcal{U}$, showing that we can use $[\munderbar{x}'_\rho, \bar{x}'_\rho]$ as $R$ in~\cref{lemma:Implication}.

\begin{theorem}
    \label{thm:subset_relation}
    Given $\rho$, sample set $\sampleset$, and prediction region $[\munderbar{x}'_\rho, \bar{x}'_\rho]$ to problem $\mathfrak{G}^\rho_\mathcal{U}$, it holds that $[\munderbar{x}^*_\rho, \bar{x}^*_\rho] \subseteq [\munderbar{x}'_\rho, \bar{x}'_\rho]$, with $[\munderbar{x}^*_\rho, \bar{x}^*_\rho]$ the optimal solution to $\mathfrak{L}^\rho_\mathcal{U}$.
\end{theorem}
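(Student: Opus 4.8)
The plan is to show that the optimal solution $[\munderbar{x}^*_\rho, \bar{x}^*_\rho], \xi^*_\rho$ of the precise problem $\mathfrak{L}^\rho_\mathcal{U}$ is feasible for the imprecise problem $\mathfrak{G}^\rho_\mathcal{U}$ after a suitable adjustment of the slack variables, and then argue that the feasibility of this adjusted point, together with the optimality of $[\munderbar{x}'_\rho, \bar{x}'_\rho]$ and a minimality property of the objective, forces $[\munderbar{x}^*_\rho, \bar{x}^*_\rho] \subseteq [\munderbar{x}'_\rho, \bar{x}'_\rho]$. Throughout, I will use the pointwise bounds $\sol^-(u_i) \leq \sol(u_i) \leq \sol^+(u_i)$ that define the imprecise setting.

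First I would observe the following monotonicity fact about $\mathfrak{G}^\rho_\mathcal{U}$: any feasible point for the precise constraints in~\cref{eq:ScenarioProblem_constraint} can be turned into a feasible point for~\cref{eq:ScenarioProblem_imp_constraint} by enlarging each slack $\xi_i$ by the (nonnegative) amount $\max\{\sol^-(u_i) - \munderbar{x}, \sol^+(u_i) - \sol(u_i), 0\}$ componentwise, but this increases the objective, so a direct feasibility comparison of the two optima is not immediate. Instead, the cleaner route is a contradiction/exchange argument purely in terms of the boxes. Suppose for contradiction that $[\munderbar{x}^*_\rho, \bar{x}^*_\rho] \not\subseteq [\munderbar{x}'_\rho, \bar{x}'_\rho]$, i.e.\ in some coordinate $j$ either $\munderbar{x}'_{\rho,j} > \munderbar{x}^*_{\rho,j}$ or $\bar{x}'_{\rho,j} < \bar{x}^*_{\rho,j}$. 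Consider the modified box $\widehat{x} := \min(\munderbar{x}'_\rho, \munderbar{x}^*_\rho)$, $\widehat{\bar x} := \max(\bar{x}'_\rho, \bar{x}^*_\rho)$ taken componentwise, which is strictly larger than $[\munderbar{x}'_\rho, \bar{x}'_\rho]$ under the contradiction hypothesis; I would then show that $[\widehat{x}, \widehat{\bar x}]$, together with slacks $\widehat{\xi}_i := \max\{\sol^-(u_i) - \widehat{x}, \sol^+(u_i) - \widehat{\bar x}, 0\}$, is feasible for $\mathfrak{G}^\rho_\mathcal{U}$, and that its objective value is no larger than that of $[\munderbar{x}'_\rho, \bar{x}'_\rho], \xi'_\rho$ — in fact I expect it to be strictly smaller whenever the inclusion fails — which contradicts the (assumed unique, by \cref{def:Uniqueness}) optimality of $[\munderbar{x}'_\rho, \bar{x}'_\rho]$.

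The key computational ingredient is a coordinatewise bookkeeping of the objective $\|\bar{x} - \munderbar{x}\|_1 + \rho \sum_i \|\xi_i\|_1$: because both the box-width term and the slack penalty decompose over coordinates, it suffices to argue in each coordinate $j$ separately. For the coordinates where $[\munderbar{x}^*_\rho, \bar{x}^*_\rho]$ and $[\munderbar{x}'_\rho, \bar{x}'_\rho]$ already agree or where the $^*$-box is contained, the modified box does not change anything; for a coordinate where the inclusion is violated, I would compare the "cost" of pushing $\bar{x}'_{\rho,j}$ up to $\bar{x}^*_{\rho,j}$ (width increase) against the slack reduction across all samples $i$ whose upper endpoint $\sol^+_j(u_i)$ lies between $\bar{x}'_{\rho,j}$ and $\bar{x}^*_{\rho,j}$, and use that $[\munderbar{x}^*_\rho, \bar{x}^*_\rho]$ was optimal for the *precise* problem to bound this trade-off — the point being that $\sol_j(u_i) \leq \sol^+_j(u_i)$, so relaxing against the precise solutions is never harder than relaxing against the upper bounds, which pins down that the $^*$-box cannot "stick out" of the $'$-box without the $'$-box being suboptimal. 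The symmetric argument handles the lower endpoints.

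The main obstacle I anticipate is making the exchange argument rigorous without circular reasoning: the quantity being compared (objective of $\mathfrak{G}^\rho_\mathcal{U}$ at the modified box) involves the slacks $\widehat\xi_i$, which are themselves determined by the box, so the "cost/benefit per coordinate" accounting must be done carefully to ensure that enlarging the box in one coordinate never forces an *increase* in slack in that coordinate (it cannot — enlarging an interval can only shrink the distance of a point to it) while correctly attributing the width penalty. An alternative, possibly cleaner, presentation I would fall back on is to directly verify that $[\munderbar{x}^*_\rho, \bar{x}^*_\rho]$ is already feasible for $\mathfrak{G}^\rho_\mathcal{U}$ with slacks $\xi^{\mathfrak G}_i := \xi^*_{\rho,i} + \max\{\sol(u_i) - \sol^-(u_i), \sol^+(u_i) - \sol(u_i)\}$ (componentwise, using $\sol^- \leq \sol \leq \sol^+$ to check the two constraint families), and then invoke a lemma — to be proved by the same coordinatewise monotonicity — stating that shrinking the box strictly below the optimal $\mathfrak{L}^\rho_\mathcal{U}$-box is never beneficial for $\mathfrak{G}^\rho_\mathcal{U}$ because the feasible region of $\mathfrak{G}^\rho_\mathcal{U}$ is, coordinatewise, a "shifted and widened" version of that of $\mathfrak{L}^\rho_\mathcal{U}$. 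Either way, the essential content is the coordinate-separability of the objective plus the elementary fact that point-to-interval distance is nonincreasing under interval enlargement.
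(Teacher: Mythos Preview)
Your coordinate-by-coordinate decomposition is exactly the structural step the paper uses (stated there as a separate lemma: since the objective is a sum of $1$-norms, both $\mathfrak{L}^\rho_\mathcal{U}$ and $\mathfrak{G}^\rho_\mathcal{U}$ split into $m$ independent one-dimensional problems). Within each coordinate, however, you and the paper diverge. The paper does \emph{not} run an exchange argument; instead it gives an explicit order-statistic description of the optimal endpoints. It defines $\thresh^+_\geq(u_i)$ as the number of samples whose upper-bound solution is at least $\sol^+(u_i)$, proves a lemma that a sample is relaxed iff $\rho$ falls below the reciprocal of such a count, and then writes $\bar{x}'_\rho = \max\{\sol^+(u) : \thresh^+_\geq(u) > \rho^{-1}\}$ and the analogous formula for $\bar{x}^*_\rho$. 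The inclusion follows because $\sol(u_i)\le\sol^+(u_i)$ forces $|\{i:\sol(u_i)>t\}|\le|\{i:\sol^+(u_i)>t\}|$ for every $t$, so the level at which this count drops to $\rho^{-1}$ is no smaller for the imprecise upper bounds. Your exchange argument encodes the same monotonicity differently; the inequality you actually need is that $a\mapsto (a-\bar{x}'_j)_+-(a-\bar{x}^*_j)_+$ is nondecreasing in $a$ when $\bar{x}'_j<\bar{x}^*_j$, which combined with optimality of $\bar{x}^*_j$ for the precise objective gives your cost/benefit bound. The paper's route yields an explicit formula for the endpoints that it reuses later; your route is shorter and avoids the auxiliary $\thresh$ machinery.

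Two points in your plan are not yet closed. First, the optimal slack in coordinate $j$ is $\xi_{i,j}=\max\bigl(\munderbar{x}_j-\sol^-_j(u_i),\,\sol^+_j(u_i)-\bar{x}_j,\,0\bigr)$, a \emph{max} rather than a sum of the two overhangs; hence the upper and lower endpoints do not fully decouple within a coordinate, and your ``push $\bar{x}'_j$ up to $\bar{x}^*_j$ while holding $\munderbar{x}$ fixed'' step must explicitly handle samples whose lower overhang already dominates the max (your text silently assumes the additive form). Second, your contradiction relies on uniqueness of the optimum of $\mathfrak{G}^\rho_\mathcal{U}$, but \cref{def:Uniqueness} is stated only for $\mathfrak{L}^\rho_\mathcal{U}$; you need either to extend the tie-break assumption to $\mathfrak{G}^\rho_\mathcal{U}$ or to sharpen your exchange inequality to a strict one. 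The paper's order-statistic argument sidesteps the uniqueness issue by comparing the explicit endpoint formulas directly.
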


\noindent
We note that this result is not trivial.
In particular, the entries $\xi_i$ from both LPs are incomparable, as are their objective functions. 
Instead, \cref{thm:subset_relation} relies on two observations. 
First, due to the use of the 1-norm, the LP $\mathfrak{G}^\rho_\mathcal{U}$ can be decomposed into $n$ individual LPs, whose results combine into a solution to the original LP. 
This allows us to consider individual dimensions. 
Second, the solution vectors that are relaxed depend on the value of $\rho$ and on their \emph{relative order}, but not on the \emph{precise position} within that order, which is also illustrated by \Cref{example:IntuitionRho}.
In combination with the observation from \cref{example:IntuitionRho} that the \emph{outermost} samples are relaxed at the (relatively) highest $\rho$, we can provide conservative guarantees on which samples are (or are surely not) relaxed.
We formalize these observations and provide a proof of \cref{thm:subset_relation} in 
\ifappendix
    \Cref{app:proof:subsetrelation}.
\else
    \cite[App.~B.3]{Badings2022extended}.
\fi

\begin{figure}[t!]
\centering
\begin{minipage}[b]{.48\textwidth}
  \centering
\iftikzcompile
      \resizebox{.75\linewidth}{!}{%
\begin{tikzpicture}[yscale=1.16]

\def\Z{0cm}
\def\hZ{0.3cm}

\def\A{0.2cm}
\def\hA{0.6cm}

\def\B{0.4cm}
\def\hB{1.0cm}

\def\C{0.8cm}
\def\hC{1.4cm}

\def\D{1.2cm}
\def\hD{1.8cm}

\def\E{2.0cm}
\def\hE{2.3cm}

\fill[MidnightBlue!20] (-0.3cm,\Z) rectangle (0cm,\hZ);
\fill[MidnightBlue!20] (0cm,\A) rectangle (0.3cm,\hA);
\fill[MidnightBlue!20] (-0.3cm,\B) rectangle (0cm,\hB);
\fill[MidnightBlue!20] (0cm,\C) rectangle (0.3cm,\hC);
\fill[MidnightBlue!20] (-0.3cm,\D) rectangle (0cm,\hD);
\fill[MidnightBlue!20] (0cm,\E) rectangle (0.3cm,\hE);

\draw[latex-] (0.3cm,\hE) -- (0.6cm,\hE) node[right, align=left] 
{${\sol}^+(u_1)$};
\draw[latex-] (-0.3cm,\hD) -- (-0.6cm,\hD) node[left, align=right] 
{${\sol}^+(u_2)$};
\draw[latex-] (0.3cm,\hC) -- (0.6cm,\hC) node[right, align=left] 
{${\sol}^+(u_3)$};
\draw[latex-] (-0.3cm,\hB) -- (-0.6cm,\hB) node[left, align=right] 
{${\sol}^+(u_4)$};
\draw[latex-] (0.3cm,\hA) -- (0.6cm,\hA) node[right, align=left] 
{${\sol}^+(u_5)$};
\draw[latex-] (-0.3cm,\hZ) -- (-0.6cm,\hZ) node[left, align=right] 
{${\sol}^+(u_6)$};

\draw[dashed,color=red!80] (0cm,\hB) -- (1.6cm,\hB) node[right] {$\bar{x}^+_\rho$};

\draw[-latex] (0cm,0.2cm) -- (0cm,2.5cm) node[pos=.055, below] {$\vdots$};

\end{tikzpicture}
}
\fi
  \captionof{figure}{Imprecise solutions and the upper bound $\bar{x}'_\rho$ of the prediction region.}
  \label{fig:1D_example_imprecise}
\end{minipage}
\hfill
\begin{minipage}[b]{.48\textwidth}
  \centering
\iftikzcompile
      \resizebox{.95\linewidth}{!}{%

\begin{tikzpicture}
\begin{axis}[%
    width=\linewidth,
    height=4.3cm,
    ymin=-0.05,
    ymax=1.1,
    xmin=0,
    xmax=1,
    xlabel={Reliability($t_1$)},
    ylabel={Expected cost($t_1$)},
    legend entries={Imprecise $(c^+_\rho = 3)$,Precise $(c^*_\rho = 4)$},
    legend pos=north west,
    legend image post style={xscale=0.55},
    legend style={
        nodes={scale=0.88, transform shape},
        row sep=-2pt
        },
    x label style={at={(axis description cs:0.5,-0.12)}},
    ytick={0, 0.2, 0.4, 0.6, 0.8, 1.0},
]

\addlegendimage{dotted, very thick, color=red!60}
\addlegendimage{dashed, very thick, color=MidnightBlue!60}

\coordinate (u1) at (axis cs:0.13, 0.13) {};
\coordinate (u2) at (axis cs:0.65, 0.50) {};
\coordinate (u3) at (axis cs:0.73, 0.26) {};
\coordinate (u4) at (axis cs:0.38, 0.40) {};
\coordinate (u5) at (axis cs:0.20, 0.55) {};
\coordinate (u6) at (axis cs:0.53, 0.30) {};
\coordinate (u7) at (axis cs:0.91, 0.56) {};

%1 lower left
\coordinate (low1) at (axis cs:0.09, 0.08) {};
\coordinate (upp1) at (axis cs:0.31, 0.22) {};

%2 upper right
\coordinate (low2) at (axis cs:0.62, 0.45) {};
\coordinate (upp2) at (axis cs:0.83, 0.65) {};

%3 right middle
\coordinate (low3) at (axis cs:0.69, 0.200924) {};
\coordinate (upp3) at (axis cs:0.77, 0.38908) {};

%4 middle
\coordinate (low4) at (axis cs:0.33, 0.29) {};
\coordinate (upp4) at (axis cs:0.45, 0.47) {};

%5 left upper
\coordinate (low5) at (axis cs:0.16, 0.35) {};
\coordinate (upp5) at (axis cs:0.30, 0.60) {};

%6 middle (right)
\coordinate (low6) at (axis cs:0.48, 0.23) {};
\coordinate (upp6) at (axis cs:0.57, 0.37) {};

%7 outside of box
\coordinate (low7) at (axis cs:0.86, 0.46) {};
\coordinate (upp7) at (axis cs:0.95, 0.68) {};

\coordinate (low_box) at (axis cs:0.09, 0.08) {};
\coordinate (upp_box) at (axis cs:0.83, 0.65) {};

\coordinate (low_box2) at (axis cs:0.13, 0.13) {};
\coordinate (upp_box2) at (axis cs:0.73, 0.55) {};

\end{axis}

\node at (u1) {\textbullet};
\node at (u2) {\textbullet};
\node at (u3) {\textbullet};
\node at (u4) {\textbullet};
\node at (u5) {\textbullet};
\node at (u6) {\textbullet};
\node at (u7) {\textbullet};

\filldraw[densely dotted, color=gray, fill=none, thick] (low1) rectangle (upp1);
\filldraw[densely dotted, color=gray, fill=none, thick] (low2) rectangle (upp2);
\filldraw[densely dotted, color=gray, fill=none, thick] (low3) rectangle (upp3);
\filldraw[densely dotted, color=gray, fill=none, thick] (low4) rectangle (upp4);
\filldraw[densely dotted, color=gray, fill=none, thick] (low5) rectangle (upp5);
\filldraw[densely dotted, color=gray, fill=none, thick] (low6) rectangle (upp6);
\filldraw[densely dotted, color=gray, fill=none, thick] (low7) rectangle (upp7);

\filldraw[dotted, color=red!60, fill=none, very thick] (low_box) rectangle (upp_box);
\filldraw[dashed, color=MidnightBlue!60, fill=none, very thick] (low_box2) rectangle (upp_box2);

\node at (u1) [above right, yshift=-0.08cm] {1};
\node at (u2) [right] {2};
\node at (u3) [right] {3};
\node at (u4) [below] {4};
\node at (u5) [below] {5};
\node at (u6) [above] {6};
\node at (u7) [above] {7};

\end{tikzpicture}

}
\fi
  \captionof{figure}{Complexity of the imprecise solution vs. that of the precise solution.}
  \label{fig:imprecise_2D}
\end{minipage}
\end{figure}

\subsection{Computing the complexity}
To satisfy the second term of the premise in \cref{lemma:Implication}, we compute an upper bound on the complexity.
We first present a negative result.
Let the complexity $c'_\rho$ of problem $\mathfrak{G}_\mathcal{U}^{\rho}$ be defined analogous to \cref{def:Complexity}, but with $[\munderbar{x}'_\rho, \bar{x}'_\rho]$ as the region. 
\begin{lemma}
    \label{lemma:Complexity}
    In general, $c^*_\rho \leq c'_\rho$ does not hold.
\end{lemma}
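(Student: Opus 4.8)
**Proof plan for Lemma \ref{lemma:Complexity} (the negative result $c^*_\rho \leq c'_\rho$ need not hold).**

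Since this is a negative statement, the plan is to exhibit a concrete counterexample: a small sample set in a single dimension ($m=1$), together with imprecise bounds $\sol^-(u_i), \sol^+(u_i)$ and a specific value of $\rho$, such that the complexity $c'_\rho$ of the conservative problem $\mathfrak{G}^\rho_\mathcal{U}$ is strictly smaller than the complexity $c^*_\rho$ of the precise problem $\mathfrak{L}^\rho_\mathcal{U}$. The intuition I would follow is exactly the one depicted in \cref{fig:imprecise_2D}: widening each sample into an interval can \emph{merge} what would otherwise be distinct "layers" of samples, so that fewer samples end up sitting on the boundary of the optimal box (or get relaxed), thereby lowering the complexity count. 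Concretely, suppose the precise solutions are spread out so that, at the chosen $\rho$, four of them are pinned (some on the boundary, some relaxed) giving $c^*_\rho = 4$; if the imprecise intervals are fat enough that the conservative box $[\munderbar{x}'_\rho, \bar{x}'_\rho]$ is determined by only three of the intervals and the rest lie strictly inside with zero slack, then $c'_\rho = 3 < 4$.

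The key steps would be: (i) fix $m=1$ and pick, say, $n=7$ samples with explicit precise values $\sol(u_1),\dots,\sol(u_7) \in \R$ and explicit interval half-widths, mirroring the seven points in \cref{fig:imprecise_2D}; (ii) choose $\rho$ in one of the critical windows $(\tfrac{1}{k+1},\tfrac{1}{k})$ from \cref{example:IntuitionRho} so that the precise problem relaxes a known number of the outermost points and the complexity $c^*_\rho$ can be read off by \cref{def:Complexity} (number relaxed plus minimum number needed on the boundary); (iii) solve $\mathfrak{G}^\rho_\mathcal{U}$ for the same $\rho$ — because of the interval widening, a different (smaller) subset of samples becomes critical, and compute $c'_\rho$ directly; (iv) verify $c'_\rho < c^*_\rho$, which contradicts $c^*_\rho \le c'_\rho$. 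Since only one instance is needed, all computations are finite and elementary — essentially evaluating the piecewise-linear objective \cref{eq:ScenarioProblem_objective} (resp. \cref{eq:ScenarioProblem_imp_objective}) at the candidate boxes, which by the decomposition into one-dimensional LPs (used in \cref{thm:subset_relation}) reduces to comparing sums of gaps weighted by $\rho$.

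The main obstacle is making sure the counterexample is genuinely valid on \emph{both} counts simultaneously: the precise instance must have complexity exactly what I claim (being careful about the "minimum number of samples on the boundary to keep the solution unchanged" clause, which requires checking that removing any one boundary sample changes $R^*_\rho$), and the imprecise instance must have a strictly smaller complexity (requiring that the non-extremal intervals truly sit inside $[\munderbar{x}'_\rho,\bar{x}'_\rho]$ with $\xi'_i = 0$, and that removing the non-critical ones leaves $[\munderbar{x}'_\rho,\bar{x}'_\rho]$ untouched). I would therefore present the construction together with a short table of the values $\sol^-(u_i), \sol(u_i), \sol^+(u_i)$ and the resulting boxes, and check the two complexity computations side by side — exactly as \cref{fig:imprecise_2D} already advertises with the annotation $c^+_\rho = 3$ versus $c^*_\rho = 4$ — so the proof is essentially "inspect this picture and verify the arithmetic." No genuinely hard analysis is involved; the care is entirely in the bookkeeping of which samples are critical in each problem.
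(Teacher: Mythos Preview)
Your proposal is correct and takes essentially the same approach as the paper: exhibit a single concrete counterexample and read off the two complexities. The paper's proof is in fact just a one-line appeal to \cref{fig:imprecise_2D} (the two-dimensional picture you cite), identifying the smallest critical sets as $\{1,2,7\}$ for the imprecise problem and $\{1,3,5,7\}$ for the precise one; your plan to build a separate one-dimensional instance would also work but is more effort than simply invoking that existing figure.
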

\begin{proof}
    In \cref{fig:imprecise_2D}, the smallest critical set for the imprecise solutions are those labeled $\{1, 2, 7\}$, while this set is $\{1,3,5,7\}$ under precise solutions, so $c^*_\rho > c'_\rho$.
\end{proof}

\noindent
Thus, we cannot upper bound the complexity directly from the result to $\mathfrak{G}_\mathcal{U}^{\rho}$.
We can, however, determine the samples that are certainly \emph{not} in any critical set (recall \cref{def:Complexity}).
Intuitively, a sample is \emph{surely noncritical} if its (imprecise) solution is strictly within the prediction region and does not overlap with any solution on the region's boundary.
In \cref{fig:1D_example_imprecise}, sample $u_6$ is surely noncritical, but sample $u_5$ is not (whether $u_5$ is critical depends on its precise solution).
Formally, let $\delta R$ be the boundary\footnote{The boundary of a compact set is defined as its closure minus its interior~\cite{mendelson1990introduction}.} of region $[\munderbar{x}'_\rho, \bar{x}'_\rho]$, and let $\mathcal{B}$ be the set of samples whose solutions overlap with $\delta R$, which is $\mathcal{B} = \{ u \in \sampleset \,\colon\, [\sol^-(u), \sol^+(u)] \cap \delta R \neq \varnothing \}$. 
\begin{definition}
    \label{def:SurelyNoncritical}
    For a region $[\munderbar{x}'_\rho, \bar{x}'_\rho]$, let $\mathcal{I} \subset [\munderbar{x}'_\rho, \bar{x}'_\rho]$ be the rectangle of largest volume, such that $\mathcal{I} \cap [\sol^-(u), \sol^+(u)] = \varnothing$ for any $u \in \mathcal{B}$.
    A sample $u_i \in \paramspace$ is \emph{surely noncritical} if $[\sol^-(u_i), \sol^+(u_i)] \subseteq \mathcal{I}$.
    The set of all surely noncritical samples w.r.t. the (unknown) prediction region $[\munderbar{x}^*_\rho, \bar{x}^*_\rho]$ is denoted by $\mathcal{X} \subset \sampleset$.
\end{definition}
\noindent
As a worst case, any sample not surely noncritical can be in the smallest critical set, leading to the following bound on the complexity as required by \cref{lemma:Implication}.
\begin{theorem}
    \label{thm:lower_bound_complexity}
    Let $\mathcal{X}$ be the set of surely noncritical samples. Then $c^*_\rho \leq \vert \sampleset \setminus \mathcal{X} \vert$.
\end{theorem}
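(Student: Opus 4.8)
The plan is to prove \cref{thm:lower_bound_complexity} by exhibiting a critical set (\cref{def:Complexity}) for the precise problem $\mathfrak{L}^\rho_\mathcal{U}$ of cardinality at most $|\sampleset \setminus \mathcal{X}|$; since the complexity $c^*_\rho$ is the size of the \emph{smallest} critical set, the bound follows. The obvious candidate is $\mathcal{W} := \sampleset \setminus \mathcal{X}$ itself. By \cref{def:Complexity} I need to verify two things for $\mathcal{W}$: (i)~solving $\mathfrak{L}^\rho_\mathcal{W}$ returns the same optimizer $R^*_\rho = [\munderbar{x}^*_\rho, \bar{x}^*_\rho]$ as $\mathfrak{L}^\rho_\mathcal{U}$; and (ii)~every relaxed sample --- every $u_i$ with $\xi^*_{\rho,i} > 0$ --- lies in $\mathcal{W}$, i.e., no surely noncritical sample is relaxed in the precise problem.

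Both (i) and (ii) reduce to one geometric statement: the rectangle $\mathcal{I}$ from \cref{def:SurelyNoncritical} satisfies $\mathcal{I} \subseteq [\munderbar{x}^*_\rho, \bar{x}^*_\rho]$, and moreover $\mathcal{I}$ meets neither the boundary of $[\munderbar{x}^*_\rho, \bar{x}^*_\rho]$ nor any solution vector $\sol(u_i)$ lying on that boundary. Given this, (ii) is immediate: for $u_i \in \mathcal{X}$ we have $\sol(u_i) \in [\sol^-(u_i), \sol^+(u_i)] \subseteq \mathcal{I} \subseteq [\munderbar{x}^*_\rho, \bar{x}^*_\rho]$, so the constraint \cref{eq:ScenarioProblem_constraint} for $u_i$ is met with $\xi_i = 0$; since, for the fixed optimal box $R^*_\rho$, the optimal slack $\xi^*_{\rho,i}$ equals the coordinatewise distance of $\sol(u_i)$ to $R^*_\rho$, we get $\xi^*_{\rho,i} = 0$. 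For (i), the same computation shows every constraint indexed by $\mathcal{X}$ is strictly slack at the optimum of $\mathfrak{L}^\rho_\mathcal{U}$; dropping all of them keeps $R^*_\rho$ feasible, and since $\mathcal{I}$ avoids the solution vectors that support $\partial[\munderbar{x}^*_\rho, \bar{x}^*_\rho]$, it cannot enable a strictly smaller objective, so by the tie-break behind \cref{def:Uniqueness} the region $R^*_\rho$ remains the unique optimizer of $\mathfrak{L}^\rho_\mathcal{W}$.

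The remaining step --- proving $\mathcal{I} \subseteq [\munderbar{x}^*_\rho, \bar{x}^*_\rho]$ --- is the heart of the argument, and I would do it with the coordinatewise decomposition already used for \cref{thm:subset_relation}: because the objectives and constraints of $\mathfrak{L}^\rho_\mathcal{U}$ and $\mathfrak{G}^\rho_\mathcal{U}$ are built from the $1$-norm, both problems separate into independent one-dimensional problems, one per coordinate $j$. In dimension $j$, as \cref{example:IntuitionRho} illustrates, the upper (resp.\ lower) endpoint of the precise box is the $(r{+}1)$-st largest (resp.\ smallest) value in $\{\sol_j(u_i)\}_{i=1}^n$, where the integer $r$ is determined by $\rho$ alone; the endpoints of the conservative box $[\munderbar{x}'_\rho, \bar{x}'_\rho]$ are the corresponding order statistics of $\{\sol^+_j(u_i)\}_i$ and $\{\sol^-_j(u_i)\}_i$ (with the \emph{same} $r$). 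Monotonicity of order statistics under $\sol^-\le\sol\le\sol^+$ re-proves $[\munderbar{x}^*_\rho, \bar{x}^*_\rho] \subseteq [\munderbar{x}'_\rho, \bar{x}'_\rho]$ and, crucially, lets me locate $\mathcal{I}$: the samples whose $\sol^\pm_j$ realize the relevant order statistics have their entire imprecise boxes touching $\partial[\munderbar{x}'_\rho, \bar{x}'_\rho]$, hence lie in $\mathcal{B}$ and block $\mathcal{I}$ from protruding. Counting that at least $r{+}1$ samples lie weakly below (resp.\ weakly above) each endpoint of $\mathcal{I}$ in the $\sol_j$-order then forces that endpoint between $\munderbar{x}^*_{\rho,j}$ and $\bar{x}^*_{\rho,j}$; reassembling coordinates yields $\mathcal{I} \subseteq [\munderbar{x}^*_\rho, \bar{x}^*_\rho]$.

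I expect the main obstacle to be precisely this last counting, because $\mathcal{I}$ is defined as a largest-\emph{volume} rectangle and is therefore a genuinely multi-dimensional object, whereas $\mathfrak{L}^\rho_\mathcal{U}$, $\mathfrak{G}^\rho_\mathcal{U}$ and the order-statistics picture are per-coordinate. One must rule out that the ``largest-volume'' choice of $\mathcal{I}$ buys room in one coordinate at the cost of sticking out past $[\munderbar{x}^*_\rho, \bar{x}^*_\rho]$ in another --- which is exactly where the $\mathcal{B}$-samples of \cref{def:SurelyNoncritical}, and the fact that there are at least $r{+}1$ of them obstructing each face, do the work. Degenerate configurations, where some $\sol(u_i)$ happens to land exactly on $\partial[\munderbar{x}^*_\rho, \bar{x}^*_\rho]$, are absorbed into the tie-break rule of \cref{def:Uniqueness}, exactly as in the precise case.
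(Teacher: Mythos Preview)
Your proposal is correct and rests on the same geometric fact the paper uses: every surely noncritical sample has its precise solution in $\mathcal{I}$, and $\mathcal{I}$ lies strictly inside the (unknown) precise box $[\munderbar{x}^*_\rho,\bar{x}^*_\rho]$, so such samples are neither relaxed nor on the boundary. The paper's proof is a touch more economical in its logical framing: instead of verifying that $\mathcal{W}=\sampleset\setminus\mathcal{X}$ is itself a critical set---which obliges you to also check your condition~(i), that $\mathfrak{L}^\rho_{\mathcal{W}}$ reproduces $R^*_\rho$---it argues directly that the \emph{smallest} critical set cannot contain any element of $\mathcal{X}$ (since members of a minimal critical set must lie outside or on the boundary of $R^*_\rho$), hence is a subset of $\sampleset\setminus\mathcal{X}$, giving the bound at once. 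Conversely, you do substantially more work than the paper to justify the key inclusion $\mathcal{I}\subseteq[\munderbar{x}^*_\rho,\bar{x}^*_\rho]$: the paper essentially asserts it in one line, whereas your coordinatewise order-statistics argument (reusing the decomposition from \cref{thm:subset_relation}) actually spells out why the $\mathcal{B}$-samples block $\mathcal{I}$ from protruding past each face of the precise box. That extra rigor is a genuine contribution, and your explicit flag about the largest-\emph{volume} definition of $\mathcal{I}$ interacting badly with the per-coordinate picture is exactly the subtlety one should worry about.
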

\noindent 
The proof is in 
\ifappendix
    \Cref{app:proof:lower_bound_complexity}.
\else
    \cite[App.~B.4]{Badings2022extended}.
\fi
For imprecise solutions, the bound in \cref{thm:lower_bound_complexity} is conservative but can potentially be improved, as discussed in the following.

\subsection{Solution refinement scheme}
\label{subsec:Refinement}
Often, we can \emph{refine} imprecise solutions arbitrarily (at the cost of an increased computation time).
Doing so, we can improve the prediction regions and upper bound on the complexity, which in turn improves the computed bound on the containment probability.
Specifically, we propose the following rule for refining solutions.
After solving $\mathfrak{G}_\mathcal{U}^\rho$ for a given set of imprecise solutions, we refine the solutions on the boundary of the obtained prediction region.
We then resolve problem $\mathfrak{G}_\mathcal{U}^\rho$, thus adding a loop back from (4) to (2) in our algorithm shown in \cref{fig:Approach}.
In our experiments, we demonstrate that with this refinement scheme, we iteratively improve our upper bound $d \geq c^*_\rho$ and the smallest superset $R \supseteq R^*_\rho$.
\section{Batch Verification for CTMCs}
\label{sec:Implementation}

One bottleneck in our method is to obtain the necessary number of solution vectors $\sol(\sampleset)$ by model checking.
The following improvements, while mild, are essential in our implementation and therefore deserve a brief discussion.

In general, computing $\sol(u)$ via model checking consists of two parts.
First, the high-level representation of the \gls{upCTMC} ---given in Prism~\cite{DBLP:conf/cav/KwiatkowskaNP11}, JANI~\cite{DBLP:conf/tacas/BuddeDHHJT17}, or a dynamic fault tree\footnote{Fault trees are a common formalism in reliability engineering~\cite{DBLP:journals/csr/RuijtersS15}.}--- is translated into a concrete CTMC $\instctmc{u}$. Then, from $\instctmc{u}$ we construct $\sol(u)$ using off-the-shelf algorithms~\cite{DBLP:journals/tse/BaierHHK03}.
We adapt the pipeline by tailoring the translation and the approximate analysis as outlined below.

Our implementation supports two methods for building the concrete \gls{CTMC} for a parameter sample:
(1)~by first instantiating the valuation in the specification and then building the resulting concrete \gls{CTMC}, or
(2)~by first building the \gls{pCTMC} $\pctmc$ (only once) and then instantiating it for each parameter sample to obtain the concrete \gls{CTMC}~$\instctmc{u}$.
Which method is faster depends on the specific model (we only report results for the fastest method in \cref{sec:Experiments} for brevity).

\smallpar{Partial models.}
To accelerate the time-consuming computation of solution vectors by model-checking on large models, it is natural to abstract the models into smaller models amenable to faster computations.
Similar to ideas used for dynamic fault trees~\cite{DBLP:journals/tii/VolkJK18} and infinite \glspl{CTMC}~\cite{DBLP:conf/vmcai/RobertsNBMZ22}, we employ an abstraction which only keeps the most relevant parts of a model, \ie, states with a sufficiently large probability to be reached from the initial state(s).
Analysis on this partial model then yields best- and worst-case results for each measure by assuming that all removed states are either target states (best case) or are not (worst case), respectively.
This method returns imprecise solution vectors as used in \cref{sec:Scenario_imprecise}, which can be refined up to an arbitrary precision by retaining more states of the original model.

Similar to building the complete models, two approaches are possible to create the partial models:
(1)~fixing the valuation and directly abstracting the concrete \gls{CTMC}, or
(2)~first building the complete \gls{pCTMC} and then abstracting the concrete \gls{CTMC}.
We reuse partial models for similar valuations to avoid costly computations.
We cluster parameter valuations which are close to each other (in Euclidean distance).
For parameter valuations within one cluster, we reuse the same partial model (in terms of the states), albeit instantiating it according to the precise valuation.
\section{Experiments}
\label{sec:Experiments}
We answer three questions about (a prototype implementation of) our approach:
\begin{compactenum}[\bfseries Q1.]
    \item Can we verify \glspl{CTMC} taking into account the uncertainty about the rates?
    \item How well does our approach scale w.r.t. the number of measures and samples?
    \item How does our approach compare to na\"ive baselines (to be defined below)?
\end{compactenum}

\smallpar{Setup.}
We implement our approach using the explicit engine of \storm~\cite{StormSTTT} and the improvements of Sec.~\ref{sec:Implementation} to sample from \glspl{upCTMC} in Python.
Our current implementation is limited to \gls{pCTMC} instantiations that are \emph{graph-preserving}, i.e. for any pair $s, s' \in S$ either $\rateFun(s, s')[u] = 0$ or $\rateFun(s, s')[u] > 0$ for all $u$.
We solve optimization problems using the ECOS solver~\cite{DBLP:conf/eucc/DomahidiCB13}.
All experiments ran single-threaded on a computer with 32 3.7 GHz cores and 64 GB RAM.
We show the effectiveness of our method on a large number of publicly available \gls{pCTMC}~\cite{DBLP:conf/tacas/HartmannsKPQR19} and fault tree benchmarks~\cite{ruijters2019ffort} across domains 
\ifappendix
    (details in \cref{sec:BenchmarkDetails}).
\else
    (details in \cite[App.~C]{Badings2022extended}).
\fi

{
\setlength{\tabcolsep}{3pt}
\begin{table*}[t!]

\centering
\caption{Excerpt of the benchmark statistics (sampling time is per $100$ \glspl{CTMC}).}

\begin{threeparttable}
\scalebox{0.90}{
\begin{tabular}{lrrrrrrrr}
	\hline
 \rule{0pt}{1.5ex} & & \multicolumn{3}{c}{{Model size}}
  & \multicolumn{2}{c}{{\storm run time [s]}} & \multicolumn{2}{c}{{Scen.opt. time [s]}} \\
  \cmidrule(lr){3-5} \cmidrule(lr){6-7} \cmidrule(lr){8-9}
  benchmark
  & $\vert\Phi\vert$
  & \#pars
  & \#states
  & \#trans 
  & Init.
  & Sample ($\times100$)
  & $N=100$
  & $N=200$
  \\
  \hline
% SIR (20)    & 26    & 2     & 216           & 396           & 0.02      & 2.93      & 2.44      & 10.18 \\
% SIR (60)    & 26    & 2     & 1\,876        & 3\,636        & 0.06      & 121.65    & 16.91     & 52.57 \\
% SIR (100)   & 26    & 2     & 5\,136        & 10\,076       & 0.15      & 829.47    & 19.35 & 62.74 \\
SIR (140)   & 26    & 2     & 9\,996        & 19\,716       & 0.29      & 2947.29   & 18.26     & 63.27  \\
SIR (140)\tnote{a} & 26 & 2 & 9\,996        & 19\,716       & 0.29      & 544.27    & 25.11     & 129.66  \\
Kanban (3)  & 4     & 13    & 58\,400       & 446\,400      & 4.42      & 46.95     & 2.28      & 6.69   \\
Kanban (5)  & 4     & 13    & 2\,546\,432   & 24\,460\,016  & 253.39    & 4363.63   & 2.03      & 5.94    \\
% polling (3) & 2     & 2     & 36            & 84            & 0.02      & 0.08      & 2.35      & 7.31     \\
polling (9) & 2     & 2     & 6\,912        & 36\,864       & 0.64      & 22.92     & 2.13      & 6.66     \\
% polling (15)& 2     & 2     & 737\,280      & 6\,144\,000   & 3\,908.13 & 9\,509.22 & 2.07      & --        \\
buffer      & 2     & 6     & 5\,632        & 21\,968       & 0.48      & 20.70     & 1.21      & 4.15      \\
% tandem (15) & 2     & 5     & 496           & 1\,619        & 0.03      & 82.01     & 1.67      & 5.36      \\
tandem (31) & 2     & 5     & 2\,016        & 6\,819        & 0.11      & 862.41    & 5.19      & 24.30     \\
% embed. (64) & 3     & 6     & 55\,868       & 235\,793      & 3.31      & 8.07      & 3.97      & 12.69     \\
% embed. (256) & 3    & 6     & 218\,108      & 920\,657      & 18.04     & 33.40     & 3.99      & 13.03     \\
%\hline
% pcs         & 25    & 6     & 2\,501        & 14\,985       & 0.03      & 37.20     & 3.51      & 13.07     \\
rbc         & 40    & 6     & 2\,269        & 12\,930       & 0.01      & 1.40      & 5.27      & 16.88     \\
% dcas        & 25    & 10    & 64            & 202           & 0.01      & 0.48      & 3.10      & 10.88     \\
rc (1,1)    & 25    & 21    & 8\,401        & 49\,446       & 27.20     & 74.90     & 5.75      & 20.34     \\
rc (1,1)\tnote{a}   & 25    & 21    & n/a\tnote{b}   & n/a\tnote{b}     & 0.02      & 2.35      & 29.23     & 150.61    \\
rc (2,2)\tnote{a}   & 25    & 29    & n/a\tnote{b}   & n/a\tnote{b}     & 0.03      & 27.77     & 24.86     & 132.63    \\
% hecs (2,1)  & 25    & 5     & 118\,945      & 1\,018\,603   & 0.04      & 6.42      & 3.87      & 13.18     \\
hecs (2,1)\tnote{a} & 25    & 5     & n/a\tnote{b} & n/a\tnote{b} & 0.02      & 9.83      & 26.78     & 145.77    \\
hecs (2,2)\tnote{a} & 25    & 24    & n/a\tnote{b} & n/a\tnote{b} & 0.02      & 194.25    & 33.06     & 184.32    \\
\hline
\end{tabular}
}
\begin{tablenotes}
        \raggedright
        \item[a]Computed using approximate model checking up to a relative gap between upper\\bound $\sol^+(u)$ and lower bound $\sol^-(u)$ below $1\%$ for every sample $u \in \paramspace$.
        \item[b]Model size is unknown, as the approximation does not build the full state-space.
\end{tablenotes}
\end{threeparttable}

\label{tab:model_information}
\end{table*}
}

\begin{figure}[t!]
\begin{minipage}{.48\textwidth}
  \centering
  \includegraphics[width=.85\linewidth]{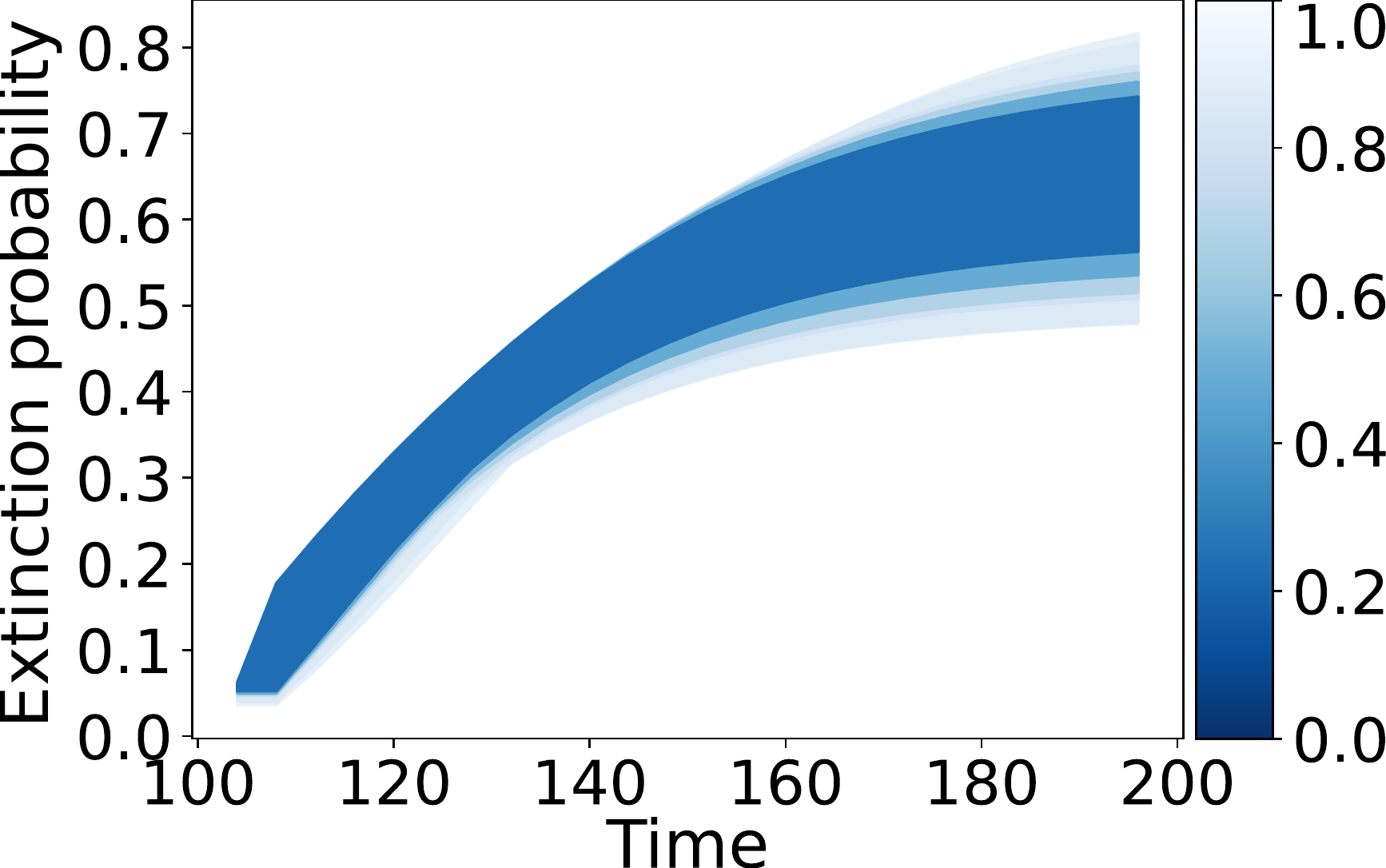}
  \captionof{figure}{Prediction regions for the SIR~(60) benchmark with $n=400$.}
  \label{fig:SIR_regions}   
\end{minipage}
\hfill
\begin{minipage}{.48\textwidth}
  \centering
  \includegraphics[width=.85\linewidth]{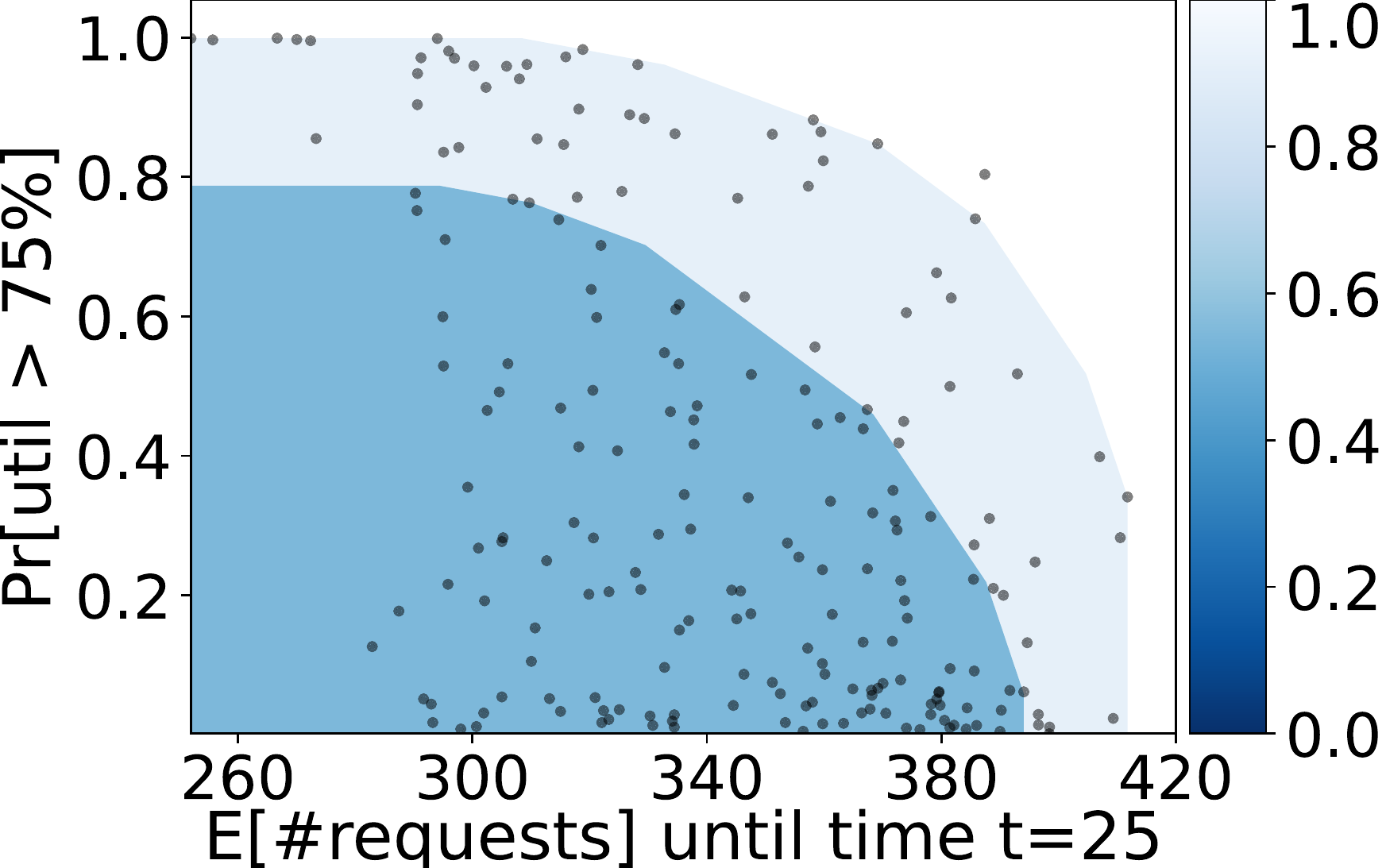}
  \captionof{figure}{Pareto front for the buffer benchmark with $n=200$ samples.}
  \label{fig:Buffer_pareto}
\end{minipage}
\end{figure}

\subsection*{Q1. Applicability}
An excerpt of the benchmark statistics is shown in \cref{tab:model_information} 
\ifappendix
    (see \cref{tab:model_information_full} in~\cref{sec:BenchmarkDetails} for the full table).
\else
    (see \cite[App.~C]{Badings2022extended} for the full table).
\fi
For all but the smallest benchmarks, sampling and computing the solution vectors by model checking is more expensive than solving the scenario problems. 
In the following, we illustrate that 100 samples are sufficient to provide qualitatively good prediction regions and associated lower bounds.

\smallpar{Plotting prediction regions.}
\cref{fig:SIR_regions} presents prediction regions on the extinction probability of the disease in the SIR model and is analogous to the tubes in \cref{fig:intro1c_confidence} 
\ifappendix
    (see \cref{fig:appendix_CTMC,fig:appendix_FT} in \cref{app:detailed_results} for plots for various other benchmarks).
\else
    (see \cite[App.~C.1]{Badings2022extended} for plots for various other benchmarks).
\fi
These regions are obtained by applying our algorithm with varying values for the cost of relaxation $\rho$.
For a confidence level of $\beta = 99\%$, the widest (smallest) tube in \cref{fig:SIR_regions} corresponds to a lower bound probability of $\mu = 91.1\%$ ($\mu=23.9\%$).
Thus, we conclude that, with a confidence of at least $99\%$, the curve created by the \gls{CTMC} for any sampled parameter value will lie within the outermost region in \cref{fig:SIR_regions} with a probability of at least $91.1\%$.
We highlight that our approach supports more general prediction regions. We show $n=200$ solution vectors for the buffer benchmark with two measures in \cref{fig:Buffer_pareto} and produce regions that approach the Pareto front.
For a confidence level of $\beta = 99\%$, the outer prediction region is associated with a lower bound probability of $\mu = 91.1\%$, while the inner region has a lower value of $\mu = 66.2\%$. 
We present more plots in 
\ifappendix
    \cref{app:detailed_results}.
\else
    \cite[App.~C.1]{Badings2022extended}.
\fi

\begin{table*}[t!]

\centering
\caption{Lower bounds $\bar{\mu}$ and standard deviation (SD), vs. the observed number of $1\,000$ additional solutions that indeed lie within the obtained regions.}

{\setlength{\tabcolsep}{3pt}

\begin{subtable}[h]{0.48\textwidth}
\caption{Kanban (3).}

\resizebox{\linewidth}{!}{%
\begin{tabular}{rrrrrl}
	\hline
 \rule{0pt}{1.5ex} 
  & \multicolumn{2}{c}{$\beta=0.9$}
  & \multicolumn{2}{c}{$\beta=0.999$}
  & Frequentist
  \\
  \cmidrule(lr){2-3}
  \cmidrule(lr){4-5}
  \cmidrule(lr){6-6}
  $n$
  & $\bar{\mu}$ & SD
  & $\bar{\mu}$ & SD
  & Observed
  \\
  \hline
100 & 0.862 & 0.000 & 0.798 & 0.000 & 959 $\pm$ 22.7 \\
200 & 0.930 & 0.000 & 0.895 & 0.000 & 967 $\pm$ 17.4 \\
400 & 0.965 & 0.001 & 0.947 & 0.001 & 984 $\pm$ 8.6 \\
800 & 0.982 & 0.000 & 0.973 & 0.000 & 994 $\pm$ 3.2 \\
\hline
\end{tabular}
}

\end{subtable}
\hfill
\begin{subtable}[h]{0.48\textwidth}
\caption{Railway crossing (1,1,hc).}

\resizebox{\linewidth}{!}{%
\begin{tabular}{rrrrrl}
	\hline
 \rule{0pt}{1.5ex} 
  & \multicolumn{2}{c}{$\beta=0.9$}
  & \multicolumn{2}{c}{$\beta=0.999$}
  & Frequentist
  \\
  \cmidrule(lr){2-3}
  \cmidrule(lr){4-5}
  \cmidrule(lr){6-6}
  $n$
  & $\bar{\mu}$ & SD
  & $\bar{\mu}$ & SD
  & Observed
  \\
  \hline
100 & 0.895 & 0.018 & 0.835 & 0.020 & 954 $\pm$ 26.8 \\
200 & 0.945 & 0.007 & 0.912 & 0.008 & 980 $\pm$ 12.8 \\
400 & 0.975 & 0.004 & 0.958 & 0.005 & 990 $\pm$ 8.3 \\
800 & 0.986 & 0.002 & 0.977 & 0.003 & 995 $\pm$ 4.3 \\
\hline
\end{tabular}
}

\end{subtable}

}

\label{tab:bound_tightness}
\end{table*}

\smallpar{Tightness of the solution.}
In \cref{tab:bound_tightness} we investigate the tightness of our results. For the experiment, we set $\rho=1.1$ and solve $\mathfrak{L}_\mathcal{U}^\rho$ for different values of $n$, repeating every experiment $10$ times, resulting in the average bounds $\bar{\mu}$.
Then, we sample $1\,000$ solutions and count the \emph{observed} number of solutions 
contained in every prediction regions, resulting in an empirical approximation of the containment probability.
Recall that for $\rho > 1$, we obtain a prediction region that contains all solutions, so this observed count grows toward $n$.
The lower bounds grow toward the empirical count for an increased $n$, with the smallest difference (RC, $n=800$, $\beta=0.9$) being as small as $0.9\%$. 
Similar observations hold for other values of $\rho$.

\smallpar{Handling imprecise solutions.}
The approximate model checker is significantly faster (see \cref{tab:model_information} for SIR (140) and RC), at the cost of obtaining imprecise solution vectors.\footnote{We terminate at a relative gap between upper/lower bound of the solution below 1\%.}
For SIR (140), the sampling time is reduced from $49$ to $\SI{9}{\minute}$, while the scenario optimization time is slightly higher at $\SI{129}{\second}$.
This difference only grows larger with the size of the \gls{CTMC}.
For the larger instances of RC and HECS, computing exact solutions is infeasible at all (one HECS (2,2) sample alone takes \SI{15}{\minute}).
While the bounds on the containment probability under imprecise solutions may initially be poor (see \cref{fig:Refinement_a}, which results in $\mu = 2.1\%$), we can improve the results significantly using the refinement scheme proposed in \cref{subsec:Refinement}.
For example, \cref{fig:Refinement_c} shows the prediction region after refining $31$ of the $100$ solutions, which yields $\mu = 74.7\%$.
Thus, \emph{by iteratively refining only the imprecise solutions on the boundary of the resulting prediction regions, we significantly tighten the obtained bounds on the containment probability}.

\begin{figure}[b]
    \subfloat[No solutions refined.]{%
    \includegraphics[height=3cm]{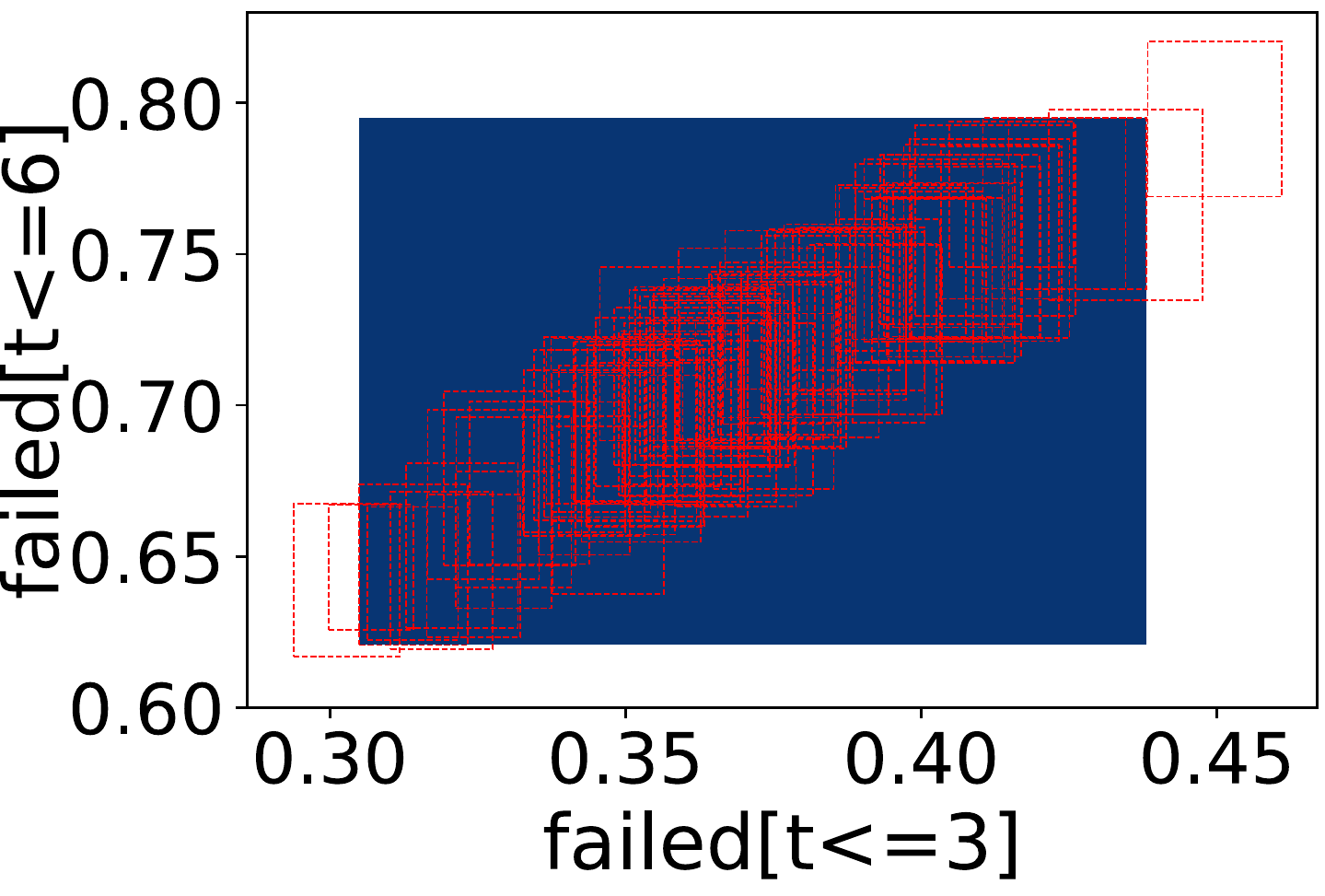}
    \label{fig:Refinement_a}}
    \subfloat[Intermediate step.]{%
    \includegraphics[height=3cm]{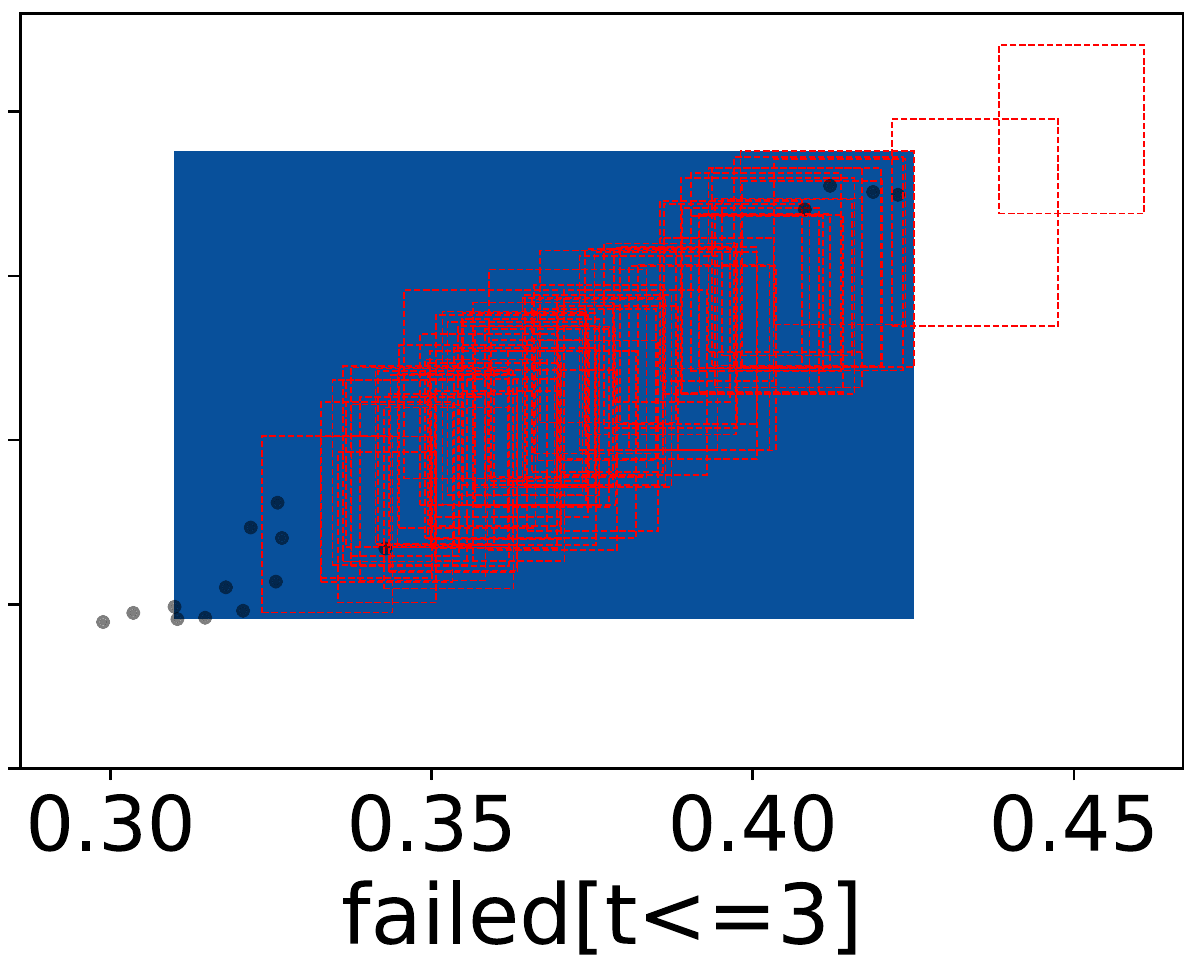}
    \label{fig:Refinement_b}}
    \subfloat[31 refined solutions.]{%
    \includegraphics[height=3cm]{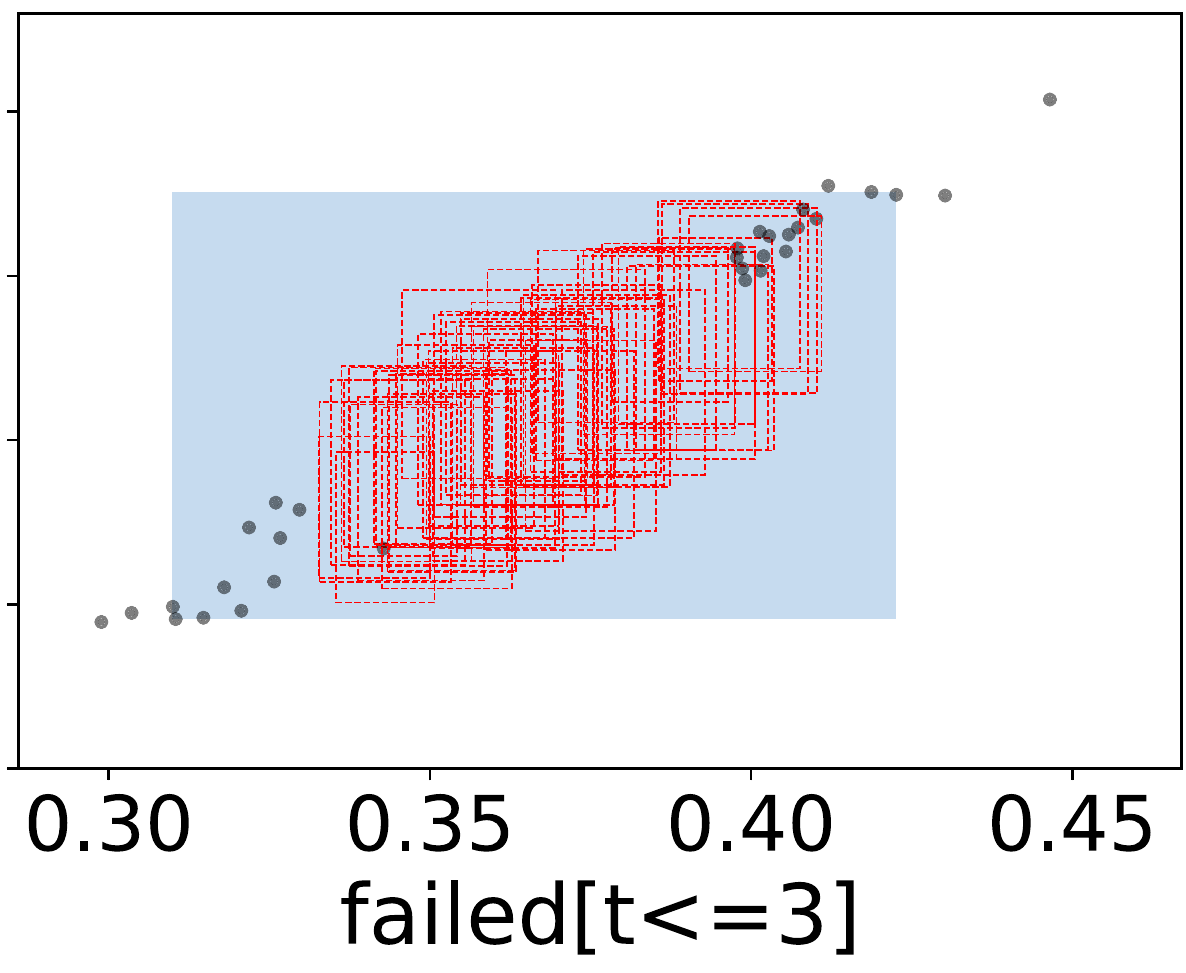}
    \label{fig:Refinement_c}}
    \caption{Refining imprecise solution vectors (red boxes) for RC (2,2), $n=100$.}
    \label{fig:Refinement}
\end{figure}

\begin{table*}[t!]

\centering
\caption{Run times in [s] for solving the scenario problems for SIR and RC with $\rho = 0.1$ (timeout (TO) of $1$ hour) for different sample sizes $n$ and measures $m$.}

{\setlength{\tabcolsep}{3pt}

\begin{subtable}[h]{0.48\textwidth}
\caption{SIR (population 20).}

\scalebox{0.9}{
\begin{tabular}{lrrrrr}
    \hline
    $n$ / $m$ & 50 & 100 & 200 & 400 & 800
    \\
    \hline
    %50  &   0.39	    & 0.63      & 1.22	    & 3.03	    & 9.12 \\
    100 &   0.97	    & 1.59	    & 3.36	    & 9.17	    & 25.41 \\
    200 &   3.69	    & 7.30	    & 22.91	    & 59.45	    & 131.78 \\
    400 &   29.43	    & 76.13	    & 153.03	& 310.67	& 640.70 \\
    800 &   261.97    & 491.73	& 955.77	& 1924.15   & TO \\
    \hline
\end{tabular}
}

\end{subtable}
\hfill
\centering
\begin{subtable}[h]{0.48\textwidth}
\caption{Railway crossing (1,1,hc).}

\scalebox{0.9}{
\begin{tabular}{lrrrr}
    \hline
    $n$ / $m$ & 50 & 100 & 200 & 400 %& 800
    \\
    \hline
    %50  &   0.66	    & 1.16	    & 2.44      & 6.24 \\
    100 &   1.84	    & 3.40	    & 8.18      & 24.14 \\
    200 &   6.35	    & 14.56	    & 45.09     & 113.09 \\
    400 &   34.74       & 96.68	    & 203.77    & 427.80 \\
    800 &   292.32	    & 579.09	& 1215.67   & 2553.98 \\
    \hline
\end{tabular}
}

\end{subtable}

}

\label{tab:scalability_scenario}
\end{table*}

\subsection*{Q2. Scalability}
In \cref{tab:scalability_scenario}, we report the run times for steps (3)-(5) of our algorithm shown in \cref{fig:Approach} (i.e., for solving the scenario problems, but not for computing the solution vectors in \storm).
Here, we solve problem $\mathfrak{L}_\mathcal{U}^\rho$ for $\rho = 0.1$, with different numbers of samples and measures.
Our approach scales well to realistic numbers of samples (up to $800$) and measures (up to $400$).
The computational complexity of the scenario problems is largely \emph{independent of the size of the \gls{CTMC}}, and hence, similar run times are observed across the benchmarks (cf. \cref{tab:model_information}).

\subsection*{Q3. Comparison to baselines}
\label{subsec:eval_baselines}
We compare against two baselines:
(1)~Scenario optimization to analyze each measure independently, yielding a separate probabilistic guarantee on each measure.
(2)~A frequentist (Monte Carlo) baseline, which samples a large number of parameter values and counts the number of associated solutions within a region. 

\smallpar{Analyzing measures independently.}
To show that analyzing a full \emph{set of measures} at once, e.g., the complete probability curve, is essential, we compare our method to the baseline that analyzes \emph{each measure independently} and combines the obtained bounds on each measure afterward.
We consider the PCS benchmark with precise samples and solve $\mathfrak{L}_\mathcal{U}^\rho$ for $\rho=2$ 
\ifappendix
    (see \cref{tab:naive_scenario_baseline} in \cref{sec:BenchmarkDetails} for details).
\else
    (see \cite[Tab.~5]{Badings2022extended} for details).
\fi
For $n=100$ samples and $\beta=99\%$, our approach returns a lower bound probability of $\mu=84.8\%$.
By contrast, the na\"ive baseline yields a lower bound of only $4.5\%$, and similar results are observed for different values of $n$ 
\ifappendix
    (cf.~\cref{tab:naive_scenario_baseline}).
\else
    (cf.~\cite[Tab.~5~in~App.~C]{Badings2022extended}).
\fi
There are two reasons for this large difference.
First, the baseline applies \cref{thm:lower_bound_complexity} once for each of the $25$ measures, so it must use a more conservative confidence level of $\tilde{\beta} = 1 - \frac{1 - \beta}{25} = 0.9996$.
Second, the baseline takes the conjunction over the $25$ independent lower bounds, which drastically reduces the obtained bound.

\smallpar{Frequentist baseline.}
The comparison to the frequentist baseline on the Kanban and RC benchmarks yields the previously discussed results in \cref{tab:bound_tightness}.
The results in \cref{tab:model_information,tab:scalability_scenario} show that \emph{the time spent for sampling is (for most benchmarks) significantly higher than for scenario optimization.}
Thus, our scenario-based approach has a relatively low cost, while resulting in valuable guarantees which the baseline does not give.
To still obtain a high confidence in the result, a much larger sample size is needed for the frequentist baseline than for our approach. 
\section{Related Work}
\label{sec:Related}
Several verification approaches exist to handle uncertain Markov models.

For (discrete-time) \emph{interval} Markov chains (DTMCs) or \glspl{MDP}, a number of approaches verify against all probabilities within the intervals~\cite{DBLP:conf/lics/JonssonL91,DBLP:journals/ai/GivanLD00, Koushik-uncertainties,DBLP:journals/ijar/Skulj09,DBLP:conf/cav/PuggelliLSS13}.
Lumpability of interval CTMCs is considered in~\cite{DBLP:conf/qest/CardelliGLTTV21}.
In contrast to upCTMCs, interval Markov chains have no dependencies between transition uncertainties and no distributions are attached to the intervals. 

\emph{Parametric} Markov models generally define probabilities or rates via functions over the parameters.
The standard parameter synthesis problem for discrete-time models is to find all valuations of parameters that satisfies a specification. 
Techniques range from computing a solution function over the parameters, to directly solving the underlying optimization problems~\cite{daws2004symbolic,DBLP:journals/sttt/HahnHZ11,DBLP:journals/corr/abs-1903-07993,9640499}.
Parametric \glspl{CTMC} are investigated in \cite{DBLP:conf/rtss/HanKM08,DBLP:journals/acta/CeskaDPKB17}, but are generally restricted to a few parameters.
The work~\cite{DBLP:journals/jss/CalinescuCGKP18} aims to find a robust parameter valuation in \glspl{pCTMC}.

For all approaches listed so far, the results may be rather conservative, as no prior information on the uncertainties (the intervals) is used.
That is, the uncertainty is not quantified and all probabilities or rates are treated equally as likely.
In our approach, we do not compute solution functions, as the underlying methods are computationally expensive and usually restricted to a few parameters. 

Quantified uncertainty is studied in~\cite{DBLP:journals/sosym/MeedeniyaMAG14}.
Similarly to our work, the approach draws parameter values from a probability distribution over the model parameters and analyzes the instantiated model via model checking. 
However, \cite{DBLP:journals/sosym/MeedeniyaMAG14} 
studies DTMCs and performs a frequentist (Monte Carlo)
approach, \cf \cref{sec:Experiments}, to compute estimates for a single measure, without prediction regions.
Moreover, our approach requires significantly fewer samples, \cf the comparison in~\cref{subsec:eval_baselines}.

The work in \cite{DBLP:journals/iandc/BortolussiMS16,DBLP:conf/tacas/BortolussiS18} takes a sampling-driven Bayesian approach for \glspl{pCTMC}.
In particular, they take a prior on the solution function over a single measure and update it based on samples (potentially obtained via statistical model checking). 
We assume no prior on the solution function, and, as mentioned before, do not compute the solution function due to the expensive underlying computations.

\emph{Statistical model checking} (SMC)~\cite{DBLP:journals/tomacs/AghaP18,DBLP:series/lncs/LegayLTYSG19} samples path in stochastic models to perform model checking.
This technique has been applied to numerous models~\cite{DBLP:journals/sttt/DavidLLMP15,DBLP:conf/cav/DavidLLMW11,DBLP:conf/isola/DArgenioHS18,DBLP:journals/ress/RaoGRKVS09}, including 
\glspl{CTMC}~\cite{DBLP:conf/cav/SenVA05,DBLP:journals/iandc/YounesS06}.
SMC analyzes a \emph{concrete} \gls{CTMC} by sampling from the known transition rates, whereas for \gls{upCTMC} these rates are parametric.

Finally, scenario optimization~\cite{DBLP:journals/siamjo/CampiG08,Campi2021scenarioMaking} is widely used in control theory~\cite{DBLP:journals/tac/CalafioreC06} and recently in 
machine learning~\cite{DBLP:conf/cdc/CampiG20} and reliability engineering~\cite{rocchetta2021scenario}.
Within a verification context, closest to our work is~\cite{Badings2021ScenarioMDPs}, which considers the verification of single measures for uncertain \glspl{MDP}.
\cite{Badings2021ScenarioMDPs} relies on the so-called sampling-and-discarding approach~\cite{DBLP:journals/jota/CampiG11}, while we use the risk-and-complexity perspective~\cite{garatti2019risk}, yielding better results for problems with many decision variables like we have.
\section{Conclusion}
\label{sec:Conclusion}
This paper presents a novel approach to the analysis of parametric Markov models with respect to a set of performance characteristics. 
In particular, we provide a method that yields statistical guarantees on the typical performance characteristics from a finite set of samples of those parameters. 
Our experiments show that high-confidence results can be given based on a few hundred of samples. 
Future work includes supporting models with nondeterminism, exploiting aspects of parametric models such as monotonicity, and integrating methods to infer the distributions on the parameter space from observations.

\bibliographystyle{splncs04}
\bibliography{references,refsScenarioApproach}

\begin{thebibliography}{10}
\providecommand{\url}[1]{\texttt{#1}}
\providecommand{\urlprefix}{URL }
\providecommand{\doi}[1]{https://doi.org/#1}

\bibitem{DBLP:journals/tomacs/AghaP18}
Agha, G., Palmskog, K.: A survey of statistical model checking. {ACM} Trans.
  Model. Comput. Simul.  \textbf{28}(1),  6:1--6:39 (2018)

\bibitem{ALLEN2017128}
Allen, L.J.: A primer on stochastic epidemic models: Formulation, numerical
  simulation, and analysis. Infectious Disease Modelling  \textbf{2}(2),
  128--142 (2017)

\bibitem{andersson2012stochastic}
Andersson, H., Britton, T.: Stochastic epidemic models and their statistical
  analysis, vol.~151. Springer Science \& Business Media (2012)

\bibitem{aziz2000model}
Aziz, A., Sanwal, K., Singhal, V., Brayton, R.: Model-checking continuous-time
  {M}arkov chains. ACM Transactions on Computational Logic  \textbf{1}(1),
  162--170 (2000)

\bibitem{Badings2021ScenarioMDPs}
Badings, T.S., Cubuktepe, M., Jansen, N., Junges, S., Katoen, J.P., Topcu, U.:
  Scenario-based verification of uncertain parametric {MDP}s. CoRR
  \textbf{abs/2112.13020} (2021)

\bibitem{DBLP:journals/tse/BaierHHK03}
Baier, C., Haverkort, B.R., Hermanns, H., Katoen, J.P.: Model-checking
  algorithms for continuous-time {M}arkov chains. {IEEE} Trans. Software Eng.
  \textbf{29}(6),  524--541 (2003)

\bibitem{BK08}
Baier, C., Katoen, J.P.: Principles of Model Checking. MIT Press (2008)

\bibitem{bertsekas2000introduction}
Bertsekas, D.P., Tsitsiklis, J.N.: Introduction to probability. Athena
  Scientinis (2000)

\bibitem{DBLP:journals/iandc/BortolussiMS16}
Bortolussi, L., Milios, D., Sanguinetti, G.: Smoothed model checking for
  uncertain continuous-time {M}arkov chains. Inf. Comput.  \textbf{247},
  235--253 (2016)

\bibitem{DBLP:conf/tacas/BortolussiS18}
Bortolussi, L., Silvetti, S.: Bayesian statistical parameter synthesis for
  linear temporal properties of stochastic models. In: {TACAS}. LNCS, vol.
  10806, pp. 396--413. Springer (2018)

\bibitem{boyd_convex_optimization}
Boyd, S., Vandenberghe, L.: Convex Optimization. Cambridge University Press,
  New York, NY, USA (2004)

\bibitem{DBLP:conf/tacas/BuddeDHHJT17}
Budde, C.E., Dehnert, C., Hahn, E.M., Hartmanns, A., Junges, S., Turrini, A.:
  {JANI:} quantitative model and tool interaction. In: {TACAS}. LNCS, vol.
  10206, pp. 151--168 (2017)

\bibitem{DBLP:journals/tac/CalafioreC06}
Calafiore, G.C., Campi, M.C.: The scenario approach to robust control design.
  {IEEE} Trans. Autom. Control.  \textbf{51}(5),  742--753 (2006)

\bibitem{DBLP:journals/jss/CalinescuCGKP18}
Calinescu, R., Ceska, M., Gerasimou, S., Kwiatkowska, M., Paoletti, N.:
  Efficient synthesis of robust models for stochastic systems. J. Syst. Softw.
  \textbf{143},  140--158 (2018)

\bibitem{DBLP:journals/siamjo/CampiG08}
Campi, M.C., Garatti, S.: The exact feasibility of randomized solutions of
  uncertain convex programs. {SIAM} J. Optim.  \textbf{19}(3),  1211--1230
  (2008)

\bibitem{DBLP:journals/jota/CampiG11}
Campi, M.C., Garatti, S.: A sampling-and-discarding approach to
  chance-constrained optimization: Feasibility and optimality. J. Optim. Theory
  Appl.  \textbf{148}(2),  257--280 (2011)

\bibitem{campi2018introduction}
Campi, M.C., Garatti, S.: Introduction to the scenario approach. SIAM (2018)

\bibitem{DBLP:journals/mp/CampiG18}
Campi, M.C., Garatti, S.: Wait-and-judge scenario optimization. Math. Program.
  \textbf{167}(1),  155--189 (2018)

\bibitem{DBLP:conf/cdc/CampiG20}
Campi, M.C., Garatti, S.: Scenario optimization with relaxation: a new tool for
  design and application to machine learning problems. In: {CDC}. pp.
  2463--2468. {IEEE} (2020)

\bibitem{Campi2021scenarioMaking}
Campi, M., Carè, A., Garatti, S.: The scenario approach: A tool at the service
  of data-driven decision making. Annual Reviews in Control  \textbf{52},
  1--17 (2021)

\bibitem{DBLP:conf/qest/CardelliGLTTV21}
Cardelli, L., Grosu, R., Larsen, K.G., Tribastone, M., Tschaikowski, M.,
  Vandin, A.: Lumpability for uncertain continuous-time {M}arkov chains. In:
  {QEST}. LNCS, vol. 12846, pp. 391--409. Springer (2021)

\bibitem{DBLP:journals/acta/CeskaDPKB17}
Ceska, M., Dannenberg, F., Paoletti, N., Kwiatkowska, M., Brim, L.: Precise
  parameter synthesis for stochastic biochemical systems. Acta Informatica
  \textbf{54}(6),  589--623 (2017)

\bibitem{ciardo1996use_KanbanBenchmark}
Ciardo, G., Tilgner, M.: On the use of {K}ronecker operators for the solution
  of generalized stochastic {P}etri nets. {ICASE} report 96-35. Institute for
  Computer Applications in Science and Engineering  \textbf{50} (1996)

\bibitem{9640499}
Cubuktepe, M., Jansen, N., Junges, S., Katoen, J.P., Topcu, U.: Convex
  optimization for parameter synthesis in {MDP}s. IEEE Trans Autom Control
  pp.~1--1 (2022)

\bibitem{DBLP:conf/isola/DArgenioHS18}
D'Argenio, P.R., Hartmanns, A., Sedwards, S.: Lightweight statistical model
  checking in nondeterministic continuous time. In: ISoLA. LNCS, vol. 11245,
  pp. 336--353. Springer (2018)

\bibitem{DBLP:journals/sttt/DavidLLMP15}
David, A., Larsen, K.G., Legay, A., Mikucionis, M., Poulsen, D.B.: Uppaal {SMC}
  tutorial. Int. J. Softw. Tools Technol. Transf.  \textbf{17}(4),  397--415
  (2015)

\bibitem{DBLP:conf/cav/DavidLLMW11}
David, A., Larsen, K.G., Legay, A., Mikucionis, M., Wang, Z.: Time for
  statistical model checking of real-time systems. In: {CAV}. LNCS, vol.~6806,
  pp. 349--355. Springer (2011)

\bibitem{daws2004symbolic}
Daws, C.: Symbolic and parametric model checking of discrete-time {M}arkov
  chains. In: {ICTAC}. pp. 280--294. Springer (2004)

\bibitem{DBLP:conf/eucc/DomahidiCB13}
Domahidi, A., Chu, E., Boyd, S.P.: {ECOS:} an {SOCP} solver for embedded
  systems. In: {ECC}. pp. 3071--3076. {IEEE} (2013)

\bibitem{garatti2021risk}
Garatti, S., Campi, M.C.: The risk of making decisions from data through the
  lens of the scenario approach. IFAC-PapersOnLine  \textbf{54}(7),  607--612
  (2021)

\bibitem{garatti2019risk}
Garatti, S., Campi, M.: Risk and complexity in scenario optimization.
  Mathematical Programming pp. 1--37 (2019)

\bibitem{DBLP:journals/ai/GivanLD00}
Givan, R., Leach, S.M., Dean, T.L.: Bounded-parameter {M}arkov decision
  processes. Artif. Intell.  \textbf{122}(1-2),  71--109 (2000)

\bibitem{DBLP:journals/sttt/HahnHZ11}
Hahn, E.M., Hermanns, H., Zhang, L.: Probabilistic reachability for parametric
  {M}arkov models. Int. J. Softw. Tools Technol. Transf.  \textbf{13}(1),
  3--19 (2011)

\bibitem{DBLP:conf/rtss/HanKM08}
Han, T., Katoen, J.P., Mereacre, A.: Approximate parameter synthesis for
  probabilistic time-bounded reachability. In: {RTSS}. pp. 173--182. {IEEE}
  {CS} (2008)

\bibitem{DBLP:conf/tacas/HartmannsKPQR19}
Hartmanns, A., Klauck, M., Parker, D., Quatmann, T., Ruijters, E.: The
  quantitative verification benchmark set. In: {TACAS}. LNCS, vol. 11427, pp.
  344--350. Springer (2019)

\bibitem{DBLP:conf/srds/HaverkortHK00}
Haverkort, B.R., Hermanns, H., Katoen, J.P.: On the use of model checking
  techniques for dependability evaluation. In: {SRDS}. pp. 228--237. {IEEE}
  {CS} (2000)

\bibitem{StormSTTT}
Hensel, C., Junges, S., Katoen, J.P., Quatmann, T., Volk, M.: The probabilistic
  model checker {S}torm. Softw Tools Technol Transfer  (2021)

\bibitem{hermanns1999multi}
Hermanns, H., Meyer-Kayser, J., Siegle, M.: Multi terminal binary decision
  diagrams to represent and analyse continuous time {M}arkov chains. In: 3rd
  Int. Workshop on the Numerical Solution of {M}arkov Chains. pp. 188--207.
  Citeseer (1999)

\bibitem{DBLP:conf/lics/JonssonL91}
Jonsson, B., Larsen, K.G.: Specification and refinement of probabilistic
  processes. In: {LICS}. pp. 266--277. {IEEE} {CS} (1991)

\bibitem{DBLP:journals/corr/abs-1903-07993}
Junges, S., {\'{A}}brah{\'{a}}m, E., Hensel, C., Jansen, N., Katoen, J.P.,
  Quatmann, T., Volk, M.: Parameter synthesis for {M}arkov models. CoRR
  \textbf{abs/1903.07993} (2019)

\bibitem{DBLP:conf/lics/Katoen16}
Katoen, J.P.: The probabilistic model checking landscape. In: {LICS}. pp.
  31--45. {ACM} (2016)

\bibitem{DBLP:conf/cav/KwiatkowskaNP11}
Kwiatkowska, M.Z., Norman, G., Parker, D.: {PRISM} 4.0: Verification of
  probabilistic real-time systems. In: {CAV}. LNCS, vol.~6806, pp. 585--591.
  Springer (2011)

\bibitem{DBLP:series/lncs/LegayLTYSG19}
Legay, A., Lukina, A., Traonouez, L.M., Yang, J., Smolka, S.A., Grosu, R.:
  Statistical model checking. In: Computing and Software Science, LNCS, vol.
  10000, pp. 478--504. Springer (2019)

\bibitem{DBLP:journals/sosym/MeedeniyaMAG14}
Meedeniya, I., Moser, I., Aleti, A., Grunske, L.: Evaluating probabilistic
  models with uncertain model parameters. Softw. Syst. Model.  \textbf{13}(4),
  1395--1415 (2014)

\bibitem{mendelson1990introduction}
Mendelson, B.: Introduction to topology. Courier Corporation (1990)

\bibitem{DBLP:conf/cav/PuggelliLSS13}
Puggelli, A., Li, W., Sangiovanni{-}Vincentelli, A.L., Seshia, S.A.:
  Polynomial-time verification of {PCTL} properties of {MDP}s with convex
  uncertainties. In: {CAV}. LNCS, vol.~8044, pp. 527--542. Springer (2013)

\bibitem{DBLP:journals/ress/RaoGRKVS09}
Rao, K.D., Gopika, V., Rao, V.V.S.S., Kushwaha, H.S., Verma, A.K., Srividya,
  A.: Dynamic fault tree analysis using monte carlo simulation in probabilistic
  safety assessment. Reliab. Eng. Syst. Saf.  \textbf{94}(4),  872--883 (2009)

\bibitem{DBLP:conf/vmcai/RobertsNBMZ22}
Roberts, R., Neupane, T., Buecherl, L., Myers, C.J., Zhang, Z.: {STAMINA} 2.0:
  Improving scalability of infinite-state stochastic model checking. In:
  {VMCAI}. LNCS, vol. 13182, pp. 319--331. Springer (2022)

\bibitem{rocchetta2021scenario}
Rocchetta, R., Crespo, L.G.: A scenario optimization approach to
  reliability-based and risk-based design: Soft-constrained modulation of
  failure probability bounds. Reliability Engineering \& System Safety
  \textbf{216},  107900 (2021)

\bibitem{ruijters2019ffort}
Ruijters, E., Budde, C.E., Nakhaee, M.C., Stoelinga, M.I.A., Bucur, D.,
  Hiemstra, D., Schivo, S.: {FFORT}: a benchmark suite for fault tree analysis.
  In: {ESREL} (2019)

\bibitem{DBLP:journals/csr/RuijtersS15}
Ruijters, E., Stoelinga, M.I.A.: Fault tree analysis: {A} survey of the
  state-of-the-art in modeling, analysis and tools. Comput. Sci. Rev.
  \textbf{15},  29--62 (2015)

\bibitem{DBLP:conf/cav/SenVA05}
Sen, K., Viswanathan, M., Agha, G.: On statistical model checking of stochastic
  systems. In: {CAV}. LNCS, vol.~3576, pp. 266--280. Springer (2005)

\bibitem{Koushik-uncertainties}
Sen, K., Viswanathan, M., Agha, G.: Model-checking {M}arkov chains in the
  presence of uncertainties. In: {TACAS}. pp. 394--410. Springer (2006)

\bibitem{DBLP:journals/ijar/Skulj09}
Skulj, D.: Discrete time {M}arkov chains with interval probabilities. Int. J.
  Approx. Reason.  \textbf{50}(8),  1314--1329 (2009)

\bibitem{DBLP:journals/tii/VolkJK18}
Volk, M., Junges, S., Katoen, J.P.: Fast dynamic fault tree analysis by model
  checking techniques. {IEEE} Trans. Ind. Informatics  \textbf{14}(1),
  370--379 (2018)

\bibitem{DBLP:conf/qest/WijesuriyaA19}
Wijesuriya, V.B., Abate, A.: Bayes-adaptive planning for data-efficient
  verification of uncertain {M}arkov decision processes. In: {QEST}. LNCS, vol.
  11785, pp. 91--108. Springer (2019)

\bibitem{DBLP:journals/iandc/YounesS06}
Younes, H.L.S., Simmons, R.G.: Statistical probabilistic model checking with a
  focus on time-bounded properties. Inf. Comput.  \textbf{204}(9),  1368--1409
  (2006)

\end{thebibliography}

\ifappendix
    \newpage
    \appendix
    \section{Additional Examples}
\label{app:Examples}

In the following example, we discuss the results shown in \cref{fig:confidence_2D_relaxed} in more detail.

\begin{example}
    \label{example:ViolationRho}
    We consider the Kanban manufacturing system benchmark from~\cite{ciardo1996use_KanbanBenchmark} with a Gaussian distribution over the parameters.
    In \cref{fig:confidence_2D_relaxed}, we present $n=25$ solution vectors for two expected cost measures.
    We use these solutions in problem $\mathfrak{L}^\rho_\mathcal{U}$ in \cref{eq:ScenarioProblem}, and solve for $\rho = 2$, $0.4$, and $0.15$.
    We show the three resulting prediction regions in \cref{fig:confidence_2D_relaxed}.
    For $\rho=2$, the prediction region contains all vectors, while for a lower cost of relaxation $\rho$, more vectors are left outside.
    \qed
\end{example}
    \section{Proofs}

\subsection{Proof of \cref{thm:RiskAndComplexity}}
\label{app:proof:thm1}

The proof of \cref{thm:RiskAndComplexity} is based on \cite{Campi2021scenarioMaking}, which states that for a complexity $c^*_\rho$ and for $\eta(c^*_\rho)$ the smallest positive solution to \cref{eq:RiskAndComplexity_polynomial}, it holds that 
\begin{equation}
    \label{eq:RiskAndComplexity_proof1}
    \amsmathbb{P}^n \Big\{ 
        V\big(R^*_\rho) \leq 1-\eta(c^*_\rho)
    \Big\}
    \geq \beta,
\end{equation}
where $V(R^*_\rho)$ is the so-called \emph{violation probability}, which is defined as
\begin{equation}
    \label{eq:RiskAndComplexity_proof2}
    V(R^*_\rho) = 
    \Pr \{ 
        u \in \paramspace \,\colon\, 
            \sol(u) \notin R^*_\rho
    \}.
\end{equation}
Observe that $\satprob(R^*_\rho) + V(R^*_\rho) = 1$.
Thus, we rewrite \cref{eq:RiskAndComplexity_proof1} as
\begin{equation}
    \label{eq:RiskAndComplexity_proof3}
    \amsmathbb{P}^n \Big\{ 
        \satprob(R^*_\rho) \geq \eta(c^*_\rho)
    \Big\}
    \geq \beta.
\end{equation}
Note that $\eta(c)$ is monotonically decreasing in $c$~\cite{garatti2021risk}, so for any $d^*_\rho \geq c^*_\rho$, we have $\eta(d^*_\rho) \leq \eta(c^*_\rho)$.
Hence, \cref{eq:RiskAndComplexity_proof3} also implies \cref{eq:RiskAndComplexity_bound}, which concludes the proof.

\subsection{Proof of \cref{lemma:Implication}}
\label{app:proof:implication}

Recall from the proof of \cref{thm:RiskAndComplexity} that for $\eta(c^*_\rho)$ the solution to \cref{eq:RiskAndComplexity_polynomial}, where $c^*_\rho$ is the true complexity of problem $\mathfrak{L}^\rho_\mathcal{U}$, it holds that
\begin{equation}
    \label{eq:ImpreciseSolutions_implication_proof1}
    \amsmathbb{P}^n \Big\{ 
        \satprob\big(R^*_\rho\big) \geq \eta(c^*_\rho)
    \Big\}
    \geq \beta.
\end{equation}
Observe that for any two sets $R^*_\rho \subseteq R$, we have $\satprob\big(R\big) \geq \satprob\big(R^*_\rho\big)$.
Moreover, recall that $\eta(c)$ is monotonically decreasing in $c$ (as also observed visually from \cref{fig:exampleBounds}), and thus, the condition $c^*_\rho \leq d$ implies that $\eta(d) \leq \eta(c_\rho^*)$.
Hence, under the proposed conditions, we rewrite \cref{eq:ImpreciseSolutions_implication_proof1} as the right-hand side of \cref{eq:ImpreciseSolutions_implication}, which concludes the proof.

\subsection{Proof of \cref{thm:subset_relation}}
\label{app:proof:subsetrelation}

\smallsubsubsection{The one-dimensional case.}
Let us first consider the case for one dimension, i.e., one measure. 
Recall from \cref{example:IntuitionRho} that for precise solutions in 1D, the \emph{outermost} two samples (labeled A and F) are relaxed under the (unique) optimal solution if $\rho < 1$, samples B and E if $\rho < \frac{1}{2}$, etc.
Denote by $\sol^-(u)_r = \sol^{-,\varphi_r}_{\instctmc{u}}$ and $\sol^+(u)_r = \sol^{+,\varphi_r}_{\instctmc{u}}$ the $r$-th entries of the respective imprecise solution vectors, \ie, the (upper and lower bound) solutions for measure $\varphi_r \in \Phi$ and the \gls{CTMC} created by sample $u \in \paramspace$.
In formalizing the relationship between the value of $\rho$ and whether a sample is relaxed, we state the following definition: 
\begin{definition}
    \label{def:solution_counts}
    Let $r \leq |\Phi|$. 
    For any $(\sol(u_i)^-, \sol(u_i)^+)$, we define $\thresh^+_\geq(u_i)_r \in \{1,\ldots,n\}$ and $\thresh^-_\leq(u_i)_r \in \{1,\ldots,n\}$ as the number of samples whose upper bound is at least $\sol^+(u_i)$ (or at most $\sol^-(u_i)$), when projected to dimension $r$: 
    \begin{align}
    \thresh^+_{\geq}(u_i)_r &= \big\vert \{ u_j \in \paramspace \, \colon \, \sol^+(u_j)_r \geq \sol^+(u_i)_r \} \big\vert
    \nonumber
    \\
    \thresh^-_{\leq}(u_i)_r &= \big\vert \{ u_j \in \paramspace \, \colon \, \sol^-(u_j)_r \leq \sol^-(u_i)_r \} \big\vert.
    \nonumber
    \end{align} 
\end{definition}

\begin{figure}[t!]
\centering
\iftikzcompile
      \resizebox{.42\linewidth}{!}{%
\begin{tikzpicture}[yscale=1.6]

\def\Z{0cm}
\def\hZ{0.3cm}

\def\A{0.2cm}
\def\hA{0.6cm}

\def\B{0.4cm}
\def\hB{1.0cm}

\def\C{0.8cm}
\def\hC{1.4cm}

\def\D{1.2cm}
\def\hD{1.8cm}

\def\E{2.0cm}
\def\hE{2.3cm}

\fill[MidnightBlue!20] (-0.3cm,\Z) rectangle (0cm,\hZ);
\fill[MidnightBlue!20] (0cm,\A) rectangle (0.3cm,\hA);
\fill[MidnightBlue!20] (-0.3cm,\B) rectangle (0cm,\hB);
\fill[MidnightBlue!20] (0cm,\C) rectangle (0.3cm,\hC);
\fill[MidnightBlue!20] (-0.3cm,\D) rectangle (0cm,\hD);
\fill[MidnightBlue!20] (0cm,\E) rectangle (0.3cm,\hE);

\draw[latex-] (0.3cm,\hE) -- (0.6cm,\hE) node[right, align=left] 
{${\sol}^+(u_1)$ \\  $\thresh^+_\geq(u_1) = 1$};
\draw[latex-] (-0.3cm,\hD) -- (-0.6cm,\hD) node[left, align=right] 
{${\sol}^+(u_2)$ \\  $\thresh^+_\geq(u_2) = 2$};
\draw[latex-] (0.3cm,\hC) -- (0.6cm,\hC) node[right, align=left] 
{${\sol}^+(u_3)$ \\  $\thresh^+_\geq(u_3) = 3$};
\draw[latex-] (-0.3cm,\hB) -- (-0.6cm,\hB) node[left, align=right] 
{${\sol}^+(u_4)$ \\  $\thresh^+_\geq(u_4) = 4$};
\draw[latex-] (0.3cm,\hA) -- (0.6cm,\hA) node[right, align=left] 
{${\sol}^+(u_5)$ \\ $\thresh^+_\geq(u_5) = 5$};
\draw[latex-] (-0.3cm,\hZ) -- (-0.6cm,\hZ) node[left, align=right] 
{${\sol}^+(u_6)$ \\ $\thresh^+_\geq(u_6) = 6$};

\draw[dashed,color=red!80] (0cm,\hB) -- (1.6cm,\hB) node[right] {$\bar{x}^+_\rho$};

\draw[-latex] (0cm,0.2cm) -- (0cm,2.5cm) node[pos=.055, below] {$\vdots$};

\end{tikzpicture}
}
\fi
  \captionof{figure}{The upper bounds of five imprecise solutions and their values of $\sol^+_\geq(u).$}
  \label{fig:1D_example_imprecise_num}
\end{figure}
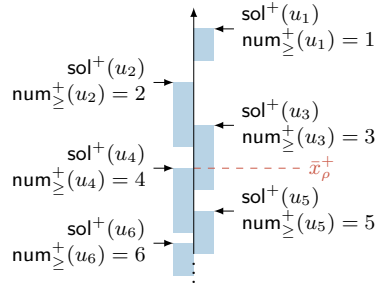

\noindent
\cref{fig:1D_example_imprecise_num} shows an extended version of \cref{fig:1D_example_imprecise} which also presents the values of $\thresh^+_{\geq}(u)$ for the upper bounds on the imprecise solutions.
The following lemma then characterizes the relaxed solutions:
\begin{lemma}
    \label{lemma:rho_relaxation}
    An imprecise solution $(\sol^-(u), \sol^+(u)), \, u \in \paramspace$ is not contained in the prediction region $[\munderbar{x}'_\rho, \bar{x}'_\rho]$ of $\mathfrak{G}^\rho_\mathcal{U}$, projected to dimension $p \leq \vert\Phi\vert$, if
    \begin{equation}
        \label{eq:rho_relaxation}
        \rho < \min\Big\{ 
        \thresh^+_\geq(u)_p, \,\, 
        \thresh^-_\leq(u)_p
    \Big\}^{-1}.
    \end{equation}
\end{lemma}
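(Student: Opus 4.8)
The plan is to reduce \cref{lemma:rho_relaxation} to a one-dimensional statement and then prove the latter by a local perturbation argument on the optimal solution of $\mathfrak{G}^\rho_\mathcal{U}$. First I would use that, because the objective $\|\bar x - \munderbar x\|_1 + \rho\sum_i\|\xi_i\|_1$ and the constraints of $\mathfrak{G}^\rho_\mathcal{U}$ are all coordinate-wise, the program decomposes into $\vert\Phi\vert$ independent scalar programs, one per dimension $p$. Fixing $p$ and writing $a_i = \sol^-(u_i)_p$, $b_i = \sol^+(u_i)_p$ (so $a_i \le b_i$), the $p$-th subproblem, after substituting the optimal slacks $\xi_{i,p} = \max(0,\,\munderbar x_p - a_i,\,b_i - \bar x_p)$, reads $\min_{\munderbar x_p \le \bar x_p} (\bar x_p - \munderbar x_p) + \rho\sum_i\max(0,\,\munderbar x_p - a_i,\,b_i - \bar x_p)$; one checks that $\munderbar x_p \le \bar x_p$ may be assumed at an optimum, since a crossed box is strictly dominated by collapsing it to its midpoint. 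In this formulation, sample $u$ fails to be contained in dimension $p$ exactly when $\bar x'_{\rho,p} < \sol^+(u)_p$ or $\munderbar x'_{\rho,p} > \sol^-(u)_p$. It therefore suffices to show the two symmetric implications $\rho\cdot\thresh^+_{\geq}(u)_p < 1 \implies \bar x'_{\rho,p} < \sol^+(u)_p$ and $\rho\cdot\thresh^-_{\leq}(u)_p < 1 \implies \munderbar x'_{\rho,p} > \sol^-(u)_p$; since $\rho < \min\{\thresh^+_\geq(u)_p, \thresh^-_\leq(u)_p\}^{-1}$ is equivalent to at least one of these two antecedents holding, either implication yields the lemma.

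For the first implication I would argue by contradiction. Assume $\bar x'_{\rho,p} \ge b := \sol^+(u)_p$; if in addition $\munderbar x'_{\rho,p} > \sol^-(u)_p$ we are already done, so assume $\munderbar x'_{\rho,p} \le \sol^-(u)_p \le b$. Put $k := \thresh^+_\geq(u)_p = \vert\{i : b_i \ge b\}\vert$ and let $b'$ be the largest element of $\{b_i : b_i < b\} \cup \{\munderbar x'_{\rho,p}\}$, which lies strictly below $b$ in the non-degenerate case. Choose $\epsilon$ with $0 < \epsilon < \bar x'_{\rho,p} - b'$ and replace $\bar x_p$ by $\bar x'_{\rho,p} - \epsilon$, re-optimizing the slacks; this keeps $\munderbar x_p \le \bar x_p$. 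The width term decreases by exactly $\epsilon$. For each $i$, the only term inside $\max(0,\,\munderbar x_p - a_i,\,b_i - \bar x_p)$ that can grow is $b_i - \bar x_p$, and it grows by $\epsilon$, so $\xi_{i,p}$ grows by at most $\epsilon$; moreover it grows at all only if $b_i > \bar x'_{\rho,p} - \epsilon > b'$, that is, only for the $k$ samples with $b_i \ge b$. Hence the objective changes by at most $-\epsilon + \rho k\epsilon = (\rho k - 1)\epsilon < 0$, contradicting optimality. The second implication follows by the mirror argument, increasing $\munderbar x_p$ instead.

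The main obstacle I anticipate is the bookkeeping around the slack $\xi_i$, which is shared between the lower and the upper constraint of sample $i$: one must verify that decreasing $\bar x_p$ never forces $\xi_{i,p}$ up by more than $\epsilon$ even when the term $\munderbar x_p - a_i$ is currently active, and that the perturbation stays feasible. A second delicate point is the degenerate case in which the optimal box collapses to a single point equal to $\sol^+(u)_p$, which forces $\sol^-(u)_p = \sol^+(u)_p$ and leaves no room for a strict perturbation; here I would appeal to the uniqueness/tie-break convention of \cref{def:Uniqueness} (or a genericity assumption on the samples) to exclude it. The remaining ingredients --- the coordinate-wise decomposition, and the fact (already highlighted in the discussion of \cref{thm:subset_relation}) that whether a sample is relaxed depends only on the \emph{rank} of its bound rather than its precise value --- become routine once the one-dimensional reduction is in place.
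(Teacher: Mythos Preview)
The paper does not actually supply a proof of \cref{lemma:rho_relaxation}: in \cref{app:proof:subsetrelation} the lemma is stated as a characterization of the relaxed samples, motivated by the one-dimensional picture of \cref{example:IntuitionRho} and \cref{fig:1D_example_imprecise_num}, and then used directly in the proof of \cref{thm:subset_relation}. So there is no paper argument to compare against; your proposal fills a gap the paper leaves open.

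Your approach is sound and matches the paper's structural intuition. The coordinate-wise decomposition you use is exactly the content of \cref{remark:Independence_per_measure}, and your perturbation argument makes precise the rank-based observation the paper alludes to below \cref{thm:subset_relation}. The bookkeeping you flag for the shared slack $\xi_{i,p}=\max(0,\munderbar x_p-a_i,b_i-\bar x_p)$ goes through: since $\max$ is $1$-Lipschitz in each argument, lowering $\bar x_p$ by $\epsilon$ can raise $\xi_{i,p}$ by at most $\epsilon$, regardless of which branch was previously active.

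One structural remark: the pair of implications you announce is slightly stronger than what your argument establishes. In the proof of the first implication you add the side assumption $\munderbar x'_{\rho,p}\le \sol^-(u)_p$, and in the complementary case you appeal directly to the lemma's conclusion rather than to $\bar x'_{\rho,p}<\sol^+(u)_p$. So what you actually prove is ``$\rho\,\thresh^+_\geq(u)_p<1$ and the sample is contained $\Rightarrow$ contradiction'', which is exactly what the lemma needs, but not quite the clean implication $\rho\,\thresh^+_\geq(u)_p<1\Rightarrow \bar x'_{\rho,p}<\sol^+(u)_p$ as stated. This is cosmetic; just rephrase the two implications accordingly. Your identification of the degenerate collapsed-box case and the appeal to \cref{def:Uniqueness} is appropriate and consistent with how the paper handles non-uniqueness.
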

\noindent
For precise solutions, there is no distinction between $\sol^-(u_i)$ and $\sol^+(u_i)$, and \cref{def:solution_counts} yields the sets $\thresh_\geq(u_i)_r$ and $\thresh_\leq(u_i)_r$.
We can check whether a precise solution $\sol(u)$ is contained in the prediction region $[\munderbar{x}^*_\rho, \bar{x}^*_\rho]$ to $\mathfrak{L}^\rho_\mathcal{U}$ by replacing $\thresh^+ \to \thresh$ and $\thresh^- \to \thresh$ in \cref{eq:rho_relaxation}.

\begin{proof}[Proof of \cref{thm:subset_relation} (part I)]
    We proof the theorem by contradiction for the 1-dimensional case, and we generalize afterward.
    In a single dimension, $[\munderbar{x}^*_\rho, \bar{x}^*_\rho] \not\subseteq [\munderbar{x}'_\rho, \bar{x}'_\rho]$ requires that either $\munderbar{x}^*_\rho < \munderbar{x}'_\rho$ or $\bar{x}^*_\rho > \bar{x}'_\rho$.
    First consider the upper bounds $\bar{x}'_\rho$ and $\bar{x}^*_\rho$, which we can make explicit using \cref{lemma:rho_relaxation}:
    \begin{align}
        \bar{x}'_\rho &= \max\big\{ \sol^+(u), u \in \paramspace \,\colon\, \thresh^+_\geq(u) > \rho^{-1} \big\}
        \\
        \bar{x}^*_\rho &= \max\big\{ \sol(u), u \in \paramspace \,\colon\, \thresh_\geq(u) > \rho^{-1} \big\}.
    \end{align}
    For $\bar{x}^*_\rho > \bar{x}'_\rho$ to hold, the maximum $\thresh_\geq(u) > \rho^{-1}$ (i.e., the highest \emph{precise} solution for which there are more than $\rho^{-1}$ solutions at least as high) must exceed $\thresh^+_\geq(u) > \rho^{-1}$ (the highest \emph{imprecise upper bound} solution for which there are more than $\rho^{-1}$ imprecise upper bound solutions at least as high).
    This can only be true if the number of samples for which $\sol(u) > \bar{x}'$ is higher than the number for which $\sol^+(u) > \bar{x}'$.
    However, by construction, $\sol(u) \leq \sol^+$, so this is impossible, and thus, it holds that $\bar{x}^*_\rho \leq \bar{x}'_\rho$.
    While omitted for brevity, the proof that the lower bound $\munderbar{x}^*_\rho \geq \bar{x}'_\rho$ follows analogous to the upper bound.
\end{proof}

\smallsubsubsection{The multi-dimensional case.}

\begin{lemma}
    \label{remark:Independence_per_measure}
    Problem $\mathfrak{G}_\mathcal{U}^{\rho}$ can be decomposed into an independent problem for every dimension $1, \ldots, m$, with $m = \card{\Phi}$ the number of measures in $\Phi$.
\end{lemma}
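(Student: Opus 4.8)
\noindent\emph{Proof proposal.}
The plan is to exploit that both the objective and the constraints of $\mathfrak{G}_\mathcal{U}^{\rho}$ in \cref{eq:ScenarioProblem_imp} are \emph{separable} across the $m = \card{\Phi}$ coordinates of the decision vectors, so that the single LP splits into $m$ scalar LPs whose optima can be concatenated into an optimum of $\mathfrak{G}_\mathcal{U}^{\rho}$.

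First I would rewrite the objective coordinate-wise. Since the $1$-norm is additive over coordinates, $\| \bar{x} - \munderbar{x} \|_1 = \sum_{r=1}^m |\bar{x}_r - \munderbar{x}_r|$ and $\sum_{i=1}^n \| \xi_i \|_1 = \sum_{r=1}^m \sum_{i=1}^n \xi_{i,r}$, so the objective equals $\sum_{r=1}^m \big( |\bar{x}_r - \munderbar{x}_r| + \rho \sum_{i=1}^n \xi_{i,r} \big)$: a sum of $m$ terms, the $r$-th of which involves only the scalars $\bar{x}_r$, $\munderbar{x}_r$ and $\{ \xi_{i,r} \}_{i=1}^n$. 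I would emphasize that this is exactly the step where the choice of the $1$-norm is essential --- with, e.g., $\| \xi_i \|_2$ the coordinates of each slack vector would be coupled and the decomposition would fail.

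Next I would note that, by definition of the pointwise order, each vector constraint of $\mathfrak{G}_\mathcal{U}^{\rho}$ is a conjunction of $m$ scalar constraints: $\munderbar{x} - \xi_i \leq \sol^-(u_i)$ is equivalent to $\munderbar{x}_r - \xi_{i,r} \leq \sol^-(u_i)_r$ for all $r$, $\sol^+(u_i) \leq \bar{x} + \xi_i$ to $\sol^+(u_i)_r \leq \bar{x}_r + \xi_{i,r}$ for all $r$, and $\xi_i \geq 0$ to $\xi_{i,r} \geq 0$ for all $r$. As no constraint links variables carrying different coordinate indices $r$, the feasible set is a Cartesian product $\prod_{r=1}^m \mathcal{F}_r$, where $\mathcal{F}_r$ is the feasible set of the scalar problem
\[
\mathfrak{G}_{\mathcal{U},r}^{\rho}: \quad
\begin{aligned}[t]
&\minimize \ |\bar{x}_r - \munderbar{x}_r| + \rho \,{\textstyle\sum_{i=1}^{n}} \xi_{i,r} \\
&\st \ \ \munderbar{x}_r - \xi_{i,r} \leq \sol^-(u_i)_r, \ \ \sol^+(u_i)_r \leq \bar{x}_r + \xi_{i,r}, \ \ \xi_{i,r} \geq 0 \ \ \forall i .
\end{aligned}
\]
The problem is then of the form $\min_{z \in \prod_r \mathcal{F}_r} \sum_{r=1}^m f_r(z_r)$, which equals $\sum_{r=1}^m \min_{z_r \in \mathcal{F}_r} f_r(z_r)$ and whose minimizer is the concatenation of the minimizers of the $\mathfrak{G}_{\mathcal{U},r}^{\rho}$; the same reasoning applies verbatim to $\mathfrak{L}_\mathcal{U}^{\rho}$ (with $\sol^- = \sol^+ = \sol$).

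I do not expect a genuine obstacle; the argument is essentially bookkeeping, with the coupling-through-the-norm observation being the only conceptual point. The one detail needing a line of care is consistency with \cref{def:Uniqueness}: the optimal \emph{set} of $\mathfrak{G}_\mathcal{U}^{\rho}$ is the product of the optimal sets of the $\mathfrak{G}_{\mathcal{U},r}^{\rho}$, so any tie-break rule that is itself separable --- e.g. minimum Euclidean norm, since $\| z \|_2^2 = \sum_r \| z_r \|_2^2$ --- can be applied coordinate-wise and still returns the tie-broken optimum of the full problem. I would close by remarking that this lemma is precisely what licenses the dimension-by-dimension reasoning used in the one-dimensional analysis above (Proof of \cref{thm:subset_relation}, part~I).
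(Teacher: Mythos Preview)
Your proposal is correct and takes essentially the same approach as the paper: the paper's justification is the two-line observation that ``the objective \cref{eq:ScenarioProblem_objective} is additive and all constraints for all measures \cref{eq:ScenarioProblem_constraint} are independent,'' and your argument is exactly this separability argument spelled out in full, including the explicit coordinate-wise rewrite and the Cartesian-product feasible set. Your additional remarks on the role of the $1$-norm and on tie-break compatibility go beyond what the paper records but are consistent with it.
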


\noindent
\cref{remark:Independence_per_measure} holds because the objective \cref{eq:ScenarioProblem_objective} is additive and all constraints for all measures \cref{eq:ScenarioProblem_constraint} are independent.
Thus, we can equivalently solve problem $\mathfrak{L}_\mathcal{U}^{\rho}$ for all $m$ measures separately. 

\begin{proof}[Proof of \cref{thm:subset_relation} (part II)]
    We now generalize the result to multiple dimensions.
    \cref{remark:Independence_per_measure} states that for rectangular prediction regions, problems $\mathfrak{L}^\rho_\mathcal{U}$ and $\mathfrak{G}^\rho_\mathcal{U}$ can be solved for each dimension separately.
    As such, we obtain an element-wise inequality $\munderbar{x}'_\rho \leq \munderbar{x}^*_\rho \leq \bar{x}^*_\rho \leq \bar{x}^*_\rho$, which also implies that $[\munderbar{x}^*_\rho, \bar{x}^*_\rho] \subseteq [\munderbar{x}'_\rho, \bar{x}'_\rho]$, so the claim in follows.
\end{proof}

\subsection{Proof of \Cref{thm:lower_bound_complexity}}
\label{app:proof:lower_bound_complexity}

Before providing the proof, we state the following useful lemma about surely noncritical samples:
\begin{lemma}
    \label{lemma:surely_noncritical}
    Any surely noncritical sample cannot be in the (smallest) critical set, defined in \cref{def:Complexity}.
\end{lemma}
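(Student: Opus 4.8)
The plan is to show that the samples of $\mathcal{X}$ are never ``load-bearing'': collectively they can all be removed from $\sampleset$ without changing the optimum of $\mathfrak{L}^\rho_\mathcal{U}$, so that $\sampleset\setminus\mathcal{X}$ is itself a critical set in the sense of \cref{def:Complexity}; in particular no surely noncritical sample need appear in the smallest critical set. Everything reduces to the single containment $[\sol^-(u_i),\sol^+(u_i)]\subseteq\mathcal{I}\subseteq\mathrm{int}(R^*_\rho)$ for every $u_i\in\mathcal{X}$, where $R^*_\rho=[\munderbar{x}^*_\rho,\bar{x}^*_\rho]$ is the (unknown) precise optimum of $\mathfrak{L}^\rho_\mathcal{U}$. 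Granting this, first, every relaxed sample $u$ has $\sol(u)\notin R^*_\rho$, hence $\sol(u)\notin\mathcal{I}$, so relaxed samples are not in $\mathcal{X}$ and $\sampleset\setminus\mathcal{X}$ still contains all of them; second, each $u_i\in\mathcal{X}$ has $\sol(u_i)$ lying strictly between the two face values in every coordinate, so it is counted neither in the $\thresh$-count of any sample on an upper face of $R^*_\rho$ nor of any on a lower face, and therefore deleting all of $\mathcal{X}$ leaves every face value of $R^*_\rho$ — hence $R^*_\rho$ and all slacks $\xi^*_{\rho,j}$, $j\notin\mathcal{X}$ — unchanged (using uniqueness, \cref{def:Uniqueness}). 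Combining the two, $\mathfrak{L}^\rho_{\sampleset\setminus\mathcal{X}}$ returns exactly $R^*_\rho$ and $\sampleset\setminus\mathcal{X}$ contains all relaxed samples, that is, it is critical; the smallest critical set can thus be taken inside $\sampleset\setminus\mathcal{X}$, which is the claim.

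\textbf{The core step} is therefore $\mathcal{I}\subseteq\mathrm{int}(R^*_\rho)$, and I would prove it coordinate by coordinate, treating one measure $p$ exactly as the one-dimensional case in the proof of \cref{thm:subset_relation} and then lifting via \cref{remark:Independence_per_measure}. Using the order characterisation of relaxation from \cref{lemma:rho_relaxation}, the upper face value $\bar{x}'_{\rho,p}$ of $[\munderbar{x}'_\rho,\bar{x}'_\rho]$ is the $\lceil\rho^{-1}\rceil$-th largest value among $\{\sol^+(u)_p : u\in\sampleset\}$, and $\bar{x}^*_{\rho,p}$ is the $\lceil\rho^{-1}\rceil$-th largest among $\{\sol(u)_p\}$, so at most $\lceil\rho^{-1}\rceil-1$ samples satisfy $\sol(u)_p>\bar{x}^*_{\rho,p}$. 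Look at the $\lceil\rho^{-1}\rceil$ samples with the largest values $\sol^+(u)_p$; each has $\sol^+(u)_p\ge\bar{x}'_{\rho,p}$. Were all of them to satisfy $\sol^-(u)_p>\bar{x}^*_{\rho,p}$, then all $\lceil\rho^{-1}\rceil$ would satisfy $\sol(u)_p\ge\sol^-(u)_p>\bar{x}^*_{\rho,p}$, contradicting the previous sentence; hence at least one of them, say $w$, satisfies $\sol^-(w)_p\le\bar{x}^*_{\rho,p}\le\bar{x}'_{\rho,p}\le\sol^+(w)_p$. Then $\bar{x}'_{\rho,p}\in[\sol^-(w)_p,\sol^+(w)_p]$, so $w$'s solution interval meets the upper face of $[\munderbar{x}'_\rho,\bar{x}'_\rho]$ and $w\in\mathcal{B}$, while that interval also covers the whole segment $[\bar{x}^*_{\rho,p},\bar{x}'_{\rho,p}]$. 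Since $\mathcal{I}$ must be disjoint from $w$'s interval, the largest box inside $[\munderbar{x}'_\rho,\bar{x}'_\rho]$ cannot reach $\bar{x}^*_{\rho,p}$ in coordinate $p$; the symmetric argument on the lower face keeps the $p$-th projection strictly above $\munderbar{x}^*_{\rho,p}$. Doing this for every coordinate yields $\mathcal{I}\subseteq\mathrm{int}(R^*_\rho)$.

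\textbf{The main obstacle} I anticipate is precisely this bookkeeping: $\mathcal{B}$ and $\mathcal{I}$ are defined from the \emph{imprecise} box $[\munderbar{x}'_\rho,\bar{x}'_\rho]$, whereas the conclusion concerns the \emph{unknown precise} box $R^*_\rho$, so one must carefully relate ranks in the two orderings $\{\sol^+(u)_p\}$ and $\{\sol(u)_p\}$ — the ``relative order'' phenomenon already exploited in the proof of \cref{thm:subset_relation} — and the pigeonhole step above is what makes it go through. Additional care is needed to reconcile the multi-dimensional box definitions of $\mathcal{B}$ and $\mathcal{I}$ with the per-measure decomposition, to handle ties among the face-attaining samples, and to treat integer values of $\rho^{-1}$ (in practice, \cref{example:IntuitionRho}, $\rho^{-1}$ is non-integer, so the counts are unambiguous).
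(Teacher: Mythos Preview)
Your approach is essentially the same as the paper's: both reduce the claim to the containment $\mathcal{I}\subseteq R^*_\rho$ (or its interior), so that any sample whose imprecise box lies inside $\mathcal{I}$ has its precise solution strictly inside $R^*_\rho$ and is therefore neither relaxed nor on the boundary, hence dispensable from any critical set. The paper's proof, however, is far terser than yours: it simply asserts that the precise region ``cannot be smaller than the inner rectangle $\mathcal{I}$'' and concludes immediately, with no argument for that containment. Your rank/pigeonhole argument for the one-dimensional case actually supplies what the paper omits, and your extra step---showing that $\sampleset\setminus\mathcal{X}$ is itself a critical set---is precisely what is needed downstream for \cref{thm:lower_bound_complexity}. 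The obstacle you flag, namely reconciling the multi-dimensional box definitions of $\mathcal{B}$ and $\mathcal{I}$ with the per-coordinate decomposition (since $\mathcal{I}$ might avoid a box $[\sol^-(w),\sol^+(w)]$ in a coordinate other than $p$), is genuine and is not addressed in the paper's proof either; in that respect your proposal is already more complete than what appears there.
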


\begin{proof}
    Recall from \cref{def:Complexity} that a sample may (potentially) be critical if it is either outside or on the boundary of the prediction region $[\munderbar{x}^*_\rho, \bar{x}^*_\rho]$.
    While the boundary of the prediction region $[\munderbar{x}^*_\rho, \bar{x}^*]$ is unknown, it cannot be smaller than the inner rectangle $\mathcal{I}$ defined in \cref{def:SurelyNoncritical}.
    By construction, any surely noncritical sample is a subset of this set $\mathcal{I}$.
    Hence, any surely noncritical sample cannot be in the (smallest) critical set, and the claim follows.
\end{proof}

\noindent
The proof of \cref{thm:lower_bound_complexity} now follows almost directly from \cref{lemma:surely_noncritical}.
The complexity is the cardinality of the smallest critical set, which cannot contain any surely noncritical sample, as stated by \cref{lemma:surely_noncritical}.
Hence, it follows that $n-\vert X \vert = \vert \sampleset \setminus \mathcal{X} \vert$, where $\mathcal{X}$ is the set of surely noncritical samples, which concludes the proof.
    \section{Detailed Benchmark Overview and Results}
\label{sec:BenchmarkDetails}

In this appendix, we illustrate on two specific benchmarks how we convert a \gls{pCTMC} into a \gls{upCTMC} by equipping its parameters with a probability distribution.
We omit details on the other benchmarks for brevity.
Thereafter, we present the full benchmark statistics in \cref{tab:model_information_full}, as well as more graphical outputs of our implementation in \cref{fig:appendix_CTMC,fig:appendix_FT}.

\smallsubsubsection{Epidemic modeling.}
We consider the classical \emph{SIR infection model}~\cite{andersson2012stochastic} of an infectious disease spreading through a population.
The factored state $s = (\bar{S},\bar{I},\bar{R})$ of this \gls{CTMC} counts the susceptible, infected, and recovered populations.
Infections and recoveries (we assume that immunization is permanent) occur based on the following parametric rules that depend on parameters $\lambda_i$ and $\lambda_r$: 
\begin{align}
    \textbf{Infection:\,\,} S + I \xrightarrow{\lambda_i \cdot \bar{S} \cdot \bar{I}} I + I, \quad\quad
    \textbf{Recovery:\,\,} I \xrightarrow{\lambda_r \cdot \bar{I}} R.
    \nonumber
\end{align}
In the classical SIR model, $\lambda_i$ and $\lambda_r$ are assumed to be known precisely, while we consider the parameters $\lambda_i = \mathcal{N}(0.05, 0.002)$ and $\lambda_r = \mathcal{N}(0.04, 0.002)$ to be normally distributed (recall that we only use samples from these distributions).
We define a set $\Phi = \{ \varphi_{100+100 i} \, \colon \, i = \frac{1}{m},\frac{2}{m},\ldots,\frac{m-1}{m},1 \}$ of $m$ measures, where $\varphi_{t}$ is the probability that the disease becomes extinct between time $100$ and $t$:
\begin{equation}
    \varphi_t = (\bar{I} > 0) \textbf{U}_{[100,t]} (\bar{I} = 0).
\end{equation}

\smallsubsubsection{Buffer system.}
We augment the \emph{producer-consumer buffering} system from~\cite{DBLP:journals/jss/CalinescuCGKP18}.
We equip the six parameters of this \gls{pCTMC} by uniform probability distributions with their domains specified below.
This \gls{pCTMC} models the transfer of requests from a producer (at a rate of $\lambda_g \in [32, 38]$) to consumers who consume them (at a rate of $\lambda_c \in [27, 33]$).
The requests are sent at a rate of $\lambda_t \in [27, 33]$, via either a slow or a fast buffer, with probabilities of $0.6$ and $0.4$, respectively.
While being faster, the fast buffer is less reliable than the slow buffer (it loses requests with a probability $\lambda_\text{loss} \in [0.025, 0.075]$), and has a smaller capacity.
Requests from the slow buffer are transferred to the fast buffer with a probability proportional to the occupancy.
The transmission rate of the slow buffer is $\lambda_{\text{slow}} \in [5, 15]$; the rate of the fast buffer is $\lambda_\delta \in [5, 15]$ higher.
We consider two measures: (1) the expected transferred requests until time $25$, and (2) the probability that the utilization of both buffers is above 75\% within the time $[20,25]$.

\subsection{Detailed results}
\label{app:detailed_results}

The complete overview of the statistics of all benchmarks is shown in \cref{tab:model_information_full}.
Running polling (15) for $n=200$ samples led to a timeout, due to the very high sampling times (sampling 100 solutions already takes over 3.5 hours).
The obtained prediction regions for eight benchmarks (under precise solutions) are presented in \cref{fig:appendix_CTMC,fig:appendix_FT}.
These plots demonstrate the wide applicability and effectiveness of our method on a large variety of benchmarks.

{
\setlength{\tabcolsep}{3pt}
\begin{table*}[t!]

\centering
\caption{Model sizes and run times for all benchmarks (the sampling time in \storm is reported per $100$ \glspl{CTMC}).}

\begin{threeparttable}
\scalebox{0.90}{
\begin{tabular}{lrrrrrrrr}
	\hline
 \rule{0pt}{1.5ex} & & \multicolumn{3}{c}{{Model size}}
  & \multicolumn{2}{c}{{\storm run time [s]}} & \multicolumn{2}{c}{{Scen.opt. time [s]}} \\
  \cmidrule(lr){3-5} \cmidrule(lr){6-7} \cmidrule(lr){8-9}
  benchmark
  & $\vert\Phi\vert$
  & \#pars
  & \#states
  & \#trans 
  & Init.
  & Sample ($\times100$)
  & $N=100$
  & $N=200$
  \\
  \hline
SIR (20)    & 26    & 2     & 216           & 396           & 0.02      & 2.93      & 2.44      & 10.18 \\
SIR (60)    & 26    & 2     & 1\,876        & 3\,636        & 0.06      & 121.65    & 16.91     & 52.57 \\
SIR (100)   & 26    & 2     & 5\,136        & 10\,076       & 0.15      & 829.47    & 19.35 & 62.74 \\
SIR (100)\tnote{a} & 26 & 2 & 5\,136        & 10\,076       & 0.15      & 191.30    & 27.45     & 137.76 \\
SIR (140)   & 26    & 2     & 9\,996        & 19\,716       & 0.29      & 2947.29   & 18.26     & 63.27  \\
SIR (140)\tnote{a} & 26 & 2 & 9\,996        & 19\,716       & 0.29      & 544.27    & 25.11     & 129.66  \\
Kanban (3)  & 4     & 13    & 58\,400       & 446\,400      & 4.42      & 46.95     & 2.28      & 6.69   \\
Kanban (5)  & 4     & 13    & 2\,546\,432   & 24\,460\,016  & 253.39    & 4363.63   & 2.03      & 5.94    \\
polling (3) & 2     & 2     & 36            & 84            & 0.02      & 0.08      & 2.35      & 7.31     \\
polling (9) & 2     & 2     & 6\,912        & 36\,864       & 0.64      & 22.92     & 2.13      & 6.66     \\
polling (15)& 2     & 2     & 737\,280      & 6\,144\,000   & 3\,908.13 & 9\,509.22 & 2.07      & --        \\
buffer      & 2     & 6     & 5\,632        & 21\,968       & 0.48      & 20.70     & 1.21      & 4.15      \\
tandem (15) & 2     & 5     & 496           & 1\,619        & 0.03      & 82.01     & 1.67      & 5.36      \\
tandem (31) & 2     & 5     & 2\,016        & 6\,819        & 0.11      & 862.41    & 5.19      & 24.30     \\
embed. (64) & 3     & 6     & 55\,868       & 235\,793      & 3.31      & 8.07      & 3.97      & 12.69     \\
embed. (256) & 3    & 6     & 218\,108      & 920\,657      & 18.04     & 33.40     & 3.99      & 13.03     \\
%\hline
pcs         & 25    & 6     & 2\,501        & 14\,985       & 0.03      & 37.20     & 3.51      & 13.07     \\
rbc         & 40    & 6     & 2\,269        & 12\,930       & 0.01      & 1.40      & 5.27      & 16.88     \\
dcas        & 25    & 10    & 64            & 202           & 0.01      & 0.48      & 3.10      & 10.88     \\
rc (1,1)    & 25    & 21    & 8\,401        & 49\,446       & 27.20     & 74.90     & 5.75      & 20.34     \\
rc (1,1)\tnote{a}   & 25    & 21    & n/a\tnote{b}   & n/a\tnote{b}           & 0.02      & 2.35      & 29.23     & 150.61    \\
rc (2,2)\tnote{a}   & 25    & 29    & n/a\tnote{b}   & n/a\tnote{b}           & 0.03      & 27.77     & 24.86     & 132.63    \\
hecs (2,1)  & 25    & 5     & 118\,945      & 1\,018\,603   & 0.04      & 6.42      & 3.87      & 13.18     \\
hecs (2,1)\tnote{a} & 25    & 5     & n/a\tnote{b} & n/a\tnote{b} & 0.02      & 9.83      & 26.78     & 145.77    \\
hecs (2,2)\tnote{a} & 25    & 24    & n/a\tnote{b} & n/a\tnote{b} & 0.02      & 194.25    & 33.06     & 184.32    \\
\hline
\end{tabular}
}
\begin{tablenotes}
        \raggedright
        \item[a]Computed using approximate model checking up to a relative gap between upper\\bound $\sol^+(u)$ and lower bound $\sol^-(u)$ below $1\%$ for every sample $u \in \paramspace$.
        \item[b]Model size is unknown, as the approximation does not build the full state-space.
\end{tablenotes}
\end{threeparttable}

\label{tab:model_information_full}
\end{table*}
}

\smallsubsubsection{Analyzing measures independently.}
\label{app:results_independent_measures}
\begin{table*}[t!]

\centering
\caption{Obtained bounds (for the PCS fault tree) on the containment probability for our approach and the baseline that analyzes each measure independently.}

{\setlength{\tabcolsep}{4pt}

\resizebox{\linewidth}{!}{%
\begin{tabular}{lllllllll}
	\hline
 \rule{0pt}{1.5ex} 
  & \multicolumn{2}{c}{$n=100$}
  & \multicolumn{2}{c}{$n=200$}
  & \multicolumn{2}{c}{$n=400$}
  & \multicolumn{2}{c}{$n=800$}
  \\
  \cmidrule(lr){2-3}
  \cmidrule(lr){4-5}
  \cmidrule(lr){6-7}
  \cmidrule(lr){8-9}
  Method
  & $\beta=0.9$ & $\beta=0.999$
  & $\beta=0.9$ & $\beta=0.999$
  & $\beta=0.9$ & $\beta=0.999$
  & $\beta=0.9$ & $\beta=0.999$
  \\
  \hline
    Our approach    & 0.908 &	0.848 &	0.937 &	0.903 &	0.976 &	0.960 &	0.984 &	0.975 \\
    Baseline   & 0.045	&   0.010 &	0.212 &	0.103 &	0.461 &	0.322 &	0.679 &	0.567 \\
\hline
\end{tabular}
}

}
\label{tab:naive_scenario_baseline}
\end{table*}
\cref{tab:naive_scenario_baseline} presents the full comparison on the PCS benchmark between our approach and the baseline scenario approach that analyzes each measure independently.
In this table, we report the average lower bounds (over 10 iterations) on the containment probability, for different sample sizes $n = 100,\ldots,800$ and confidence levels $\beta$.
For the two main reasons for the significant difference in the tightness of the containment probability, we refer to \cref{sec:Experiments} in the main paper. 

% CTMCs
\begin{figure}[t!]
\centering
    \begin{subfigure}[b]{0.45\linewidth}
        \centering
        \includegraphics[width=\linewidth]{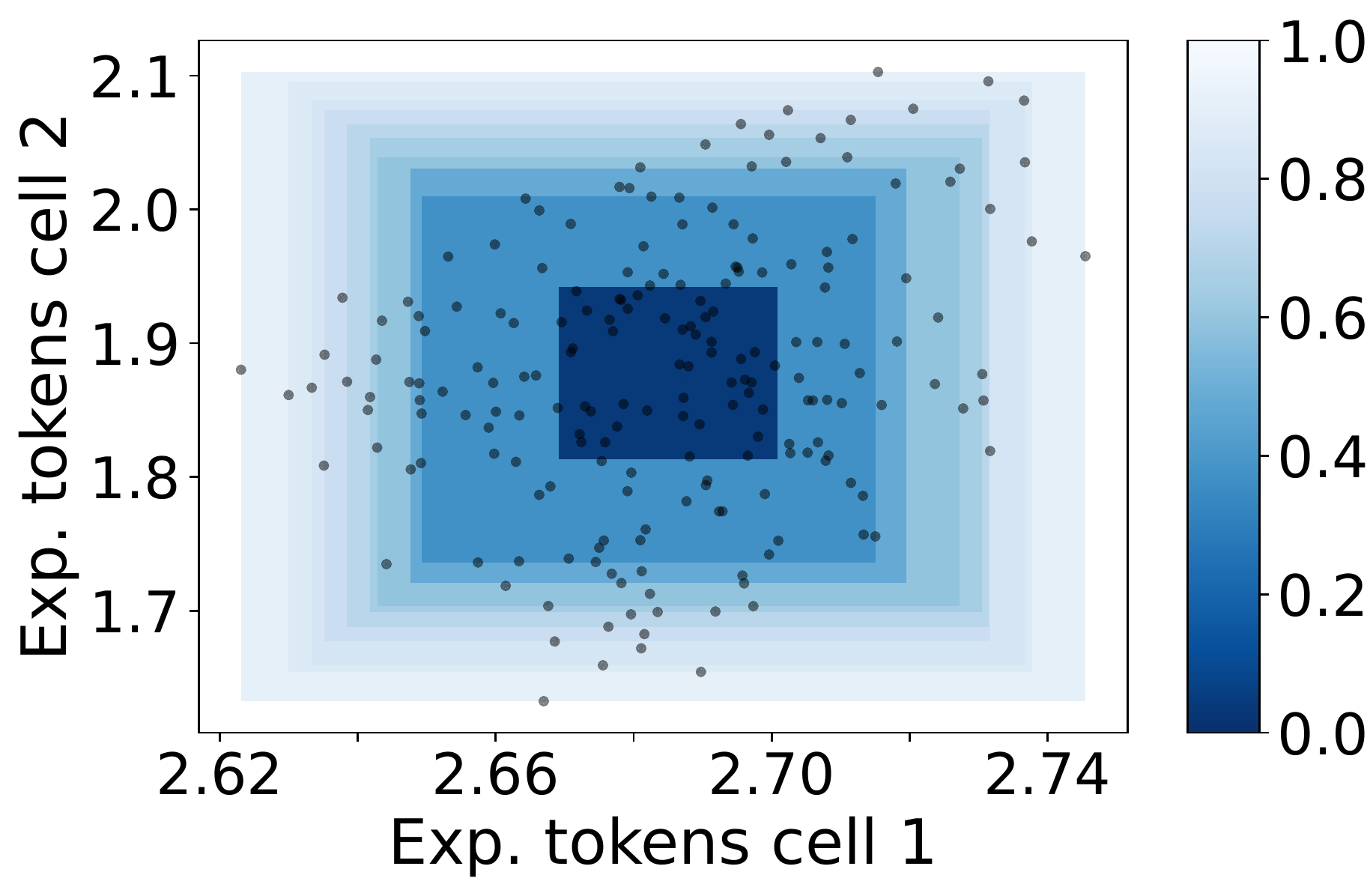}
        \captionof{figure}{Kanban (3), $n=200$.}
        \label{fig:appendix_CTMC_kanban}
    \end{subfigure}
    \hfill
    \begin{subfigure}[b]{0.45\linewidth}
        \centering
        \includegraphics[width=\linewidth]{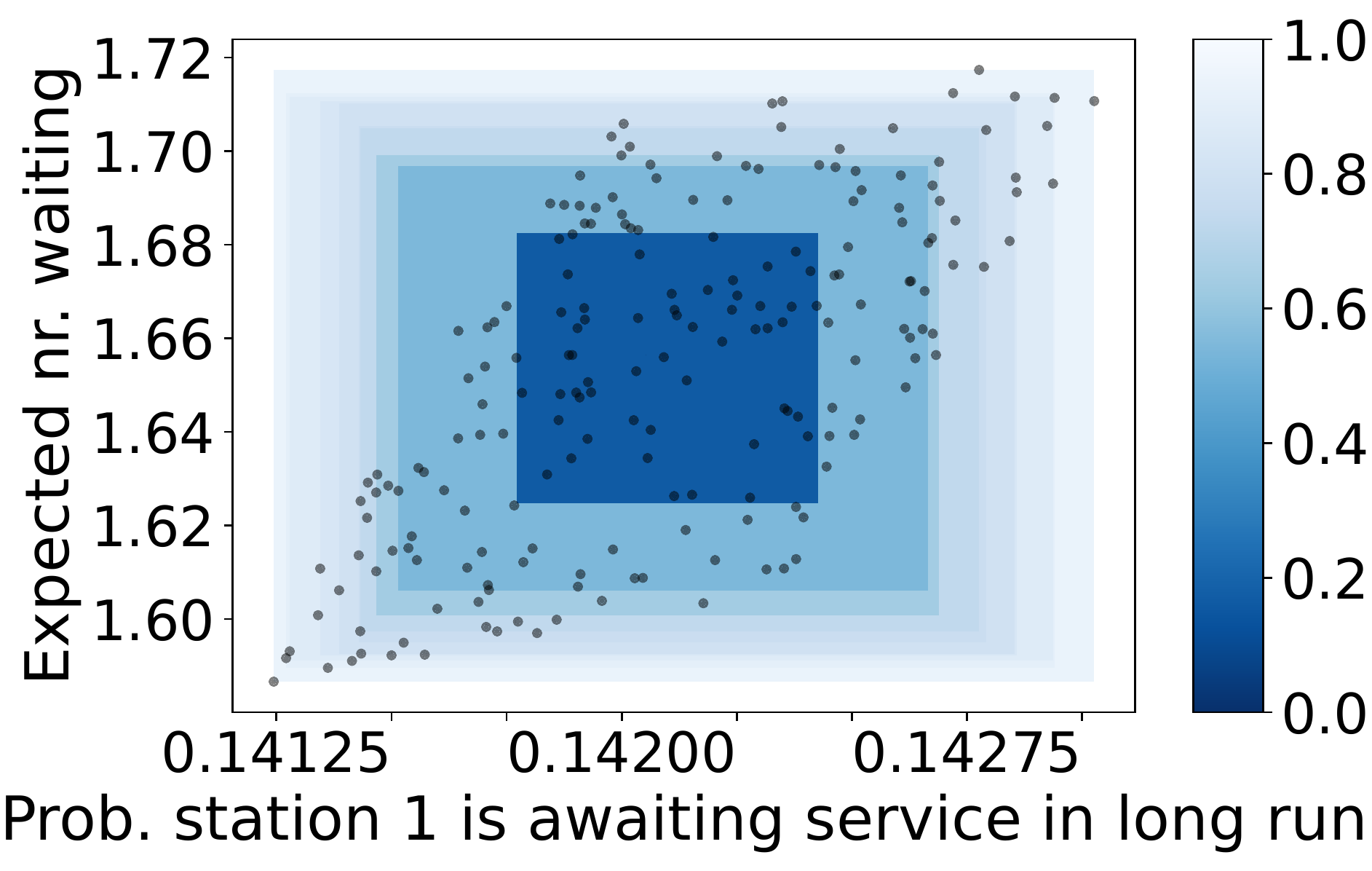}
        \captionof{figure}{Polling (9), $n=200$.}
        \label{fig:appendix_CTMC_polling}
    \end{subfigure}
    \hfill
    \begin{subfigure}[b]{0.45\linewidth}
        \centering
        \includegraphics[width=\linewidth]{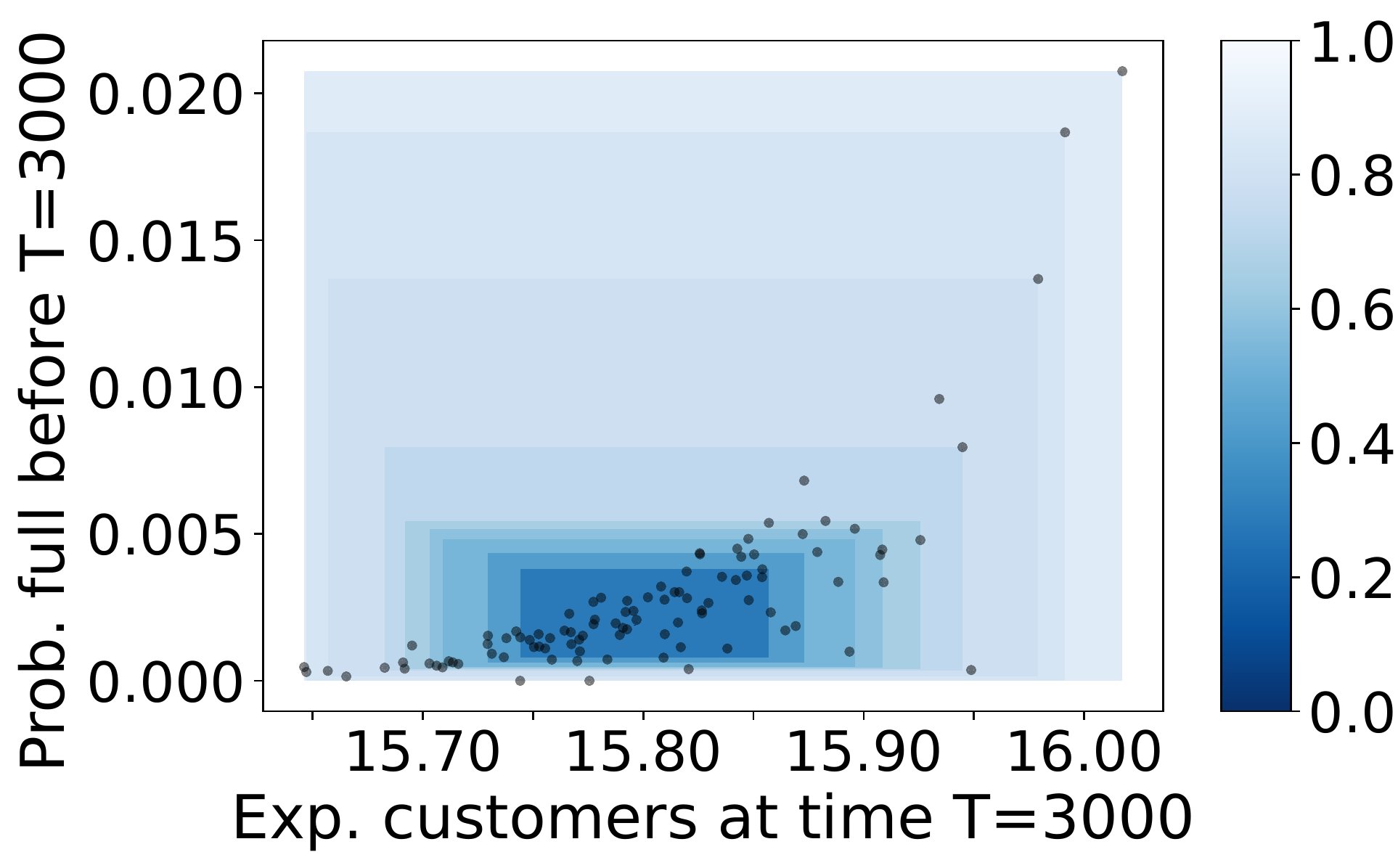}
        \captionof{figure}{Tandem (15), $n=100$.}
        \label{fig:appendix_CTMC_tandem}
    \end{subfigure}
    \hfill
    \begin{subfigure}[b]{0.45\linewidth}
        \centering
        \includegraphics[width=\linewidth]{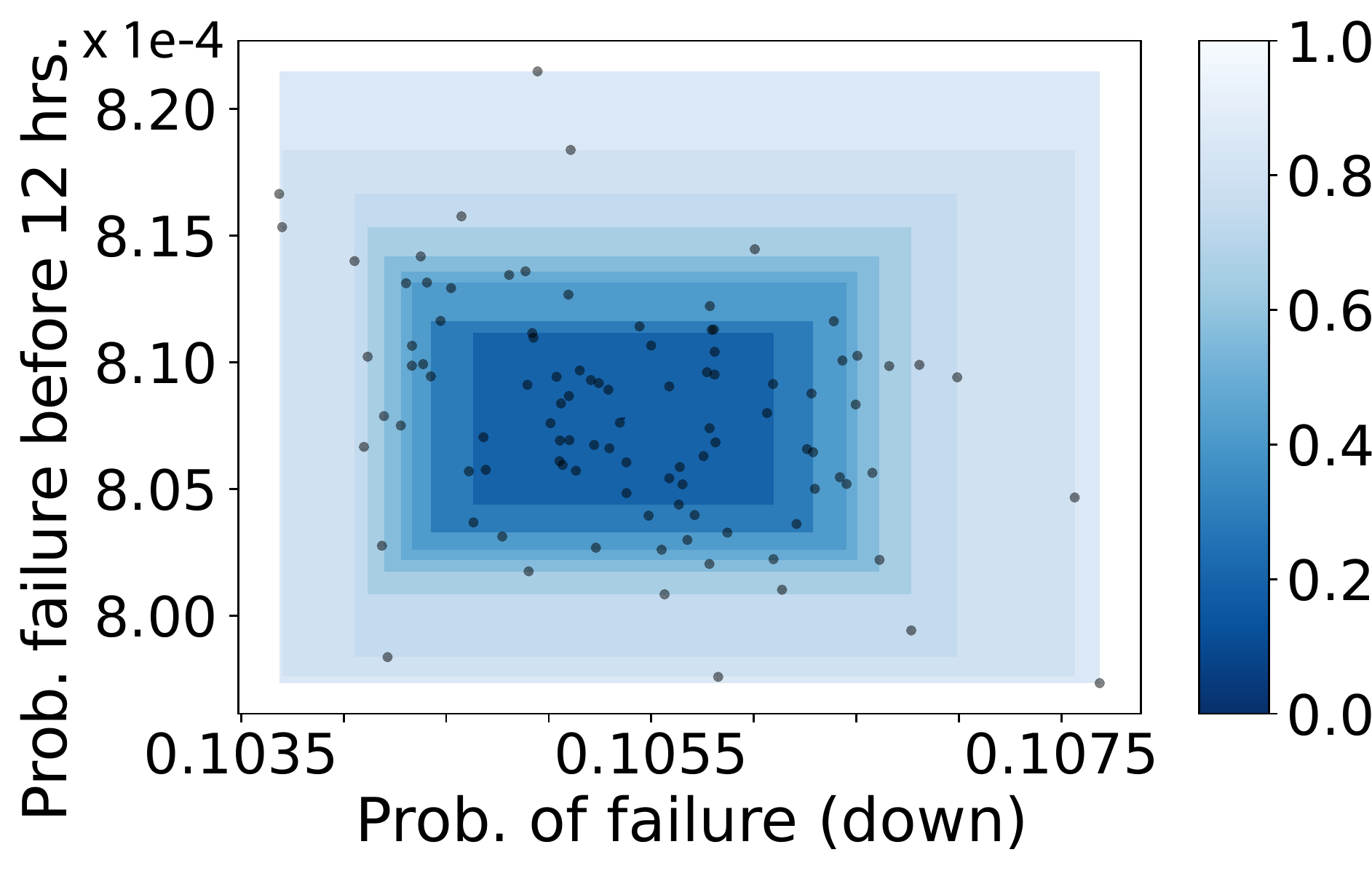}
        \captionof{figure}{Embedded (64), $n=100$.}
        \label{fig:appendix_CTMC_embedded}
    \end{subfigure}
\caption{Prediction regions for \gls{CTMC} benchmarks.} 
\label{fig:appendix_CTMC}
\end{figure}

% Fault trees
\begin{figure}[t!]
\centering
    \begin{subfigure}[b]{0.45\linewidth}
        \centering
        \includegraphics[width=\linewidth]{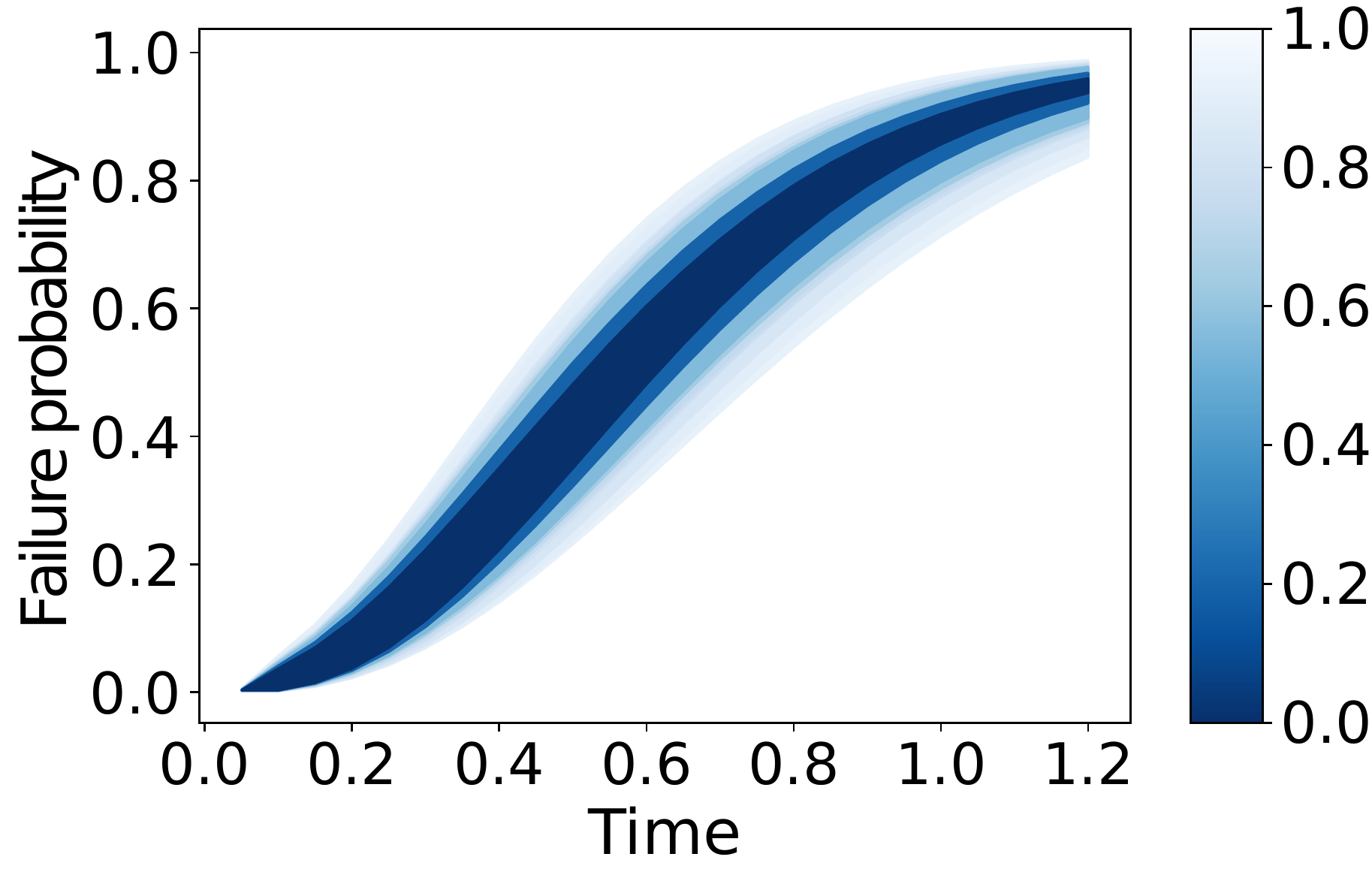}
        \captionof{figure}{PCS, $n=200$.}
        \label{fig:appendix_FT_cs}
    \end{subfigure}
    \hfill
    \begin{subfigure}[b]{0.45\linewidth}
        \centering
        \includegraphics[width=\linewidth]{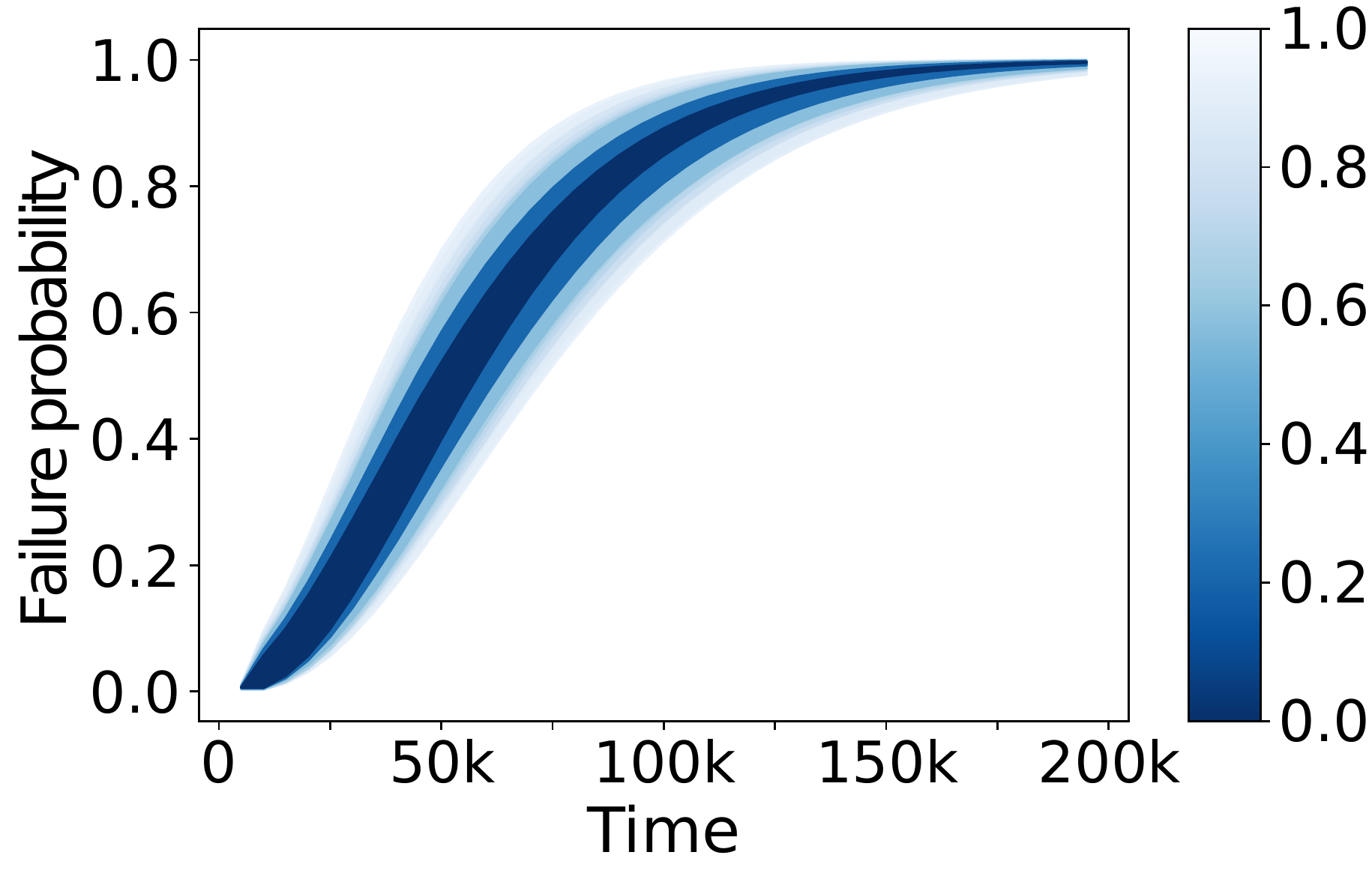}
        \captionof{figure}{RBC, $n=200$.}
        \label{fig:appendix_FT_rbc}
    \end{subfigure}
    \hfill
    \begin{subfigure}[b]{0.45\linewidth}
        \centering
        \includegraphics[width=\linewidth]{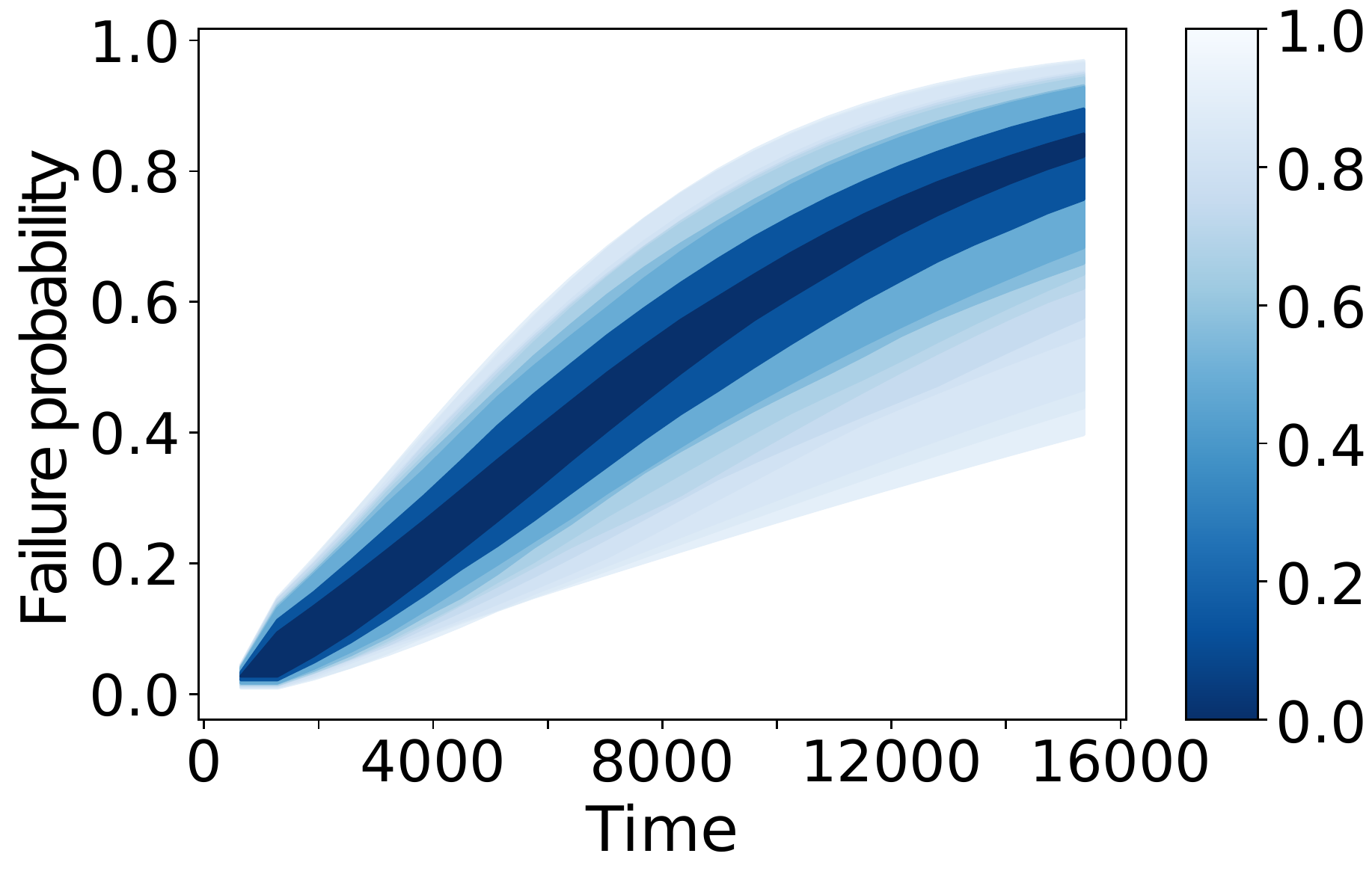}
        \captionof{figure}{DCAS, $n=200$.}
        \label{fig:appendix_FT_dcas}
    \end{subfigure}
    \hfill
    \begin{subfigure}[b]{0.45\linewidth}
        \centering
        \includegraphics[width=\linewidth]{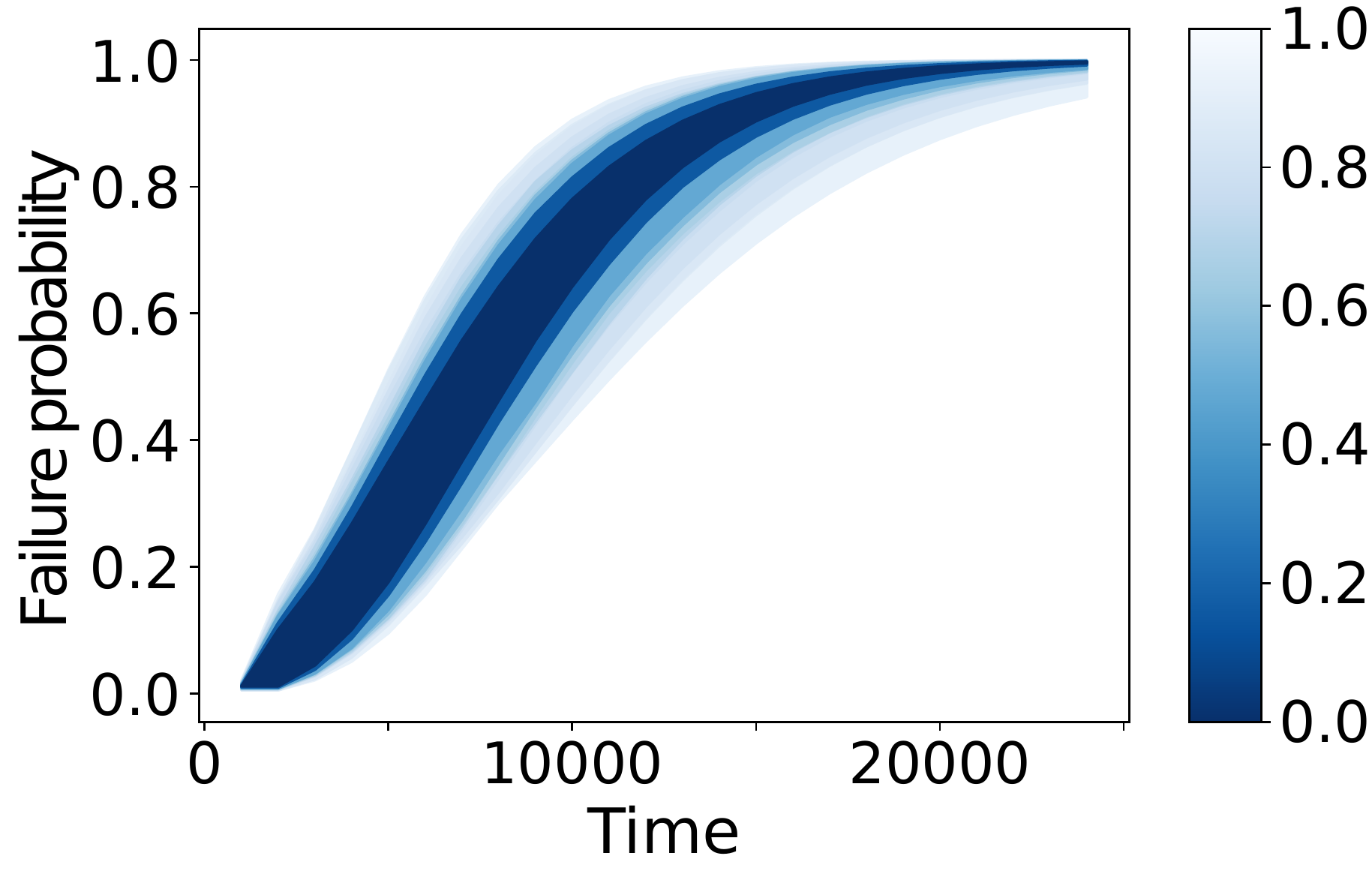}
        \captionof{figure}{HECS, $n=200$.}
        \label{fig:appendix_FT_hecs}
    \end{subfigure}
\caption{Prediction regions on probability curves for fault tree benchmarks.} 
\label{fig:appendix_FT}
\end{figure}
\fi

\end{document}